\newcommand{\RR}{{\mathbb R}}
\newcommand{\CC}{{\mathbb C}}
\newcommand{\calD}{{\cal D}}
\newcommand{\beq}{\begin{equation}}
\newcommand{\eeq}{\end{equation}}
\newcommand{\ba}{\begin{array}}
\newcommand{\ea}{\end{array}}
\newcommand{\bea}{\begin{eqnarray}}
\newcommand{\eea}{\end{eqnarray}}
\renewcommand{\Re}{\mathop{\mathrm{Re}}}
\renewcommand{\Im}{\mathop{\mathrm{Im}}}
\newtheorem{theorem}{Theorem}[section]
\newtheorem{lemma}{Lemma}[section]
\newtheorem{proposition}{Proposition}[section]
\newtheorem{definition}{Definition}[section]
\newtheorem{remark}{Remark}[section]
\newcounter{one}
\newcounter{two}
\begin{document}
\begin{center}

{\large   \bf The Cauchy problem for the Pavlov equation} 
 
\vskip 15pt

{\large  P. G. Grinevich$^{1}$,  P. M. Santini$^{2}$ and D. Wu$^{3,\S}$}

\vskip 8pt

{\it 
$^1$ Landau Institute for Theoretical Physics, Chenogolovka, Russia,\\
Lomonosov Moscow State University, Russia and\\
Moscow Physical Thechnical Institute, Dolgoprudnyi, Russia. 

\smallskip

$^2$ Dipartimento di Fisica, Universit\`a di Roma "La Sapienza" and\\
Istituto Nazionale di Fisica Nucleare, Sezione di Roma 1\\
Piazz.le Aldo Moro 2, I-00185 Roma, Italy

\smallskip

$^3$ Institute of Mathematics, Academia Sinica, 
Taipei, Taiwan}

\vskip 5pt

$^\S$e-mail: {\tt wudc@math.sinica.edu.tw}

\vskip 5pt

{\today}

\end{center}

\begin{abstract}
Commutation of multidimensional vector fields leads to integrable nonlinear dispersionless PDEs arising in various problems of mathematical physics and intensively studied in the recent literature. This report is aiming to solve  the scattering and inverse scattering problem for integrable dispersionless PDEs, recently introduced just at a formal level, concentrating on the prototypical example of the Pavlov equation, and to justify an existence theorem for global bounded solutions of the associated Cauchy problem with small data.
\end{abstract}

\section{Introduction}
Integrable soliton equations, like the Korteweg - de Vries \cite{KdV}, the Nonlinear Scrh\"odinger \cite{ZSNLS} equations and their integrable $(2+1)$ dimensional generalizations, the Kadomtsev - Petviashvili \cite{KP} and Davey - Stewartson \cite{DS} equations respectively, play a key role in the study of waves propagating in weakly nonlinear and dispersive media. The Inverse Spectral Transform (IST) method, introduced by Gardner, Green, Kruskal and Miura \cite{GGKM}, is the spectral method allowing one to solve the Cauchy problem for such PDEs, predicting that a localized disturbance evolves into a number of soliton pulses + radiation, and solitons arise as an exact balance between nonlinearity and dispersion \cite{ZMNP},\cite{AS},\cite{CD},\cite{AC}. There is another important class of integrable PDEs, the so-called dispersionless PDEs (dPDEs), or PDEs of hydrodynamic type, arising in various problems of Mathematical Physics and intensively studied in the recent literature (see, f.i., in the multidimensional context, \cite{ZS,Zak,KG,Kri0,Kri1,Taka1,TT1,TT2,Zakharov,Kri2,TT3,DuMa1,DuMa2,DMT,DT,K-MA-R,MA-S,G-M-MA,Pavlov,Duna,Ferapontov,NNS,BK,KM1,KM2}). The class of integrable dPDEs includes relevant examples, like the dispersionless Kadomtsev - Petviashvili (dKP) equation \cite{Lin},\cite{Timman},\cite{ZK}, describing the evolution of weakly nonlinear, nearly one-dimensional waves in Nature, in the absence of dispersion and dissipation \cite{Lin}, \cite{Timman}, \cite{ZK}, \cite{MS8}, the first and second heavenly equations of Plebanski \cite{Plebanski}, relevant in complex gravity, and the dispersionless 2D Toda (or Boyer-Finley) equation \cite{FP,BF}, whose elliptic and hyperbolic versions are relevant in twistor theory \cite{BF,GD} as integrable Einstein - Weyl geometries \cite{H,J,Ward}, and in the ideal Hele-Shaw problem \cite{MWZ,WZ,KMZ,LBW,MAM}.

Since integrable dPDEs arise from the condition of commutation $[L ,M]=0$ of pairs of one-parameter families of vector fields, implying the existence of common zero energy eigenfunctions (elements of the common kernel):
\beq\label{Lax_pair}
[L,M]=0~~\Rightarrow~~L\psi=M\psi=0,~~j=1,2,
\eeq
they can be in an arbitrary number of dimensions \cite{ZS}, unlike the soliton PDEs. In addition, due to the lack of dispersion, these multidimensional PDEs may or may not exhibit a gradient catastrophe at finite time. To investigate integrable dPDEs, a novel IST for vector fields, significantly different from that of soliton PDEs, has been recently constructed in \cite{MS0,MS1,MS2}, just at a formal level, i) to solve their Cauchy problem, ii) obtain the longtime behavior of solutions, iii) costruct distinguished classes of exact implicit solutions, iv) establish if, due to the lack of dispersion, the nonlinearity of the dPDE is ``strong enough'' to cause the gradient catastrophe of localized multidimensional disturbances, and v) to study analytically the breaking mechanism \cite{MS0,MS1,MS2,MS3,MS4,MS5,MS6,MS7,MS8,MS9,MS10}. 

It is important to remark that this novel IST is based on some critical assumptions, like existence of analytic eigenfunctions. In soliton theory we know that, in contrast with 1+1 systems, the relevant eigenfunctions for many 2+1 PDEs (like KP\Roman{two}) are not analytic \cite{ABF}, and the inverse problem is formulated as a $\bar\partial$-problem. But the methods used in soliton theory for proving the existence of the relevant eigenfunctions fail in the dispersionless case, since the corresponding operators are unbounded. In addition, since the Lax operators are vector fields, the kernel space is a ring, and the inverse problem is intrinsically 
nonlinear. Al last, the dispersionless theory lacks of explicit regular localized solutions (solitons or lumps do not exist), and gradient catastrophes of different nature may occur at finite time. 

For all these reasons, it is clearly important to make the IST for vector fields rigorous (even more important than for the case of soliton PDEs); and this is the main goal of this work.

To do that, we choose, as illustrative example, the simplest integrable nonlinear dPDE available in the literature, the so-called Pavlov equation \cite{Pavlov}, \cite{Ferapontov}, \cite{Duna} 
\beq 
\label{Pavlov}
\ba{l}
v_{xt}+v_{yy}+v_xv_{xy}-v_yv_{xx}=0,~~v=v(x,y,t)\in\RR,~~x,y,t\in\RR, 
\ea
\eeq
arising in the study of integrable hydrodynamic chains  \cite{Pavlov}, and in Differential Geometry as a particular example of Einstein - Weyl metric \cite{Duna}. It was first derived in \cite{Duna1} as a conformal symmetry of the second heavenly equation. 

As it was pointed out to the authors \cite{ZKpr1},
the terms $v_{xt}+v_xv_{xy}-v_yv_{xx}$ in equation (\ref{Pavlov}) are in common 
(up to the interchange of $x$ and $y$) with the zero pressure Prandtl's equation for the potential $\Phi$ \cite{EE}: 
\beq 
\label{Prandtl}
\Phi_{yt}-\Phi_{yyy}+\Phi_{y}\Phi_{xy}-\Phi_{x}\Phi_{yy}=0.
\eeq
The main difference between these two equations is that the friction term of the Prandtl's equation is replaced by the diffraction term of the Pavlov equation. While the zero-pressure Prandtl's equation with suitable boundary conditions gives rise to blow-up at finite time  \cite{EE}. We prove in this paper that localized and sufficiently small initial data for Pavlov equation remain smooth at all times. 

The inviscid  Prandtl's equation 
\beq 
\label{Prandtl2}
\Phi_{yt}+\Phi_{y}\Phi_{xy}-\Phi_{x}\Phi_{yy}=0
\eeq
can be linearized using some partial Legendre transformation, and it also shows formation of singularities at finite time (unpublished result by V.E. Kuznetsov \cite{Kpr1}).

Equation (\ref{Pavlov}) arises as the commutativity condition (\ref{Lax_pair}) of the following pair of vector fields \cite{Duna}
\beq 
\label{Lax_Pavlov}
\ba{l}
L\equiv \partial_y+(\lambda +v_x)\partial_x, \\
M\equiv \partial_t+(\lambda^2+\lambda v_x-v_y)\partial_x ,
\ea
\eeq
and is the $u=0$ reduction of the following integrable system of dispersionless PDEs \cite{MS2}
\beq
\label{dKP-system}
\ba{l}
u_{xt}+u_{yy}+(uu_x)_x+v_xu_{xy}-v_yu_{xx}=0,         \\
v_{xt}+v_{yy}+uv_{xx}+v_xv_{xy}-v_yv_{xx}=0, 
\ea
\eeq
describing the most general integrable Einstein - Weyl metric \cite{Dunajj}, \cite{DunaFer}. This system reduces instead, for $v=0$, to the celebrated dKP equation:
\beq\label{dKP}
\ba{l}
u_{xt}+u_{yy}+(uu_x)_x=0,~~u=u(x,y,t)\in\RR,~~x,y,t\in\RR, 
\ea
\eeq
the simplest prototype integrable model for the study of wave breaking in multidimensions \cite{MS4},\cite{MS9}.      

Let us point out that, although the linearized versions of the Pavlov and dKP equations coincide, the formal IST predicts a regular dynamics for the Pavlov equation, and the gradient catastrophe
at finite time for the dKP equation.

In our paper we prove the following result:

\begin{theorem}
Suppose that $v_0(x,y)$ is a Schwartz function with compact support and satisfies a {\bf small norm condition} (see Definition~\ref{def:small-data}). Then the IST method provides us with a real function $v(x,y,t)$ such that $v(x,y,0)=v_0(x,y)$, the functions $\partial_x v(x,y,t)$, $\partial_y v(x,y,t)$, 
$\partial_x^2v(x,y,t)$, $\partial_y\partial_xv(x,y,t)$, 
$\partial_t\partial_xv(x,y,t)$, $\partial_y^2v(x,y,t)$ lie in $C(\RR\times\RR\times \RR^+)\cap L^\infty(\RR\times\RR\times \RR^+)$ and satisfy the Pavlov equation (\ref{Pavlov}).
\end{theorem}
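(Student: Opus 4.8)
The plan is to realize the statement through the full inverse scattering construction for the vector-field Lax pair (\ref{Lax_Pavlov}), carried out in four stages: the direct problem, the time evolution of the scattering data, the inverse problem, and the verification of regularity and of the equation. Throughout, the guiding principle is that the smallness and localization of $v_0$ must be propagated, uniformly in $t\ge 0$, from the initial data to the scattering data and back to the reconstructed solution.

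First I would set up the \emph{direct problem} at $t=0$. Writing $u_0=\partial_x v_0$, solutions of $L_0\psi=0$ with $L_0=\partial_y+(\lambda+u_0)\partial_x$ are integrals of motion of the characteristic flow $dx/dy=\lambda+u_0(x,y)$, whose free version $u_0\equiv 0$ has the integral $x-\lambda y$. I would construct Jost eigenfunctions $\hat\psi^{\pm}(x,y,\lambda)$ by integrating this ODE, normalized to the free solution $x-\lambda y$ outside the support of $u_0$. Since $v_0$ is compactly supported and small, the flow is a small, compactly supported perturbation of the free flow, so a Picard/contraction estimate yields existence of $\hat\psi^{\pm}$, their analytic continuation into the half-planes $\lambda\in\CC^{\pm}$, and uniform bounds on $\hat\psi^{\pm}-(x-\lambda y)$ together with its $x,y,\lambda$ derivatives. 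Comparing the two Jost solutions on the real $\lambda$-axis defines the \emph{scattering data}: a real spectral (shift) function $\sigma(\lambda)$, equivalently the jump relating $\hat\psi^+$ and $\hat\psi^-$, which the small-norm condition renders small in the relevant norm.

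Next I would exploit integrability. Because $[L,M]=0$, the eigenfunctions extend in $t$ through $M\psi=0$, and the scattering data evolves \emph{explicitly and linearly}: the $t$-dependence enters only through the flow generated by $M$, with its known quadratic-in-$\lambda$ factor, so $\sigma(\lambda,t)$ is an explicit, norm-preserving transform of $\sigma(\lambda)$. This reduces the Cauchy problem to reconstructing $v$ at each fixed $t$ from data of the same smallness as the initial data, and is the mechanism that prevents any gradient catastrophe. The \emph{inverse problem} is the main obstacle. Because the Lax operators are vector fields, their common kernel is a ring and the reconstruction is a \textbf{nonlinear} (vector) Riemann--Hilbert problem on the real $\lambda$-axis, in which the unknown eigenfunctions are composed rather than added. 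I would recast it as a fixed-point equation in a Banach space adapted to the analyticity and decay from the direct problem, and prove unique solvability by a contraction whose constant is governed by the smallness of $\sigma(\cdot,t)$; this is precisely where the small-norm condition is indispensable, since it compensates for the unboundedness of the operators, for which the classical soliton-theory existence arguments fail. From the large-$\lambda$ expansion of the reconstructed eigenfunction one then reads off $v$, $\partial_x v$ and $\partial_y v$.

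Finally I would address \emph{regularity and the equation}. Since (\ref{Pavlov}) is a relation among exactly the listed derivatives, establishing their existence, continuity and boundedness is what gives the equation classical meaning; note that $v$ itself need not be bounded. Differentiating the fixed-point relation and using the uniform-in-$t$ bounds on $\sigma(\cdot,t)$, I would show that $\partial_x v$, $\partial_y v$, $\partial_x^2 v$, $\partial_y\partial_x v$, $\partial_t\partial_x v$ and $\partial_y^2 v$ all lie in $C(\RR\times\RR\times\RR^+)\cap L^\infty(\RR\times\RR\times\RR^+)$, the compact support and Schwartz regularity of $v_0$ supplying the decay that closes these estimates. That the reconstructed $v$ solves (\ref{Pavlov}) then follows from the commutativity $[L,M]=0$: the reconstruction enforces $L\psi=M\psi=0$, and the compatibility of these two scalar equations at the level of their coefficients is exactly the Pavlov equation. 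The delicate points remain the uniform control of the nonlinear Riemann--Hilbert problem for all $t\ge 0$ and the legitimacy of differentiating under the reconstruction, both secured by carrying the smallness of the scattering data through the explicit linear time evolution.
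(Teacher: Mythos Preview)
Your outline captures the correct four-stage architecture (direct problem, explicit time evolution, nonlinear inverse problem via contraction, verification via $[L,M]=0$), and that is indeed the paper's route. However, there are two concrete gaps that would prevent the plan, as written, from going through.

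First, you have collapsed the direct problem into a single step and conflated two distinct objects. In the paper the direct transform is genuinely two-tiered: the real Jost eigenfunctions $\varphi_\pm$ (built from the characteristic ODE, as you describe) define the \emph{scattering data} $\sigma(\xi,\lambda)$, a function of two real variables, via $\varphi_+ = \varphi_- + \sigma(\varphi_-,\lambda)$. But the nonlinear integral equation of the inverse problem (\ref{inversion1}) is written not in terms of $\sigma$ but in terms of the \emph{spectral data} $\chi_-(\xi,\lambda)$, which encode the boundary values of the complex-analytic eigenfunction $\Phi$ and are obtained from $\sigma$ by solving the shifted Riemann--Hilbert problem (\ref{E:shift-intro}). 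This intermediate step is not optional: the contraction constant for the inverse problem is controlled by $\|\partial_\xi\chi_-\|_{L^\infty}$, and the whole purpose of the small-norm condition in Definition~\ref{def:small-data} is to force $|\partial_\xi\chi_-|\le\tfrac14\tan(\pi/8)$ through a chain of estimates on $\sigma$ and on the integral operator $K$ of (\ref{E:shifted-RH-fredholm}). Your single object ``$\sigma(\lambda)$'' cannot play both roles, and without $\chi$ you cannot even write down the correct fixed-point equation.

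Second, your claim that a Picard/contraction argument on the characteristic ODE yields analytic continuation of the Jost eigenfunctions into $\CC^\pm$ is not justified and is in fact one of the main difficulties the paper singles out. For complex $\lambda$ the equation $L\Phi=0$ is recast as a Beltrami equation $(\partial_{\bar z}+b\,\partial_z)\Phi=0$ with $|b|<1$, and existence of $\Phi$ (Theorem~\ref{T:existence-complex-eigen}) comes from Calder\'on--Zygmund theory in $W^{2,p}$, not from ODE iteration; the delicate limit $\lambda_I\to 0$ (Theorem~\ref{T:complex-bdry}) then produces the relation between $\Phi^\pm$ and $\chi_\pm$. Relatedly, the time evolution is a shift $\chi_-(\xi,\lambda,t)=\chi_-(\xi-\lambda^2 t,\lambda,0)$, which preserves the $\sup_\xi$ bounds you need for the contraction but is not ``norm-preserving'' in any blanket sense; the $t$-uniform decay you invoke for the regularity step is the content of Proposition~\ref{P:decay} and relies specifically on the compact $\xi$-support of $\sigma$ together with the quadratic drift $-\lambda^2 t$.
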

\begin{remark}
The behavior of $\partial_t v(x,y,t)$ at $t=0$  requires an extra investigation.
\end{remark}
\vskip10pt 

Since the realization of the scheme described above requires a rather big amount of technical work, including estimates on the behavior of the integral equations kernels, to make our text more transparent, we moved the proofs of the analytic estimates to the last section of our paper.

The authors would like to dedicate this paper to the memory of S. V. Manakov who successfully devoted the last period of his life to the construction of the IST method for vector fields, and to its applications to the theory of integrable dispersionless PDEs in multidimensions. 

\noindent

\section{The Inverse Scattering Transform: a short summary}

We find it convenient to summarize here the basic formal steps associated with this novel IST for the vector field $L$ in (\ref{Lax_Pavlov}), allowing one to solve the Cauchy problem for the Pavlov equation \cite{MS0,MS1,MS3}, whose rigorous aspects will be investigated in the following sections.

\textbf{The Direct Problem} In our paper we always assume that $v(x,y)$ is a real-valued function. In analogy with the IST for KP\Roman{one} equation (whose Lax operator in the non-stationary Schr\"odinger operator, see \cite{Manakov1}, \cite{FA}), we make essential use of two sets of eigenfunctions -- the real Jost eigenfunctions $\varphi_{\pm}(x,y,\lambda)$, $\lambda\in\RR$,  and the complex-analytic in $\lambda$ ones: $\Phi^{+}(x,y,\lambda)$, $\Im \lambda\ge0$;   $\Phi^{-}(x,y,\lambda)$, $\Im \lambda\le0$ 
\begin{align}
& L\varphi_{\pm}(x,y,\lambda)=0, \ \  L\Phi^{\pm}(x,y,\lambda)=0,\\
& \varphi_{\pm}(x,y,\lambda)\rightarrow x-\lambda y \ \ \mbox{as} \ \ y\rightarrow\pm\infty. \label{def-phi}
\end{align}
The direct spectral transform consists of two steps
\begin{itemize}
\item Using the real Jost eigenfunctions we construct the scattering data $\sigma(\xi,\lambda)$. 
\item Using the complex-analytic eigenfunctions we construct the spectral data $\chi(\xi,\lambda)$ through the scattering data.
\end{itemize}

{\bf Step 1.} For real $\lambda$, all eigenfunctions of $L$ have the following property: they are constant on the trajectories of the following ODE:  
\beq\label{ODE}
\frac{dx}{dy}=\lambda+v_x(x,y)
\eeq
defining the characteristics of $L$. Indeed, if the potential $v$ is sufficiently regular and well-localized, the solution of the Cauchy problem 
$x(y_0)=x_0$  for the ODE (\ref{ODE}) exists unique globally in the (time) variable $y$, with the following free particle asymptotic behavior
\beq\label{asympt_ODE}
x(y)\to \lambda y+x_{\pm}(x_0,y_0,\lambda),~~y\to\pm\infty.
\eeq 
The asymptotic positions $x_{\pm}(x_0,y_0,\lambda)$ are obviously constant when the point $(x_0,y_0)$ moves along trajectories. Therefore  $x_{\pm}(x_0,y_0,\lambda)$ are solutions of the vector field equation
$$
[\partial_{y_0}+(\lambda +v_{x_0}(x_0,y_0) )\partial_{x_0}]  x_{\pm}(x_0,y_0,\lambda)=0.
$$
Due to (\ref{asympt_ODE}) we have
$$
x_{\pm}(x_0,y_0,\lambda)\rightarrow x_0 -\lambda y_0 \ \ \mbox{as} \ \ y_0\rightarrow\pm\infty,
$$
therefore they coincide with the real Jost eigenfunctions
\beq\label{asympt_ODE_Pavlov}
\varphi_{\pm}(x_0,y_0,\lambda)=x_{\pm}(x_0,y_0,\lambda).
\eeq 

\begin{definition} 
\label{def:sigma}
Denote  by $\sigma(\xi,\lambda)$ the classical time-scattering datum, connecting the asymptotic behavior of the solutions at $y\rightarrow+\infty$ and at  $y\rightarrow-\infty$
$$
x_{+}(x_0,y_0,\lambda)=x_{-}(x_0,y_0,\lambda)+\sigma(x_{-}(x_0,y_0,\lambda),\lambda),
$$
\end{definition}
therefore
\beq
\label{def-S}
\varphi_{+}(x,y,\lambda)\rightarrow x-\lambda y + \sigma(x-\lambda y,\lambda) \ \ \mbox{as} \ \ y\rightarrow-\infty.
\eeq
{\bf Step 2. }The problem of existence for complex (analytic) eigenfunctions $\Phi^{\pm}$ of a vector field is usually highly nontrivial, and in all previous works by Manakov and Santini was only postulated and motivated by the analyticity properties of the Green's functions of the undressed vector fields. In our paper we present a proof based on the following observation:

For $\lambda\in\mathbb C/\mathbb R$ , by the change of variables $z=x-\lambda y$, $\bar z=x-\bar\lambda y$, the Lax equation $L\Phi(x,y,\lambda)=0$ can be transformed into a linear Beltrami equation and can be solved. Moreover, we do not have to assume, at this stage, that the potential $v(x,y)$ has small norm.

We show below that the limiting functions $ \Phi^{\pm}(x,y,\lambda)=  \Phi(x,y,\lambda\pm i 0)$, $\lambda\in\RR$ are also well-defined. Both real Jost eigenfunctions $\varphi_{\pm}(x,y,\lambda)$ enumerate the trajectories of our vector field, therefore any eigenfunction of $L$ for $\lambda\in\RR$ can be 
represented as a function either of  $\varphi_{+}(x,y,\lambda)$ or  $\varphi_{-}(x,y,\lambda)$, and we have:
\begin{align}
\label{E:intro-1}
&\Phi^{-}(x,y,\lambda) 
 =\varphi_{-}(x,y,\lambda)+\chi_{-}(\varphi_{-}(x,y,\lambda),\lambda) = \varphi_{+}(x,y,\lambda)+\chi_{+}(\varphi_{+}(x,y,\lambda),\lambda) 
\nonumber
\\
&\Phi^+(x,y,\lambda) =\overline{\Phi^-(x,y,\lambda)}.
\end{align}
defining the spectral data $\chi_{\pm}(\xi,\lambda)$.

Assuming that the small $\lambda_I=\Im \lambda$ behaviour be sufficiently good, we see that, for $\lambda_I\to 0$, the eigenfunction $\Phi(x,y,\lambda)$ is almost constant on the trajectories of the vector field $\hat L\equiv \partial_y+(\lambda_R+v_x)\partial_x$; these trajectories are straight lines $\Re z=\mathrm{const}$ outside the support of $v(x,y)$ and connect the lines $\Re z=\xi$ and $\Re z =\xi+\sigma(\xi,\lambda)$ as they go from $-\infty$ to $+\infty$ (see Fig~\ref{fig1}). 

\begin{figure}[h]
\begin{center}
\includegraphics[height=6cm]{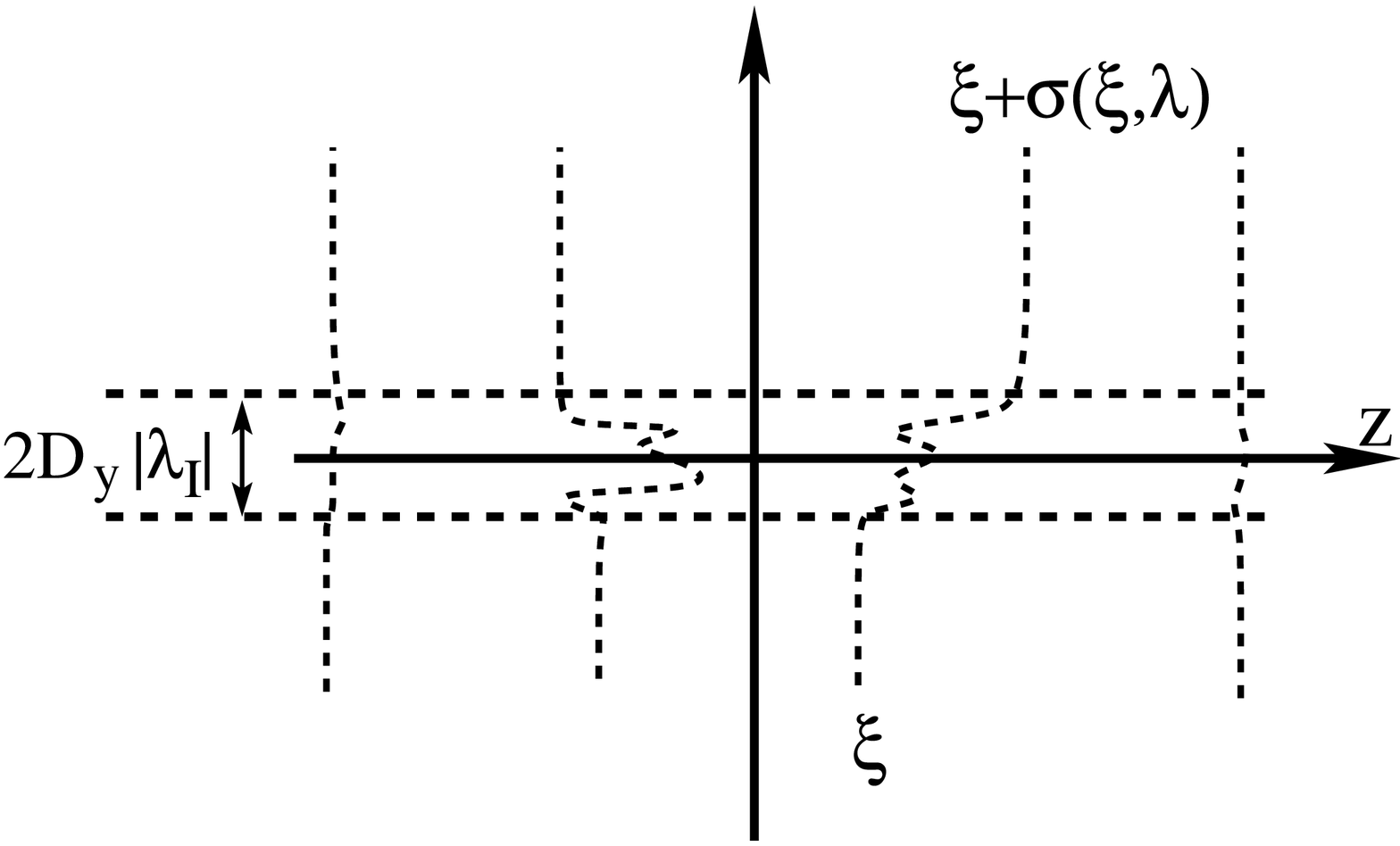}
\end{center}
\caption{\label{fig1} The trajectories of the vector field for $\Im\lambda\ll 1$.}
\end{figure}

Assume now that $\lambda_I<0$, $|\lambda_I|\ll 1$;  then  $\Phi^{-}(x,y,\lambda)$ is holomorphic in $z$ outside a small neighbourhood of $\mathbb R$:  $\Phi^{-}(x,y,\lambda)= \hat \Phi(z,\lambda)$ and, due to the almost constant behavior on the trajectories: 
\begin{equation}
\label{RH_shift}
 \hat\Phi(\xi-i\epsilon,\lambda) \sim \hat\Phi(\xi+\tilde\sigma(\xi,\lambda)+i\epsilon,\lambda). 
\end{equation}
In the limit $\lambda_I\rightarrow 0-$ we have
\begin{align}
& \hat\Phi(\xi-i0,\lambda)  = \Phi^{-}(x,y,\lambda), \ \ y < -D_y,\\
& \hat\Phi(\xi+i0,\lambda)  = \Phi^{-}(x,y,\lambda), \ \ y > D_y,\\
\end{align}
therefore equation~(\ref{E:intro-1}) implies
$$
\hat\Phi(\xi-i0,\lambda) = \xi + \chi_{-}(\xi,\lambda), \ \   \hat\Phi(\xi+i0,\lambda) = \xi + \chi_{+}(\xi,\lambda).
$$
Hence the spectral data $\chi_{\pm}(\xi,\lambda)$
of the Pavlov equation satisfy the shifted Riemann-Hilbert (RH)  problem 
\beq\label{E:shift-intro}
\begin{split}
\sigma(\xi,\lambda)+\chi_{+}(\xi+\sigma(\xi,\lambda),\lambda)-\chi_{-}(\xi,\lambda)=0,\quad \xi\in \RR,\\
\hskip1in\partial_{\bar\xi}\chi=0\ \textit{ for } \xi\in\CC^\pm,\quad\quad\quad
\chi\to 0\ \textit{ as } |\xi|\to\infty.
\end{split}
\eeq
Equation (\ref{E:shift-intro}) defines the spectral data $\chi_{\pm}(\xi,\lambda)$ in terms of the scattering data $\sigma(\xi,\lambda)$. No small norm assumption is required also at this step.

\textbf{Evolution of the spectral data}. The evolution of the scattering and spectral data, following from the asymptotics (\ref{def-phi}) and (\ref{def-S}), is given by the explicit formula \cite{MS2,MS3}:
\beq
\label{eq:spectral-evolution}
\ba{l}
\sigma(\xi,\lambda,t)=\sigma(\xi-\lambda^2 t,\lambda,0),\\
\chi_{\pm}(\xi,\lambda,t)=\chi_{\pm}(\xi-\lambda^2 t,\lambda,0),\
\ea
\eeq
implying that, from the eigenfunctions $\varphi_{\pm},\Phi^{\pm}$ of $L$, one can constructs the common eigenfunctions $\psi_{\pm},\Psi^{\pm}$ of $L$ and $M$ through the formulas
\beq\label{phi_Phi}
\ba{l}
{\psi_{\pm}}(x,y,t,\lambda)={\varphi_{\pm}}(x,y,t,\lambda)-\lambda^2 t,~~
\Psi^{\pm}(x,y,t,\lambda)=\Phi^{\pm}(x,y,t,\lambda)-\lambda^2 t,
\ea
\eeq
connected through equations 
\begin{align}
\label{E:intro-1-bis}
&\Psi^{-}(x,y,t,\lambda) 
 =\psi_{-}(x,y,t,\lambda)+\chi_{-}(\psi_{-}(x,y,t,\lambda),\lambda) = \psi_{+}(x,y,\lambda)+\chi_{+}(\psi_{+}(x,y,\lambda),\lambda) 
\nonumber
\\
&\Psi^+(x,y,t,\lambda) =\overline{\Psi^-(x,y,t,\lambda)}.
\end{align}

\textbf{The inverse problem} The reconstruction of the real eigenfunction $\psi_-$ at time $t$ from the spectral data $\chi_-$ is provided by the solution of the nonlinear integral equation 
\beq\label{inversion1}
\psi_-(x,y,t,\lambda)-H_{\lambda}\chi_{-I}\big(\psi_-(x,y,t,\lambda),\lambda\big)+\chi_{-R}\big(\psi_-(x,y,t,\lambda),\lambda\big)=x-\lambda y-\lambda^2 t,
\eeq
where $\chi_{-R}$ and $\chi_{-I}$ are the real and imaginary parts of $\chi_-$ , and $H_{\lambda}$ is the Hilbert transform operator wrt $\lambda$
\beq
H_{\lambda}f(\lambda)=\frac{1}{\pi} \fint\limits_{-\infty}^{\infty}\frac{f(\lambda')}{\lambda-\lambda'}d\lambda' .
\eeq
We remark that, since $\chi_-(\xi,\lambda)$ is analytic wrt $\xi$ in the lower half-plane, its real and imaginary parts satisfy the relation $\chi_{-R}-H_{\xi}\chi_{-I}=0$. Equation (\ref{inversion1}) expresses the fact that the RHS of (\ref{E:intro-1-bis}) for $\Psi^{-}$ is the boundary value of a function analytic in  $\lambda$ in the lower half-plane.

Once $\psi_{-}$ is reconstructed from $\chi_-$ solving the nonlinear integral equation (\ref{inversion1}), equation (\ref{E:intro-1-bis}) gives $\Psi^{\pm}$, and $v$ is finally reconstructed from: 
\beq\label{inv_u_1}
\ba{l}
v(x,y)=-\displaystyle\lim_{\lambda\to\infty}{\left(\lambda[\Psi^-(x,y,\lambda)-(x-\lambda y-\lambda^2 t]\right)}, 
\ea
\eeq
or, better, as we shall see, from
\begin{equation}
\label{eq:v-rieamann-1}
v(x,y,t)= -\frac{1}{\pi} \int\limits_{\RR} \chi_{-I}(\psi_{-}(x,y,t,\zeta),\zeta) d\zeta.
\end{equation}

\begin{remark}The main difficulty associated with the direct problem is in the proof of the existence of the analytic eigenfunction and of its limits on the real $\lambda$ axis from above and below. While such a proof can be made in the Pavlov case, see subsection~\ref{S:complex}, in the dKP case the existence of the analytic eigenfunctions is proven, at the moment, only sufficiently far from the real $\lambda$ axis \cite{GS}. We also remark that, soon after the formulation of the direct problem through the RH problem (\ref{RH_shift}) \cite{MS0}, an alternative integral equation, obtained taking the Fourier transform of (\ref{RH_shift}), was also suggested \cite{MS1},\cite{MS2}. It turns out that, while the construction of the spectral data from the scattering data through the RH problem with shift (\ref{RH_shift}) does not present difficulties, the construction that makes use of the integral equation in Fourier space requires additional effort, due to the bad behavior of its kernel, and will not be considered in this paper. 
\end{remark}

\begin{remark}A second inverse problem, a nonlinear RH (NRH) problem on the real line, was also introduced at a formal level \cite{MS0,MS1,MS2}, and intensively used i) to study the longtime behavior of the solutions of the target dPDE \cite{MS4,MS5,MS6}; ii) to detect if a localized initial disturbance evolving according to such a PDE goes through a gradient catastrophe at finite time (f.i., no gradient catastrophe for the second heavenly equation \cite{MS1,MS5} and for the Pavlov equation \cite{MS3} was found, while a gradient catastrophe was indeed found for the dKP \cite{MS6} and for the dispersionless 2D Toda \cite{MS6} equations); iii) to investigate analytically the wave breaking mechanism of such multidimensional waves \cite{MS4,MS6}; iv) to construct classes of RH data giving rise to exactly solvable NRH problems, and to distinguished exact implicit solutions of the dispersionless PDEs through an algorithmic approach \cite{MS7,BDM,MS4,MS5,MS6}; v) to detect integrable differential reductions of the associated hierarchy of PDEs \cite{Bogdanov1,Bogdanov2}, like the Dunajski interpolating equation $v_{xt}+v_{yy}+cv_xv_{xx}+v_xv_{xy}-v_yv_{xx}=0$ \cite{Dunajj}, an integrable PDE interpolating between the dKP and the Pavlov equations, corresponding to the reduction $u=c v_x$ of system (\ref{dKP-system}). The rigorous aspects of such a NRH inverse problem, as well as the connections with the above inverse problem, will also be investigated in a subsequent paper. 
\end{remark}


\section{Direct spectral transform}

\subsection{The real eigenfunction}\label{S:real-eigenfunction}

Throughout this paper, 
\begin{eqnarray*}
&&\mathfrak S_{x,y} =\{f:\mathbb R^2\to\mathbb R|\textit{$\ \|f\|^{(\mu,\nu)}_{( k,h)}=|(1+|x|)^k(1+| y|)^h\partial_x^\mu\partial_y^\nu f|_{L^\infty}<\infty$}\},\\
&&L^p(\mathbb R, d\lambda) =\{f:\mathbb R\to\mathbb C|\ \|f\|_{L^p(\mathbb R, d\lambda)}= (\int_{\mathbb R}| f(\lambda)|^pd\lambda)^\frac 1p<\infty\},\\
&&W^{k,2}(\mathbb R, d\lambda) =\{f:\mathbb R\to\mathbb C|\textit{   $\|f\|_{W^{k,2}(\mathbb R, d\lambda)}=\sum_{j=0}^k\left\|\partial_\lambda^jf\right\|_{L^2(\mathbb R, d\lambda)}<\infty$}\}.
\end{eqnarray*} 

We shall also use the Sobolev spaces with additional weights:

$$
W^{k,2}(\mathbb R, d\lambda)(\mu) =\{f:\mathbb R\to\mathbb C|\textit{   $\|f\|_{W^{k,2}(\mathbb R, d\lambda)(\mu)}=\sum_{j=0}^k \mu^k \left\|\partial_\lambda^jf\right\|_{L^2(\mathbb R, d\lambda)}<\infty$}\}, \ \ \mu>0.
$$
For all positive $\mu$ these norms are equivalent, but in some situations it is necessary to choose an appropriate 
$\mu$ to guaranty the contraction property for our integral operators.

In our paper we assume that the potential $v(x,y)$ has compact support in $x$, $y$. We expect that these constraints are not critical and can be weakened (for example, it should be enough to assume that the potential decays sufficiently fast as $x^2+y^2\rightarrow\infty$), but it may require a serious additional analytic work. To be more precise, let $D_x$, $D_y$ be a pair of positive numbers, $n>D_y$, and 
$v\in\mathfrak G_{x,y}$ such that
\begin{equation}\label{E:v-supp}
v(x,y)=0 \ \ \mbox{for} \ \ |x|>D_x \ \ \mbox{or} \ \  |y|>D_y.
\end{equation}
The real eigenfunctions $\varphi_{\pm}(x,y,\lambda)$ for the Pavlov equation are defined by the solution of the boundary value problem: for each fixed $\lambda\in\RR$, \cite{MS5}
\begin{eqnarray} \partial_y\varphi_{\pm}+\left(\lambda+v_x\right)\partial_x\varphi_{\pm}=0,\quad&&\textit{ for $x,\,y\in\mathbb R$, }\label{E:pavlov-lax}\\
\varphi_{\pm}-\xi \to 0,\quad && \textit{ as $y\to \pm\infty$},\label{E:pavlov-lax-bdry}
\end{eqnarray}
where 
\begin{equation}
\xi=x-\lambda y.
\end{equation}
\begin{lemma}\label{L:pde-system}
Suppose $v\in\mathfrak S_{x,y}$ satisfying (\ref{E:v-supp}). The real eigenfunctions $\varphi_{\pm}$ exists uniquely and $\varphi_{\pm}-(x-\lambda y)$ are smooth bounded functions.
\end{lemma}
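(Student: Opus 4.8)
The plan is to solve the linear transport equation (\ref{E:pavlov-lax}) by the method of characteristics, reducing everything to the scalar ODE (\ref{ODE}) and then exploiting the compact support (\ref{E:v-supp}) of $v$ to control the flow. For fixed $\lambda\in\RR$ I would consider the Cauchy problem $x'(y)=\lambda+v_x(x(y),y)$, $x(y_0)=x_0$, and write its solution as $X(y;x_0,y_0,\lambda)$. Since $v\in\mathfrak S_{x,y}$ has compact support, $v_x$ and $v_{xx}$ are globally bounded; boundedness of the right-hand side rules out finite-$y$ blow-up, so the flow exists for all $y\in\RR$, while the uniform Lipschitz bound $\|v_{xx}\|_{L^\infty}$ in $x$ gives uniqueness. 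Hence the characteristics foliate $\RR^2$: through every point passes exactly one trajectory. Introducing $\eta(y)=X(y;x_0,y_0,\lambda)-\lambda y$, equation (\ref{ODE}) becomes $\eta'(y)=v_x(\eta(y)+\lambda y,y)$, and by (\ref{E:v-supp}) the right-hand side vanishes for $|y|>D_y$; thus $\eta$ is constant on $(-\infty,-D_y]$ and on $[D_y,+\infty)$, so the free-particle limits (\ref{asympt_ODE}) are not merely asymptotic but are attained at the finite heights $y=\pm D_y$.

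Next I would define the eigenfunctions through these limits, following (\ref{asympt_ODE_Pavlov}): set $\varphi_{\pm}(x_0,y_0,\lambda)=\lim_{y\to\pm\infty}\eta(y)=\eta(\pm D_y)$. By construction $\varphi_\pm$ is constant along each characteristic, which is exactly the statement that it is annihilated by $L$, i.e.\ (\ref{E:pavlov-lax}) holds; the differentiability in $(x_0,y_0)$ needed to justify this identity comes from smooth dependence of the ODE flow on its initial data. For the boundary condition (\ref{E:pavlov-lax-bdry}): when $y_0\ge D_y$ the starting point already lies above the support slab, the trajectory is the straight line $x=\lambda(y-y_0)+x_0$, and $\varphi_+=x_0-\lambda y_0=\xi$, so $\varphi_+-\xi$ vanishes identically for $y\ge D_y$ and a fortiori tends to $0$; the case of $\varphi_-$ for $y_0\le -D_y$ is symmetric. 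Uniqueness of $\varphi_\pm$ within the class of solutions of (\ref{E:pavlov-lax})--(\ref{E:pavlov-lax-bdry}) follows because the difference of two such solutions is again constant along characteristics and vanishes in the limit $y\to\pm\infty$ on every trajectory, hence is identically zero.

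Finally I would establish the quantitative claims. Writing $\varphi_+-\xi=\eta(D_y)-\eta(y_0)=\int_{y_0}^{D_y}v_x(X(y),y)\,dy$ (and the analogue for $\varphi_-$), the integrand is supported in $y\in[-D_y,D_y]$, so
\begin{equation}
|\varphi_\pm(x,y,\lambda)-(x-\lambda y)|\le 2 D_y\,\|v_x\|_{L^\infty},
\end{equation}
a bound uniform in $(x,y)$ and in $\lambda\in\RR$, which proves boundedness. Smoothness of $\varphi_\pm-(x-\lambda y)$ follows from the smooth dependence of $X(y;x_0,y_0,\lambda)$ on $(x_0,y_0,\lambda)$ (obtained by differentiating the flow and solving the resulting linear variational equations with Gronwall estimates, again using $v\in\mathfrak S_{x,y}$), together with the fact that the straightening of the trajectories outside the slab $|y|\le D_y$ introduces no loss of regularity.

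The main point requiring care --- though it is mild here --- is the global-in-$y$ control of the flow and of its derivatives: everything hinges on $v_x,\,v_{xx},\dots$ being bounded, which is guaranteed by the compact support and Schwartz regularity of $v$. Once this is in hand, no smallness of $v$ is needed, consistent with the remarks following (\ref{E:shift-intro}); the smallness hypothesis will only be invoked later, at the level of the nonlinear inversion (\ref{inversion1}).
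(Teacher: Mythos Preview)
Your proposal is correct and follows essentially the same approach as the paper: both reduce (\ref{E:pavlov-lax})--(\ref{E:pavlov-lax-bdry}) to the characteristic ODE (\ref{ODE}), pass to the variable $h(y)=x(y)-\lambda y$ (your $\eta$), and use the compact support (\ref{E:v-supp}) to conclude that $h$ is constant for $|y|\ge D_y$, so that the asymptotic positions $x_\pm$ are smooth, bounded perturbations of $\xi$. Your write-up is in fact somewhat more explicit than the paper's about uniqueness and about the uniform bound $|\varphi_\pm-\xi|\le 2D_y\|v_x\|_{L^\infty}$, but the underlying argument is the same.
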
 
\begin{proof} The solvability and uniqueness of the boundary value problem of the first order partial differential equation (\ref{E:pavlov-lax}), (\ref{E:pavlov-lax-bdry})  can be derived by solving the ordinary differential equation
\beq
\frac {d x}{dy}=\lambda+v_x(x,y), \ \ x=x(y;x_0,y_0,\lambda),  \ \ x(y_0;x_0,y_0,\lambda)=x_0, 
\eeq
or, equivalently, 
\beq
\label{eq:def-h}
\frac {d h}{dy}=v_x(h+\lambda y,y), \ \ h=h(y;\xi_0,y_0,\lambda), \ \ h(y_0;\xi_0,y_0,\lambda)=\xi_0=x_0-\lambda y_0, 
\eeq
where
$$
h(y) = x(y) -\lambda y.
$$

Using the Picard iteration method on the integral equation defining the solution (see, for example, \cite{Arnold})
\beq
h(y;\xi_0,y_0,\lambda) =h_0 +\int_{y_0}^y v_x\big(h(y';\xi_0,y_0,\lambda)+ \lambda y',y'\big)dy'
\label{E:h}
\eeq
one shows that $x_{\pm}(x_0,y_0,\lambda)=h(\pm n;x_0-\lambda y_0,y_0,\lambda)$ are smooth functions, 
$h(\pm n;x_0-\lambda y_0,y_0,\lambda)-h_0$ are also bounded. Here we used the fact that $h(y;x_0-\lambda y_0,y_0,\lambda)$ are constant in $y$ in the regions $y\ge n$, $y\le -n$ due to the compact support of $v(x,y)$.

We see, that 
$$
\sigma(\xi_0,\lambda)=h(n;\xi_0,-n,\lambda)-\xi_0=\int_{-\infty}^{\infty} v_x\big(h(y';\xi_0,-\infty,\lambda)+ \lambda y',y'\big)dy'
$$
is also a regular function and the map $\xi\rightarrow\xi+\sigma(\xi,\lambda)$ is regularly invertible for all $\lambda$. We do not require the small norm assumption at this step.

\end{proof}

For simplicity and convenience, we will use the following agreement: $C$  denotes a constant, possibly dependent of 
$\|v\|^{(\mu,\nu)}_{(k,h)}$, but independent of $x$, $y$, $t$, and $\lambda$ throughout this paper. To construct the spectral data from the scattering data by solving the shifted Riemann-Hilbert problem, it is necessary to control the behavior of the scattering data and its derivatives for large $\lambda$. For solving the inverse problem we also need some estimates for large $\lambda$ and $\xi\sim\lambda^2$.

\begin{proposition}\label{L:direct-sigma-asy-00}
Suppose $v\in\mathfrak S_{x,y}$ such, that $v(x,y)\equiv 0$ for $|y|>D_y$. Let us define the following constants $B_k=b_k[v]$, $k=0,1,2,3$, $\hat B_k=\hat b_k[v]$, $k=0,1$:
\beq
\label{sigma-est-B1}
B_0=\int\limits_{-\infty}^{+\infty} \left[\max\limits_{x\in\RR} |v_x(x,y)|  \right] dy,
\eeq
\beq
\label{sigma-est-B2}
B_1=\exp\left[ \int\limits_{-\infty}^{+\infty} \left[\max\limits_{x\in\RR} |v_{xx}(x,y)|  \right] dy \right] -1,
\eeq
\beq
\label{sigma-est-B3}
B_2=\left[ \int\limits_{-\infty}^{+\infty} \left[\max\limits_{x\in\RR} |v_{xxx}(x,y)|  \right] dy \right] (1+B_1)^3,
\eeq
\begin{align}
\label{sigma-est-B4}
B_3&=\left[ \int\limits_{-\infty}^{+\infty} \left[\max\limits_{x\in\RR} |v_{xxx}(x,y)|  \right] dy \right] 3 (1+B_1)^2B_2+
\\
&+\left[ \int\limits_{-\infty}^{+\infty} \left[\max\limits_{x\in\RR} |v_{xxxx}(x,y)|  \right] dy \right] (1+B_1)^4,
\end{align}
\begin{align}
\label{sigma-est-B5}
\hat B_0&=\left[ \int\limits_{-\infty}^{+\infty} \left( \sqrt{  \int\limits_{-\infty}^{+\infty} |v_{x}(x,y)|^2 dx } \right) dy \right] \cdot\frac{1}{\sqrt{1-B_1}},
\end{align}
\begin{align}
\label{sigma-est-B6}
\hat B_1&=\left[ \int\limits_{-\infty}^{+\infty} \left( \sqrt{  \int\limits_{-\infty}^{+\infty} |v_{xx}(x,y)|^2 dx } \right) dy \right] \cdot\frac{1+B_1}{\sqrt{1-B_1}}.
\end{align}
Then we have the following estimates on the scattering data:
\begin{equation}
\label{sigma-est-1}
|\sigma(\xi,\lambda)|\le B_0, \ \ |\sigma_{\xi}(\xi,\lambda)|\le B_1, 
\ \  |\sigma_{\xi\xi}(\xi,\lambda)|\le B_2, \ \  |\sigma_{\xi\xi\xi}(\xi,\lambda)|\le B_3.
\end{equation}
Moreover, if $B_1<1$, 
\begin{equation}
\label{sigma-est-2}
\|\sigma(\xi,\lambda)\|_{L^2(d\xi)}\le \hat B_0, \   \|\sigma_{\xi}(\xi,\lambda)\|_{L^2(d\xi)}\le \hat B_1\nonumber.
\end{equation}
\end{proposition}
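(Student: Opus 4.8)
The plan is to exploit the explicit representation of the scattering datum in terms of the characteristic flow. Writing $h(y)=h(y;\xi,-\infty,\lambda)$ for the trajectory of (\ref{E:h}) with $h\to\xi$ as $y\to-\infty$, we have, as established in the proof of Lemma~\ref{L:pde-system},
\beq
\sigma(\xi,\lambda)=\int_{-\infty}^{\infty} v_x\big(h(y)+\lambda y,y\big)\,dy .
\eeq
All estimates will be obtained by differentiating this formula repeatedly in $\xi$ and controlling the first variation $g(y):=\partial_\xi h(y)$. Differentiating the integral equation (\ref{E:h}) in $\xi$ shows that $g$ solves the linear variational equation $g'=v_{xx}(h+\lambda y,y)\,g$ with $g\to 1$ as $y\to-\infty$, hence the closed form
\beq
g(y)=\exp\!\left(\int_{-\infty}^{y} v_{xx}\big(h(y')+\lambda y',y'\big)\,dy'\right).
\eeq
Since $v$ is compactly supported in $y$, the exponent is bounded in absolute value by $A:=\int_{-\infty}^{\infty}\max_x|v_{xx}(x,y)|\,dy$, so that $e^{-A}\le g(y)\le e^{A}$. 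Because $2\cosh A\ge 2$ we have $e^{-A}\ge 2-e^{A}$, which gives the two-sided bound $1-B_1\le g(y)\le 1+B_1$ with $1+B_1=e^{A}$, the lower bound being positive precisely under the hypothesis $B_1<1$.

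For the pointwise estimates I would differentiate $\sigma$ under the integral sign. The bound $|\sigma|\le B_0$ is immediate from $|v_x|\le\max_x|v_x|$. For $\sigma_\xi=g(n)-1=e^{I}-1$ with $|I|\le A$, the elementary inequality $|e^{I}-1|\le e^{A}-1$ (again from $2\cosh A\ge 2$) yields $|\sigma_\xi|\le B_1$. The higher derivatives $\sigma_{\xi\xi}$ and $\sigma_{\xi\xi\xi}$ are obtained by differentiating the relation $g=e^{G}$, $G=\int_{-\infty}^{\,\cdot}v_{xx}(h+\lambda y,y)\,dy$, one and two more times; each differentiation brings down one further $x$-derivative of $v$ (hence a factor $\int\max_x|v_{xxx}|$ or $\int\max_x|v_{xxxx}|$) together with additional $g$-factors, each controlled by $1+B_1$. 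Collecting the terms and bounding every $g$-factor by $1+B_1$ produces the polynomial-in-$(1+B_1)$ prefactors appearing in $B_2$ and $B_3$; these prefactors are deliberately non-sharp upper bounds, so the stated inequalities hold a fortiori.

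The $L^2$ estimates are the crux and are where the small-norm condition $B_1<1$ is genuinely used. Applying Minkowski's integral inequality to the representation of $\sigma$ moves the $L^2(d\xi)$-norm inside the $y$-integral:
\beq
\|\sigma\|_{L^2(d\xi)}\le \int_{-\infty}^{\infty}\big\|v_x(h(y)+\lambda y,y)\big\|_{L^2(d\xi)}\,dy .
\eeq
For each fixed $y$ I would change variables from $\xi$ to $x=h(y;\xi,-\infty,\lambda)+\lambda y$; since $g=\partial_\xi h>0$ and $h-\xi$ is bounded, this is a global increasing diffeomorphism of $\RR$ with Jacobian $d\xi=dx/g(y)$. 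Using the lower bound $g\ge 1-B_1$ gives
\beq
\big\|v_x(h(y)+\lambda y,y)\big\|_{L^2(d\xi)}\le \frac{1}{\sqrt{1-B_1}}\left(\int_{-\infty}^{\infty}|v_x(x,y)|^2\,dx\right)^{1/2},
\eeq
and integrating in $y$ reproduces exactly $\hat B_0$. The bound on $\|\sigma_\xi\|_{L^2(d\xi)}$ follows identically from $\sigma_\xi=\int v_{xx}(h+\lambda y,y)\,g(y)\,dy$, now estimating the extra factor $g(y)\le 1+B_1$ before the change of variables, which accounts for the additional factor $1+B_1$ in $\hat B_1$. The main obstacle is precisely the control of this change of variables: one must verify that $\xi\mapsto x$ is a bijection of $\RR$ with Jacobian bounded away from zero uniformly in $\xi$ and $\lambda$, and it is the requirement $1-B_1>0$ that guarantees the Jacobian lower bound and hence keeps the pushed-forward measure finite.
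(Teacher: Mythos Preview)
Your argument is correct and follows essentially the same route as the paper: represent $\sigma$ through the characteristic flow, control the first variation $g=\partial_\xi h$ via its linear variational ODE (equivalently its exponential closed form), and for the $L^2$ bounds combine Minkowski's inequality with the change of variables $\xi\mapsto x=h(y)+\lambda y$, whose Jacobian lower bound $g\ge 1-B_1$ is exactly what makes the argument go through. The only cosmetic difference is in the second and third $\xi$-derivatives: the paper writes the inhomogeneous linear ODEs for $h_{\xi\xi}$ and $h_{\xi\xi\xi}$ and applies a Gronwall-type estimate (picking up the extra factor $e^{A}=1+B_1$ from the propagator), while you differentiate the identity $g=e^{G}$ directly; your computation in fact produces slightly sharper constants, so the stated bounds $B_2,B_3$ hold a fortiori, as you note.
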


The proof of Proposition~\ref{L:direct-sigma-asy-00}  is moved to the last Section. It is rather straightforward and is based on some standard estimates from the ODE theory,

\begin{definition}
\label{def:small-data}
A potential $v(x,y)$ satisfies {\bf the small norm condition} if the following inequalities are fulfilled:
\begin{enumerate}
\item $B_0\le \frac{1}{4}$,
\item $B_1\le \frac{1}{2}$,
\item $8 B_0+ 4 B_2 + 2\sqrt{2} \hat B_0 <\pi$,
\item $2B_1+\frac{1}{\pi}(64 B_1+ 16 \hat B_1) + \frac{1}{\pi}(8B_3+
16 B_2^2 + 56 B_1+ 16 B_1^2) \left(B_0+\frac{2}{\pi}[2 B_0 + \hat B_0] \right)<\tan\left(\frac{\pi}{8}\right)$.
\end{enumerate}
The meaning of the combinations of constants arising in this definition will be explained later.

\end{definition}

\begin{proposition}\label{L:direct-sigma-asy-0}
Suppose $v\in\mathfrak S_{x,y}$  satisfying (\ref{E:v-supp}) and $|\lambda|$ is sufficiently 
large. Let us introduce new variables
$$
{\mathring\lambda}=\frac{1}{\lambda}, \ \ \mathring{\xi} = \frac{\xi}{\lambda}.
$$
Then, for sufficiently small ${\mathring\lambda}$, the function 
$$
\mathring\sigma(\mathring{\xi},{\mathring\lambda}) = \sigma(\mathring\xi/{\mathring\lambda},1/{\mathring\lambda})/{\mathring\lambda}^2
$$
has the following properties:
\begin{enumerate}
\item It vanishes outside the interval $|\mathring{\xi}|\le D_y+|{\mathring\lambda}| D_x$.
\item It is smooth in both variables $\mathring{\xi}$, ${\mathring\lambda}$.
\end{enumerate}
As a corollary we obtain that there exists a collection of positive constants $C^{(\mu,k)}$, such that 

\begin{eqnarray}
\|\partial^k_{\lambda} \partial^{\mu}_{\xi}\sigma(\xi,\lambda)\|_{L^\infty}&<& \frac{C^{(\mu,k)}}{1+|\lambda |^{2+\mu+k}}, \ \ \mu\ge 0, \ \ k\ge 0, 
\label{E:l1-condition-lambda-tau-sigma-0}\\
\|\partial^k_{\lambda} \partial^{\mu}_{\xi}\sigma(\xi,\lambda)\|_{L^2(d\xi)}&<& \frac{C^{(\mu,k)}}{1+|\lambda |^{3/2+\mu+k}}, \ \ \mu\ge 0, \ \ k\ge 0,
\label{E:l1-condition-lambda-tau-sigma-1}
\end{eqnarray}
\end{proposition}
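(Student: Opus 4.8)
The plan is to reduce $\sigma$ to the solution of a single scalar ODE written in the rescaled variables, and then to read off both assertions from smooth dependence of that ODE on its parameters, together with one cancellation forced by the compact $x$-support of $v$. Recall from Lemma~\ref{L:pde-system} that, taking $\lambda>0$ for definiteness (the case $\lambda<0$ being symmetric),
\[
\sigma(\xi_0,\lambda)=\int_{-\infty}^{\infty} v_x\big(h(y)+\lambda y,\,y\big)\,dy,\qquad \frac{dh}{dy}=v_x(h+\lambda y,y),\quad h(-\infty)=\xi_0 .
\]
First I would zoom into the crossing region by the substitution $y=-\mathring\xi+\mathring\lambda\tau$, under which the trajectory coordinate becomes $x(y)=\lambda(\mathring\xi+y)+(h-\xi_0)=\tau+p(\tau)$, where $p(\tau):=h(y)-\xi_0$. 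A direct computation turns the characteristic equation into
\[
\frac{dp}{d\tau}=\mathring\lambda\, v_x\big(\tau+p(\tau),\,-\mathring\xi+\mathring\lambda\tau\big),\qquad p(-\infty)=0,
\]
so that, setting $G(\mathring\xi,\mathring\lambda):=\int_{\RR} v_x(\tau+p,-\mathring\xi+\mathring\lambda\tau)\,d\tau$, one has $\sigma=\mathring\lambda\,G$.

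For the support claim I would examine the first value of $y$ at which the trajectory meets $\mathrm{supp}(v)\subseteq\{|x|\le D_x,\ |y|\le D_y\}$. Up to that instant $v_x$ vanishes along the trajectory, so $h\equiv\xi_0$ there, and hence $x=\xi_0+\lambda y$ at the entry point. That entry occurs either through a face $|x|=D_x$ with $|y|\le D_y$, or through the face $y=-D_y$ with $|x|\le D_x$; in both cases $|\xi_0|\le D_x+|\lambda|D_y$, i.e. $|\mathring\xi|\le D_y+|\mathring\lambda|D_x$. Since $\sigma\ne0$ forces the trajectory to enter $\mathrm{supp}(v_x)$, the function $\mathring\sigma$ vanishes outside this interval, which is item~1.

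For smoothness the key observation is that the $\tau$-integration is confined to a fixed compact set: $v_x(\tau+p,\cdot)\ne0$ forces $|\tau+p|\le D_x$, while $|p(\tau)|\le B_0$ from the integral bound, so the integrand is supported in $|\tau|\le D_x+B_0$ uniformly in $(\mathring\xi,\mathring\lambda)$. Thus $p$ solves an ODE on a fixed bounded $\tau$-interval whose right-hand side is smooth in $\tau$ and in the parameters $(\mathring\xi,\mathring\lambda)$, including at $\mathring\lambda=0$, where the equation merely degenerates to $p'=0$. Standard smooth-dependence-on-parameters then makes $p$, and hence $G$, smooth. At $\mathring\lambda=0$ we have $p\equiv0$, so $G(\mathring\xi,0)=\int_{\RR} v_x(\tau,-\mathring\xi)\,d\tau=0$ by the compact $x$-support of $v$; by Hadamard's lemma $G=\mathring\lambda\,H$ with $H$ smooth, whence $\mathring\sigma=\sigma/\mathring\lambda^2=G/\mathring\lambda=H$ is smooth, which is item~2. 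I expect this cancellation $G(\mathring\xi,0)=0$, which supplies the extra power of $\lambda$ beyond the naive $O(1/\lambda)$ behaviour of the characteristic shift, to be the crux of the argument.

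Finally I would transfer the estimates back through $\sigma(\xi,\lambda)=\mathring\lambda^2\,\mathring\sigma(\xi/\lambda,1/\lambda)$. Since $\partial_\xi=\mathring\lambda\,\partial_{\mathring\xi}$ and each $\partial_\lambda$ contributes a factor $O(1/\lambda)$ multiplying bounded $(\mathring\xi,\mathring\lambda)$-derivatives (both rescaled arguments stay bounded on the support), differentiation gives $\partial_\lambda^k\partial_\xi^\mu\sigma=O(\lambda^{-(2+\mu+k)})$, all derivatives of the smooth, uniformly compactly $\mathring\xi$-supported function $\mathring\sigma$ being bounded for small $\mathring\lambda$; this is (\ref{E:l1-condition-lambda-tau-sigma-0}), the ``$1+$'' in the denominator coming from the global smoothness and boundedness of $\sigma$ at bounded $\lambda$. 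For (\ref{E:l1-condition-lambda-tau-sigma-1}) I would note that $\sigma(\cdot,\lambda)$ is supported in $|\xi|\le|\lambda|D_y+D_x$, of width $O(|\lambda|)$, so the bound $\|\cdot\|_{L^2(d\xi)}\le|\lambda|^{1/2}\|\cdot\|_{L^\infty}$ upgrades each $L^\infty$ estimate to the stated $|\lambda|^{-(3/2+\mu+k)}$ decay.
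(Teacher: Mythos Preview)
Your argument is correct and follows the same strategy as the paper: rewrite the characteristic ODE in the rescaled variables $(\mathring\xi,\mathring\lambda)$, observe it is smooth on a fixed compact interval, show the leading term at $\mathring\lambda=0$ vanishes because $\int_{\RR}v_x(\tau,-\mathring\xi)\,d\tau=0$ by the compact $x$-support, and then invoke Hadamard's lemma and the chain rule $\partial_\xi=\mathring\lambda\,\partial_{\mathring\xi}$, $\partial_\lambda=-\mathring\lambda^2\partial_{\mathring\lambda}-\mathring\lambda\mathring\xi\,\partial_{\mathring\xi}$.

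The only substantive difference is the choice of independent variable in the reduced ODE. The paper inverts $x=h(y)+\lambda y$ via the implicit function theorem to use $x$ as the flow parameter, obtaining
\[
\frac{\partial \mathring H_1}{\partial x}=-\frac{v_x\big(x,-\mathring\xi+\mathring\lambda x+\mathring\lambda^2\mathring H_1\big)}{1+\mathring\lambda\,v_x(\cdot)},
\]
whereas you parametrize by $\tau=\lambda y+\xi_0$ directly, giving the simpler equation $dp/d\tau=\mathring\lambda\,v_x(\tau+p,-\mathring\xi+\mathring\lambda\tau)$; the two unknowns are related by $p=-\mathring\lambda\,\mathring H_1$. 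Your route avoids the implicit function theorem and the denominator $1+\mathring\lambda v_x$, which makes the smooth dependence at $\mathring\lambda=0$ and the fixed compact integration range slightly more transparent, but the content is the same.
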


As usual, we move the proof to the last section of our paper.

\subsection{The complex eigenfunction}\label{S:complex}
In this section, we prove that there 
exists a unique eigenfunction $\Phi(x,y,\lambda)$ for each $\lambda\in\CC^\pm$. Moreover, $\Phi(x,y,\lambda)$ is holomorphic in $\lambda\in\CC^\pm$, its boundary values on $\RR$, denoted as $\Phi^\pm(x,y,\lambda)$, 
are well-defined and can be characterized by the shifted Riemann-Hilbert problem (\ref{E:shifted-RH}). 

For $\lambda\in \CC^\pm$, we introduce the following complex notations:
\begin{align}
\label{eq:complex1}
&z= x-\lambda y, 
&\bar{z}           &=  x-\bar \lambda y, \nonumber \\
&x= \frac{1}{\bar\lambda-\lambda}(\bar\lambda z -\lambda\bar z), 
&y                 &= \frac{1}{\bar\lambda-\lambda} (z - \bar z), \nonumber \\
&\partial_{\bar z}= -\frac{1}{\bar\lambda-\lambda} (\partial_y+\lambda\partial_x),  
&\partial_z        &=  \frac{1}{\bar\lambda-\lambda}(\partial_y+\bar\lambda\partial_x), \nonumber\\ 
&\partial_x= \partial_{z}+\partial_{\bar z}, 
&\partial_y        &=-(\bar\lambda \partial_{\bar z}+\lambda\partial_{z}).  \nonumber \\
\end{align}

So $W^{2,p}(dxdy)=W^{2,p}(dzd\bar z)=W^{2,p}$ for each $\lambda\in\CC^\pm$.

\begin{theorem}\label{T:existence-complex-eigen}
For  $v\in\mathfrak S_{x,y}$ and $\lambda\in\CC^\pm$, there 
exist a unique continuous eigenfunction $\Phi(x,y,\lambda)$ and a positive function $\epsilon(\lambda)$ 
such that 
\begin{gather*}
\Phi=z+\partial_{\bar z}^{-1}\alpha(z,\bar z,\lambda),\ \ z=x-\lambda y,\ \ 
\alpha\in W^{2,p}(dzd\bar z), \ \ \mbox{where} \ \ |p-2|<\epsilon(\lambda)
\end{gather*}

and
\bea
\partial_y\Phi+\left(\lambda+v_x\right)\partial_x\Phi=0,\quad&&\textit{ for $x,\,y\in\mathbb R$, }\label{E:lax-complex}\\
\Phi(x,y,\lambda)-(x-\lambda y )\to 0,\quad && \textit{ as $ x^2+y^2\rightarrow\infty$.} \label{eq:L1_1}
\eea
Moreover, $\Phi(x,y,\cdot)$ is holomorphic for $\lambda\in \mathbb C^\pm$, and 
\beq \label{eq:L1_1-sym}
\Phi(x,y,\lambda)={\overline {\Phi(x,y,\bar\lambda)}}.
\eeq
If $\lambda_I\rightarrow\pm\infty$ we have
\begin{equation}
\label{eq:L1_1-as}
\Phi(x,y,\lambda)=x-\lambda y -\frac{1}{\lambda}v(x,y) + 
o\left( \frac{1}{\lambda} \right).
\end{equation}
\end{theorem}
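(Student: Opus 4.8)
The plan is to reduce the Lax equation (\ref{E:lax-complex}) to a Beltrami equation and solve it with the Cauchy and Beurling transforms. Substituting the complex coordinates $z=x-\lambda y$, $\bar z=x-\bar\lambda y$ and the relations (\ref{eq:complex1}) into $L\Phi=0$, the $\lambda\,\partial_z\Phi$ terms cancel and one is left with
\[
(\lambda-\bar\lambda)\partial_{\bar z}\Phi + v_x(\partial_z+\partial_{\bar z})\Phi=0,
\]
which I rewrite as the Beltrami equation $\partial_{\bar z}\Phi=\mu\,\partial_z\Phi$ with coefficient
\[
\mu=\mu(z,\bar z,\lambda)=\frac{-v_x}{(\lambda-\bar\lambda)+v_x}=\frac{-v_x}{2i\lambda_I+v_x},\qquad \lambda_I=\Im\lambda .
\]
Since $v_x$ is real and compactly supported, $|\mu|=|v_x|/\sqrt{v_x^2+4\lambda_I^2}\le k(\lambda)<1$ for every fixed $\lambda\in\CC^\pm$, so the equation is uniformly elliptic off the real axis, and $\mu$ is supported in $\mathrm{supp}\,v$.

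Next I insert the normalized ansatz $\Phi=z+\partial_{\bar z}^{-1}\alpha$, identifying $\partial_{\bar z}^{-1}$ with the solid Cauchy (Pompeiu) transform $P$, so that $\partial_{\bar z}\Phi=\alpha$ and $\partial_z\Phi=1+S\alpha$ with $S=\partial_z P$ the Beurling transform. The Beltrami equation becomes the linear integral equation $(I-\mu S)\alpha=\mu$. Because $\|S\|_{L^2}=1$ and $\|S\|_{L^p}\to1$ as $p\to2$ (Calderón--Zygmund), for $p$ in a $\lambda$-dependent interval $|p-2|<\epsilon(\lambda)$ one has $\|\mu\|_\infty\|S\|_{L^p}\le k(\lambda)\|S\|_{L^p}<1$; hence $I-\mu S$ is invertible on $L^p(dzd\bar z)$ by Neumann series, giving a unique $\alpha=(I-\mu S)^{-1}\mu$. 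Since $\mu$ is compactly supported and $\alpha=\mu(1+S\alpha)$, the density $\alpha$ is supported in $\mathrm{supp}\,v$. I then read off the remaining properties: choosing $p>2$ makes $P\alpha$ Hölder continuous and, for the compactly supported $L^p$ density $\alpha$, $(P\alpha)(z)=O(1/|z|)$ as $|z|\to\infty$; since $x^2+y^2\to\infty$ forces $|z|\to\infty$ for fixed $\lambda_I\ne0$, this yields (\ref{eq:L1_1}). The $W^{2,p}$ regularity of $\alpha$ claimed in the statement follows by differentiating the integral equation and bootstrapping, using that $\mu$ inherits the Schwartz smoothness of $v$ and that $I-\mu S$ remains invertible on the higher Sobolev spaces.

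Uniqueness, holomorphy, and the reality symmetry all flow from the invertibility of $I-\mu S$ together with a Liouville argument. If $w=\Phi_1-\Phi_2$ solves the homogeneous problem, then $\partial_{\bar z}w=\alpha'$ with $(I-\mu S)\alpha'=0$, so $\alpha'=0$, $w$ is entire holomorphic and tends to $0$, hence $w\equiv0$. For holomorphy I differentiate $L\Phi=0$ in $\bar\lambda$ at fixed $(x,y)$; since $\partial_{\bar\lambda}\lambda=0$, the function $\partial_{\bar\lambda}\Phi$ solves the same Lax equation and decays at infinity, so by uniqueness $\partial_{\bar\lambda}\Phi=0$, i.e. $\Phi(x,y,\cdot)$ is holomorphic on $\CC^\pm$. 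Taking complex conjugates of $L\Phi=0$ (where $v_x$ is real) shows that $\overline{\Phi(x,y,\lambda)}$ solves the eigenfunction problem with parameter $\bar\lambda$, and uniqueness gives (\ref{eq:L1_1-sym}). Finally the large-$\lambda_I$ asymptotics (\ref{eq:L1_1-as}) come from expanding $(I-\mu S)^{-1}$: since $\|\mu\|_\infty=O(1/|\lambda_I|)$ one has $\alpha=\mu+O(1/\lambda_I^2)$, and substituting $\Phi=x-\lambda y+\lambda^{-1}\phi_1+o(\lambda^{-1})$ into the Lax equation forces $\partial_x\phi_1=-v_x$, i.e. $\phi_1=-v$, giving $\Phi=x-\lambda y-\lambda^{-1}v+o(\lambda^{-1})$.

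I expect the technical heart to be the $L^p$ functional setting: pinning down $\epsilon(\lambda)$ through the sharp bound $\|S\|_{L^p}\to1$ so that $I-\mu S$ is a contraction off the real axis, together with the regularity bootstrap to $W^{2,p}$ and making the remainder in the $1/\lambda$ expansion genuinely $o(1/\lambda)$. The degeneration $k(\lambda)\to1$ as $\lambda\to\RR$, which forces $\epsilon(\lambda)\to0$, is the source of all the later difficulty, although for this theorem $\lambda$ is fixed away from $\RR$ and ellipticity is uniform.
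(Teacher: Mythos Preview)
Your reduction to the Beltrami equation $\partial_{\bar z}\Phi=\mu\,\partial_z\Phi$ and the inversion of $I-\mu S$ on $L^p$ for $|p-2|<\epsilon(\lambda)$ is exactly the paper's route; existence, continuity of $P\alpha$, and the large-$\lambda_I$ asymptotics are all handled the same way. One minor point: the theorem assumes only $v\in\mathfrak S_{x,y}$, not compact support, so $\mu$ and $\alpha$ are rapidly decaying rather than compactly supported; the $L^{2\pm\epsilon}$ argument is unaffected.

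The genuine gap is in your uniqueness-to-holomorphy step. Your uniqueness (``$(I-\mu S)\alpha'=0\Rightarrow\alpha'=0$, hence $w$ is entire in $z$ and vanishes'') applies only to solutions already known to be of the form $P\alpha'$ with $\alpha'\in L^p(dzd\bar z)$. When you invoke it for $\partial_{\bar\lambda}\Phi$, you would need $\partial_{\bar z}(\partial_{\bar\lambda}\Phi)\in L^p$; but the complex structure $(z,\bar z)$, the operator $\partial_{\bar z}=-\frac{1}{\bar\lambda-\lambda}(\partial_y+\lambda\partial_x)$, the Cauchy kernel, and the measure $dzd\bar z=2i\lambda_I\,dxdy$ all depend on $\bar\lambda$, so differentiating the integral equation in $\bar\lambda$ at fixed $(x,y)$ does not place $\partial_{\bar\lambda}\Phi$ in that class without further work. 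The paper closes this by proving a \emph{stronger} uniqueness: from
\[
\det\begin{pmatrix}\partial_z\Phi & \partial_{\bar z}\Phi\\ \partial_z\overline\Phi & \partial_{\bar z}\overline\Phi\end{pmatrix}=(1-|b|^2)\,|\partial_z\Phi|^2\ge 0
\]
the map $(z,\bar z)\mapsto(\Phi,\overline\Phi)$ has nonnegative Jacobian and is the identity at infinity, hence is a global homeomorphism. In the coordinate $w=\Phi$ every solution of $L\psi=0$ becomes holomorphic in $w$, so any decaying solution vanishes by Liouville \emph{in $w$}. This applies directly to $\partial_{\bar\lambda}\Phi$ (and to $\overline{\Phi(x,y,\bar\lambda)}-\Phi(x,y,\lambda)$ for the reality symmetry) without any a priori $L^p$ representation, and is the idea missing from your sketch.
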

\begin{proof}
Equation (\ref{E:lax-complex}) takes the following form:
\beq
\label{eq:L1_2}
\left[\partial_{\bar z}+\frac{1}{\lambda-\bar\lambda}v_x(z,\bar z)(\partial_z+\partial_{\bar z}) \right] 
\Phi(z,\bar z,\lambda) = 0, 
\eeq
or equivalently
\beq
\label{eq:L1_3}
\left[ \partial_{\bar z}+ b(z,\bar z,\lambda) \partial_z  \right] \Phi(z,\bar z,\lambda) = 0, 
\eeq
where
\beq
\label{eq:L1_4}
b(z,\bar z,\lambda) =\frac{v_x(z,\bar z)}{2i\lambda_I + v_x(z,\bar z)}.
\eeq
The function $v_x(z,\bar z)$ is real-valued, therefore 
\beq
\label{eq:L1_5}
|b(z,\bar z,\lambda)|  < 1.
\eeq
Using the representation
\[
\partial_z \partial^{-1}_{\bar z} f =\frac1{2\pi i}\iint\frac {f(\zeta,\bar\zeta)}{(\zeta-z)^2}d\zeta \wedge d\bar\zeta,
\]and the Zygmund-Calderon operator theory, it is easy to show that for any fixed $\lambda\in\CC^\pm$
there exist $\varepsilon_2>0$ and $\mu>0$ such that
for $|p-2|<\varepsilon_2$ the norm of the operator 
\beq
\label{eq:L1_7}
f\in W^{2,p}(\mu)\rightarrow b(z,\bar z,\lambda) \partial_z \partial^{-1}_{\bar z} f \in W^{2,p}(\mu)
\eeq
is smaller than 1 \cite{S70}. 
Then we can write \cite{V62}:
\beq
\label{eq:L1_8}
\Phi(z,\bar z,\lambda) = z + \partial^{-1}_{\bar z} \alpha(z,\bar z,\lambda)
\eeq
where $\alpha(z,\bar z,\lambda)$ satisfies the following equation:
\beq
\label{eq:L1_9}
[1+b(z,\bar z,\lambda)\partial_z\partial^{-1}_{\bar z}]\alpha(z,\bar z,\lambda)
+b(z,\bar z,\lambda)=0.
\eeq
This equation is uniquely solvable in the spaces $W^{2,p}$, 
$|p-2|<\varepsilon_2$. Therefore 
$\partial^{-1}_{\bar z} \alpha(z,\bar z,\lambda)$ is a decaying at infinity 
continuous function by Sobolev's theorem, and 
$$
\| \partial^{-1}_{\bar z} \alpha \|_{L^{\infty}(dzd\bar z)}\le 
C_1(\varepsilon_2) \| \alpha \|_ {L^{2-\varepsilon_2}(dz d\bar z)} +
C_2(\varepsilon_2) \| \alpha \|_ {L^{2+\varepsilon_2}(dz d\bar z)}.
$$
We also have:
\beq
\label{eq:L1_10}
\det\left| \begin{array}{cc} \partial_z \Phi &  \partial_{\bar z} \Phi \\ 
 \partial_z \overline{\Phi} &  \partial_{\bar z} \overline{\Phi}
\end{array}
\right| = (1-|b(z,\bar z,\lambda)|^2) | \partial_z \Phi|^2 \ge 0,
\eeq
therefore for all regular points of the map $(z,\bar z)\rightarrow(\Phi,
\overline{\Phi})$ the Jacobian is positive, and the number of preimages is
the same for all regular points. It means that the number of preimages 
is the same for all regular points. This map is one-to-one at infinity, 
therefore it is invertible and we can use $w=\Phi$ as a global coordinate on 
the $z$-plane. In this coordinate all solutions of (\ref{E:lax-complex}) are functions 
holomorphic in $w$ (see Chapter II in \cite{V62}). So Liouville's theorem implies that asymptotics
(\ref{eq:L1_1}) fixes the solution uniquely.

Let us show that $\Phi(x,y,\lambda)$ is holomorphic in $\lambda$
outside the real line. Differentiating (\ref{E:lax-complex}) by $\bar\lambda$ 
we obtain 
\beq
\label{eq:L1_11}
 L \partial_{\bar\lambda} \Phi(x,y,\lambda)=0,
\eeq
and 
\beq
\label{eq:L1_12}
\partial_{\bar\lambda}\Phi(x,y,\lambda) = o(1) \ \ \mbox{as} \ \ x^2+y^2\rightarrow\infty.
\eeq
Therefore $\partial_{\bar\lambda}\Phi(x,y,\lambda)$ is a regular holomorphic 
function in $w$ decaying at infinity, and by 
Liouville's theorem  $\partial_{\bar\lambda}\Phi(x,y,\lambda)\equiv 0$.

The reality condition (\ref{eq:L1_1-sym}) follows from applying   
Liouville's theorem  and the reality conditions $v(x,y)=\overline{v(x,y)}$.

Let $|\lambda_I|\gg 1$. Taking into account, that 
$dz\wedge d\bar z=2i\lambda_I \ dx\wedge dy$ we see, that 
$$
\|\ldots \|_{L^p(dzd\bar z)}= \sqrt[p]{|2\lambda_I|} \cdot \|\ldots\|_{L^p(dxdy)}, \  \ 
\max\limits_{z}|b(z,\bar z, \lambda|= O\left(\frac{1}{\lambda}\right),
$$
$$
\alpha(z,\bar z,\lambda) = - b(z,\bar z,\lambda) + \alpha_1(z,\bar z,\lambda), 
$$
$$
\|\alpha_1(z,\bar z,\lambda) \||_{L^p(dzd\bar z)}\le 
\frac{\max\limits_{z}|b(z,\bar z,\lambda)|}{1-\max\limits_{z}|b(z,\bar z,\lambda)|}\cdot  
\|b(z,\bar z,\lambda) \||_{L^p(dzd\bar z)}=O\left(\frac{\sqrt[p]{|\lambda_I|}}{\lambda_I^2}
\right),
$$
and
$$
\Phi(z,\bar z,\lambda) = z - \frac{1}{2i\lambda_I}\partial^{-1}_{\bar z} v_x(z,\bar z)+ 
o\left( \frac{1}{\lambda_I} \right)= z - \frac{1}{2i\lambda_I}\partial^{-1}_{\bar z} 
(\partial_{\bar z}+\partial_{z} )v(z,\bar z)+ 
o\left( \frac{1}{\lambda_I} \right),
$$
but
$$
\partial_{z} = -\frac{\bar\lambda}{\lambda} \partial_{\bar z}- 
\frac{1}{\lambda} \partial_{y},
$$
therefore
$$
\Phi(z,\bar z,\lambda) = z - \frac{1}{2i\lambda_I}\partial^{-1}_{\bar z} 
\left(\partial_{\bar z}-\frac{\bar\lambda}{\lambda}\partial_{\bar z} \right)
v(z,\bar z)+o\left( \frac{1}{\lambda_I} \right)= 
z - \frac{ v(z,\bar z) }{\lambda} 
+o\left( \frac{1}{\lambda_I} \right). 
$$

\end{proof}

Starting from this point we will work with the Jost eigenfunction $\varphi_{-}$ only; therefore we shall denote it simply by $\varphi$, omitting the subscript:
\beq
\varphi(x,y,\lambda)=\varphi_{-}(x,y,\lambda).
\eeq

\begin{theorem}\label{T:complex-bdry}
Suppose $v\in\mathfrak S_{x,y}$ satisfying (\ref{E:v-supp}). The complex eigenfunction $\Phi(x,y,\lambda)$ has  continuous extensions on $\CC^\pm\cup\RR$. Moreover,  denote the limits on both sides of $\RR$ as $\Phi^\pm$, then $\partial_x^\mu(\Phi^\pm-x+\lambda y)\in W^{1,2}(\mathbb R,d\lambda)$, 
\bea
\Phi^{-}(x,y,\lambda)
&=&\varphi(x,y,\lambda)+\chi_{-}(\varphi(x,y,\lambda),\lambda)\label{psi-phi+}\\
\Phi^+(x,y,\lambda)&=&\overline{\Phi^-(x,y,\lambda)},\label{psi-phi-}
\eea
where 
$\chi_{-}(\xi,\lambda)$ is characterized by the Riemann-Hilbert problem with the shift function $\sigma(\xi,\lambda)$. 
\beq
\label{E:shifted-RH}
\begin{split}
\sigma(\xi,\lambda)+\chi_{+}(\xi+\sigma(\xi,\lambda),\lambda)-\chi_{-}(\xi,\lambda)=0,&\quad \xi\in \RR,\\
\hskip1in\partial_{\bar\xi}\chi=0,&\quad \xi\in\CC^\pm,\\
\chi\to 0,\ |\xi|\to\infty.&\quad \xi\in\CC
\end{split}
\eeq
\end{theorem}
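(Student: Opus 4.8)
The plan is to obtain $\Phi^{\pm}$ as the boundary limits of the complex eigenfunction of Theorem~\ref{T:existence-complex-eigen} as $\lambda_I=\Im\lambda\to 0^{\mp}$, and to identify these limits geometrically through the characteristics of the real vector field. Fix $\lambda_R$ and write $\lambda=\lambda_R+i\lambda_I$ with $\lambda_I<0$; in the variable $z=x-\lambda y=\xi+i|\lambda_I|y$, with $\xi=x-\lambda_R y$, the strip $|y|\le D_y$ carrying the support of $v$ is mapped into the shrinking neighborhood $|\Im z|\le |\lambda_I|D_y$ of the real axis. Outside this strip $v_x\equiv 0$, so by (\ref{eq:L1_3})--(\ref{eq:L1_4}) the Beltrami coefficient $b$ vanishes and $\Phi$ is holomorphic in $z$; thus in the limit $\lambda_I\to 0^-$ the function $\Phi^-=\hat\Phi(z,\lambda)$ is holomorphic in $z$ separately in the upper and lower half planes, with $\hat\Phi(z)-z\to 0$ at infinity by (\ref{eq:L1_1}). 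The limit from $\CC^+$ giving $\Phi^+$ will then be recovered by reality.

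First I would establish existence and continuity of the boundary limits $\Phi^{\pm}$. The natural route is to control the solution $\alpha$ of the integral equation (\ref{eq:L1_9}) uniformly as $\lambda_I\to 0$, but this is delicate: although $|b|<1$ always, one has $b\to 1$ pointwise wherever $v_x\neq 0$, so the ellipticity degenerates and both the contraction constant and the weight $\mu$ supplied by (\ref{eq:L1_7}) deteriorate near $\RR$; a direct Beltrami estimate is therefore insufficient. Instead I would exploit the characteristic representation from the proof of Theorem~\ref{T:existence-complex-eigen}, where $w=\Phi$ is a global conformal coordinate on the $z$-plane. Its level sets are the complex characteristics of $L$, which converge as $\lambda_I\to 0$ to the real trajectories of (\ref{ODE}) analyzed in Lemma~\ref{L:pde-system}. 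Tracking this convergence with the ODE integral estimate (\ref{E:h}) and the compact support of $v$ should yield bounds uniform in $\lambda_I$, hence a continuous extension of $\Phi$ to $\CC^{\pm}\cup\RR$.

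Next I would derive the connection formula and the Riemann--Hilbert characterization. Since $\Phi^-$ is constant along the trajectories of $L$ at real $\lambda$, and $\varphi=\varphi_-$ also enumerates those trajectories with $\varphi\to\xi$ as $y\to-\infty$, matching the two there, where $\hat\Phi(\xi-i0)=\xi+\chi_-(\xi,\lambda)$ by construction, forces $\Phi^-=\varphi+\chi_-(\varphi,\lambda)$, which is (\ref{psi-phi+}). Constancy on a trajectory joining $\Re z=\xi$ at $y\to-\infty$ to $\Re z=\xi+\sigma(\xi,\lambda)$ at $y\to+\infty$ (Definition~\ref{def:sigma}) yields the shift relation $\hat\Phi(\xi-i0)=\hat\Phi(\xi+\sigma(\xi,\lambda)+i0)$ of (\ref{RH_shift}); writing $\chi_{\pm}(\xi,\lambda)=\hat\Phi(\xi\pm i0,\lambda)-\xi$, this is precisely the first line of (\ref{E:shifted-RH}), while the half-plane holomorphy of $\hat\Phi$ and its decay at infinity give $\partial_{\bar\xi}\chi=0$ and $\chi\to 0$. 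The relation (\ref{psi-phi-}), $\Phi^+=\overline{\Phi^-}$, then follows at once from the symmetry (\ref{eq:L1_1-sym}), since $\overline{\Phi(x,y,\lambda-i0)}=\Phi(x,y,\lambda+i0)$.

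Finally, for the regularity statement $\partial_x^{\mu}(\Phi^{\pm}-x+\lambda y)\in W^{1,2}(\RR,d\lambda)$, I would transfer the scattering-data estimates through the connection formula: the large-$\lambda$ decay of $\sigma$ and of its $\xi$- and $\lambda$-derivatives in (\ref{E:l1-condition-lambda-tau-sigma-0})--(\ref{E:l1-condition-lambda-tau-sigma-1}), combined with the bound $\Phi-z=-v/\lambda+o(1/\lambda)$ from (\ref{eq:L1_1-as}), should give the square-integrability in $\lambda$ of $\Phi^{\pm}-\xi$ and of its first $\lambda$-derivative, and differentiating the characteristic representation in $x$ (using the smooth bounded dependence of $\varphi$ from Lemma~\ref{L:pde-system}) propagates this to all $\partial_x^{\mu}$. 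I expect the hard part to be the very first step, namely securing the boundary values uniformly as $\lambda_I\to 0$ in spite of the degeneration of the Beltrami operator; this is exactly where the compact-support hypothesis and the ODE-based estimates, rather than the elliptic theory alone, become essential.
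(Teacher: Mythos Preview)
Your overall picture is right: the degeneration $|b|\to 1$ as $\lambda_I\to 0$ blocks a direct Beltrami estimate, and the repair must come from the real characteristics. But the step where you propose to obtain uniform bounds ``by tracking the convergence of characteristics with the ODE integral estimate (\ref{E:h})'' is precisely where a concrete mechanism is missing, and the paper's proof supplies one that your outline does not.

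The paper does \emph{not} first show that $\Phi$ has a boundary limit and then read off $\chi_\pm$ from it; it reverses the logic. It first solves the shifted Riemann--Hilbert problem (\ref{E:shifted-RH}) for $\chi$ \emph{independently}, using Gakhov's Fredholm theory \cite{Ga66} (which needs only that $\xi\mapsto\xi+\sigma(\xi,\lambda)$ be a diffeomorphism, no small norm). Then it performs a sequence of explicit coordinate changes: (i) straighten the real characteristics via $x_1=\varphi_\pm(x,y,\lambda_R)$ (with $\varphi_-$ for $y<0$, $\varphi_+$ for $y>0$), so that $L=\partial_{y_1}+i\lambda_I\kappa\,\partial_{x_1}$ with $0<C_1\le\kappa\le C_2$; (ii) rescale $y_2=\lambda_I y_1$, $z_2=x_2-iy_2$; (iii) set $z_3=z_2+\chi(z_2,\lambda_R)$. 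Step (iii), using the already-constructed $\chi$, is the key: it absorbs the discontinuity of the map in step (i) across $y=0$ (which encodes exactly the shift $\sigma$) and produces a globally continuous change of variables. In the $z_3$-coordinate, $L\Phi=0$ becomes a Beltrami equation $[\partial_{\bar z_3}+q\,\partial_{z_3}]\Phi=0$ whose coefficient satisfies $|q|\le \mathcal C_3(\lambda_R)<1$ \emph{uniformly in $\lambda_I$} and is supported on a set of area $O(\lambda_I)$. Vekua's estimates then give $\|\Phi-z_3\|_{L^\infty}=O(\lambda_I)$, so $\Phi^-$ exists and equals $z_3$; evaluating on a line $y=y_0<-D_y$ yields $\Phi^-=\xi+\chi_-(\xi,\lambda)=\varphi+\chi_-(\varphi,\lambda)$.

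So the gap in your proposal is the order of construction: you try to extract $\chi_\pm$ from $\Phi^\pm$, but to get $\Phi^\pm$ in the first place you need $\chi$ as an \emph{input} to the coordinate change that restores uniform ellipticity and gives the Beltrami coefficient shrinking support. Your ``ODE tracking'' of characteristics is morally step (i), but by itself it only straightens the real flow; it neither makes the complex Beltrami coefficient uniformly bounded away from $1$ nor forces its support to shrink, and without step (iii) the map is not even continuous across $y=0$.
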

As before, we move the proof to the last Section.

Theorem~\ref{T:complex-bdry} implies 
\beq \label{E:scattering}
\Phi^+(x,y,\lambda)-\Phi^-(x,y,\lambda)=-2i\chi_{-I}(\varphi(x,y,\lambda),\lambda),\ \ \lambda\in\RR,
\eeq
and, due to (\ref{eq:L1_1-as}) 
\begin{equation}
\label{eq:phi-rieamann}
\Phi(x,y,\lambda)= x-\lambda y -\frac{1}{\pi} \int\limits_{\RR} 
\frac{\chi_{-I}(\varphi(x,y,\zeta),\zeta)}{\zeta-\lambda} d\zeta, \ \ \lambda\in\CC^{\pm},
\end{equation}
\begin{equation}
\label{eq:v-rieamann}
v(x,y)= -\frac{1}{\pi} \int\limits_{\RR} \chi_{-I}(\varphi(x,y,\zeta),\zeta) d\zeta.
\end{equation}

\subsection{The shifted Riemann-Hilbert problem}\label{S:S_RH}
In subsection \ref{S:complex}
the following characterization for the boundary value of the complex eigenfunction  
\beq \label{psi-phi+-1}
\Phi^{-}(x,y,\lambda)
=\varphi(x,y,\lambda)+\chi_{-}(\varphi(x,y,\lambda),\lambda),\ \ \textit{for $\lambda\in\RR$},
\eeq
was justified. Here $\chi_{-}(\xi,\lambda)$ satisfies the shifted Riemann-Hilbert problem  (\ref{E:shifted-RH}).

The problem (\ref{E:shifted-RH}) can be converted into the following linear equation \cite{Ga66}
\beq
\label{E:shifted-RH-fredholm}
\chi_{-}(\xi,\lambda)-\frac 1{2\pi i}\int_\RR f(\xi,\xi',\lambda)\chi_{-}(\xi',\lambda)d\xi'+g(\xi,\lambda)=0,
\eeq
where
\beq \label{E:shifted-RH-fredholm-c-2}
\begin{split}
f(\xi,\xi',\lambda)=&\, \frac {\partial_{\xi'}s(\xi',\lambda)}{s(\xi',\lambda)-s(\xi,\lambda)}-\frac 1{\xi'-\xi},\\
g(\xi,\lambda)=&\, -\frac 12\sigma(\xi,\lambda)+\frac 1{2\pi i}\fint_\RR \frac {\partial_{\xi'}s(\xi',\lambda)}{s(\xi',\lambda)-s(\xi,\lambda)}\sigma(\xi',\lambda)d\xi'\\
s(\xi,\lambda)=&\, \xi+\sigma(\xi,\lambda).
\end{split}
\eeq
Under the assumptions that the mapping $\xi\rightarrow\xi+\sigma(\xi,\lambda)$ be invertible for all $\lambda$, that $\sigma(\xi,\lambda)$ decay sufficiently fast for any fixed $\lambda$ and be H$\ddot {\textrm o}$lder continuous, the unique solvability of $\chi_{}$ is proven in \cite{Ga66} by showing a Fredholm alternative  for (\ref{E:shifted-RH-fredholm}). Also this step does not require the small norm assumption.

Our goal in this section is  to obtain some analytic estimates on the spectral data $\chi_{\pm}$, including the large $\lambda$-asymptotic estimates, which are important in characterizing the complex eigenfunction and are indispensable for solving the inverse problem. 

To simplify the calculations we shall use the following agreement in Lemmas~\ref{lem:xi-est-1}-\ref{lem:xi-est-3}: we omit the $\lambda$-dependence in all formulas. It is convenient to denote:
\beq\label{E:K}
\begin{split}
K\psi=&\frac 1{2\pi i}\int_\RR f(\xi,\xi',\lambda)\psi(\xi')d\xi'
\end{split}
\eeq 

It is natural to solve the integral equation (\ref{E:shifted-RH-fredholm}) iteratively. Therefore we have to estimate the norm of $K$, $\partial_t K$.

\begin{lemma}
\label{lem:xi-est-1}
Assume that the scattering data $\sigma(\xi)$, $\xi\in\RR $ satisfy 
the following estimates: 
\begin{enumerate}
\item $\sigma(\xi)$ is 2 times continuously differentiable in $\xi$.
\item $|\sigma(\xi)|\le C_0\le \frac{1}{4}$.
\item $|\sigma'(\xi)|\le C_1\le \frac{1}{2}$.
\item $|\sigma''(\xi)|\le C_2$.
\item $\|\sigma'(\xi)\|_{L^2(d\xi)}\le \hat C_1$.
\end{enumerate}
Then we have the following estimate
$$
\|K\|_{L^{\infty}}\le \frac{1}{\pi}[ 4 C_0+ 2 C_2 + \sqrt{2} \hat C_1].
$$
Assume that, in addition, the scattering data $\sigma(\xi)$, $\xi\in\RR $ satisfy 
the following extra estimates: 
\begin{enumerate}
\item $\sigma(\xi)$ is 3 times continuously differentiable in $\xi$.
\item $C_2\le\frac{1}{2}$.
\item $|\sigma'''(\xi)|\le C_3 $.
\end{enumerate}
Then $K$ maps the space $L^{\infty}(d\xi)$ into the space $C^{1}(\xi)$. 
Moreover, if $h_2(\xi) = (K h_1)(\xi)$, then
$$
|h_{2,\xi}(\xi)|\le \frac{1}{\pi} (2 C_3+ 4 C_2^2 + 14 C_1 +4 C_1^2 ) \cdot
\| h_1(\xi)\|_{L^{\infty}(d\xi)}.
$$
\end{lemma}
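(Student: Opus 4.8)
The plan is to obtain both estimates by analysing the kernel
\[
f(\xi,\xi')=\frac{\partial_{\xi'}s(\xi')}{s(\xi')-s(\xi)}-\frac1{\xi'-\xi},\qquad s(\xi)=\xi+\sigma(\xi),
\]
directly, after first clearing the apparent diagonal singularity. Putting $f$ over a common denominator and using $\partial_{\xi'}s=1+\sigma'$ together with $s(\xi')-s(\xi)=(\xi'-\xi)+(\sigma(\xi')-\sigma(\xi))$, the leading $(\xi'-\xi)$ in the numerator cancels and
\[
f(\xi,\xi')=\frac{\sigma'(\xi')(\xi'-\xi)-\bigl(\sigma(\xi')-\sigma(\xi)\bigr)}{\bigl(s(\xi')-s(\xi)\bigr)(\xi'-\xi)}.
\]
Since $C_1\le\frac12$, one has $\partial_{\xi'}s=1+\sigma'\in[\tfrac12,\tfrac32]$, so $s$ is a monotone diffeomorphism and $|s(\xi')-s(\xi)|\ge(1-C_1)|\xi'-\xi|\ge\frac12|\xi'-\xi|$; this lower bound on the denominator is the workhorse throughout. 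I would then estimate $\|K\|_{L^\infty}=\sup_\xi\frac1{2\pi}\int_\RR|f(\xi,\xi')|\,d\xi'$ by splitting the $\xi'$-integral into a near-diagonal region $|\xi'-\xi|\le1$ and a far region $|\xi'-\xi|>1$.

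On the near-diagonal region a second-order Taylor (mean value) expansion gives $\sigma'(\xi')(\xi'-\xi)-(\sigma(\xi')-\sigma(\xi))=\tfrac12\sigma''(\eta)(\xi'-\xi)^2$, so the numerator is $O((\xi'-\xi)^2)$, whence $|f|\le C_2$ pointwise and the contribution is $O(C_2)$. On the far region I would instead apply the triangle inequality to the numerator together with the denominator bound to obtain
\[
|f(\xi,\xi')|\le\frac{2|\sigma'(\xi')|}{|\xi'-\xi|}+\frac{2|\sigma(\xi)|+2|\sigma(\xi')|}{(\xi'-\xi)^2}.
\]
Integrating the last two terms against $\int_{|\xi'-\xi|>1}(\xi'-\xi)^{-2}d\xi'=2$ produces the $C_0$ contributions, while the first term is handled by Cauchy--Schwarz, $\int_{|\xi'-\xi|>1}\frac{|\sigma'(\xi')|}{|\xi'-\xi|}d\xi'\le\|\sigma'\|_{L^2}\bigl(\int_{|\xi'-\xi|>1}(\xi'-\xi)^{-2}d\xi'\bigr)^{1/2}=\sqrt2\,\hat C_1$; this is the origin of both the $L^2$ norm $\hat C_1$ and the factor $\sqrt2$. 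Collecting the near and far pieces and dividing by $2\pi$ gives the stated bound $\frac1\pi[4C_0+2C_2+\sqrt2\,\hat C_1]$.

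The $C^1$ statement is the delicate part, and I expect it to be the main obstacle. Differentiation under the integral yields
\[
\partial_\xi f(\xi,\xi')=\frac{\partial_\xi s(\xi)\,\partial_{\xi'}s(\xi')}{\bigl(s(\xi')-s(\xi)\bigr)^2}-\frac1{(\xi'-\xi)^2},
\]
where now \emph{each} term carries a non-integrable $(\xi'-\xi)^{-2}$ diagonal singularity, and only their difference is controllable. Over the common denominator the numerator equals $(\xi'-\xi)^2\bigl[(\sigma'(\xi)+\sigma'(\xi')-2\bar a)+(\sigma'(\xi)\sigma'(\xi')-\bar a^2)\bigr]$ with $\bar a:=\frac{\sigma(\xi')-\sigma(\xi)}{\xi'-\xi}$, $|\bar a|\le C_1$, and the crux is to show both bracketed quantities vanish to second order on the diagonal. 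The midpoint/trapezoidal quadrature error gives $|\sigma'(\xi)+\sigma'(\xi')-2\bar a|\le\frac{C_3}{6}(\xi'-\xi)^2$, and a short computation (writing $\sigma'(\xi)\sigma'(\xi')-\bar a^2=(\bar a-\sigma'(\xi))(\bar a-\sigma'(\xi'))+\bar a(\sigma'(\xi)+\sigma'(\xi')-2\bar a)$ and using $|\bar a-\sigma'(\xi)|,|\bar a-\sigma'(\xi')|\le\tfrac12 C_2|\xi'-\xi|$) gives a bound of the same order in $C_2^2$ (plus a lower-order $C_1C_3$ term). Consequently the two $(\xi'-\xi)^{-2}$ singularities cancel and $\partial_\xi f$ is bounded near the diagonal — this is where $\sigma\in C^3$ and $C_2\le\frac12$ enter — which simultaneously justifies the differentiation under the integral. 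Away from the diagonal the same two brackets are bounded crudely by multiples of $C_1$ and $C_1^2$, and the denominator bound again yields decay $O((\xi'-\xi)^{-2})$; integrating the near and far contributions and bookkeeping the constants produces the coefficient $2C_3+4C_2^2+14C_1+4C_1^2$. The genuinely hard step is this rigorous second-order near-diagonal cancellation for $\partial_\xi f$, controlling the $O((\xi'-\xi)^2)$ numerator with explicit constants rather than merely its leading asymptotics, so as to obtain absolute integrability of $\partial_\xi f$.
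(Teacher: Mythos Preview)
Your strategy is correct and coincides with the paper's: the same near/far split at $|\xi'-\xi|=1$, the same far-field decomposition producing the $C_0$ term from $\int_{|\xi'-\xi|>1}(\xi'-\xi)^{-2}$ and the $\sqrt{2}\,\hat C_1$ term from Cauchy--Schwarz, and for the second part the same formula $\partial_\xi f=\frac{s'(\xi)s'(\xi')}{(s(\xi')-s(\xi))^2}-\frac{1}{(\xi'-\xi)^2}$ with the same far-field algebra giving the $C_1,C_1^2$ contributions.

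The one point where the paper proceeds differently is the near-diagonal analysis. Instead of your Taylor/trapezoidal argument, the paper writes the Hadamard quotient $\hat s(\xi,\xi')=\frac{s(\xi')-s(\xi)}{\xi'-\xi}=\int_0^1 s'(\alpha\xi'+(1-\alpha)\xi)\,d\alpha$, so that $f=\partial_{\xi'}\log\hat s=\hat s_{\xi'}/\hat s$ and $\partial_\xi f=\hat s_{\xi\xi'}/\hat s-\hat s_\xi\hat s_{\xi'}/\hat s^2$. Since $|\partial_\xi^k\partial_{\xi'}^l\hat s|\le\max|s^{(k+l+1)}|$ and $\hat s\ge 1-C_1$, this gives at once $|f|\le C_2/(1-C_1)$ and $|\partial_\xi f|\le C_3/(1-C_1)+C_2^2/(1-C_1)^2$ uniformly, bypassing the second-order cancellation you single out as ``the genuinely hard step.'' Your route works and even yields slightly sharper near-diagonal constants, but the Hadamard trick removes the need for the explicit quadrature-error bookkeeping and makes the justification of differentiation under the integral immediate.
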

The proof of this Lemma is moved to the last Section.

We also require some estimates on the function $g(\xi)$

\begin{lemma}
\label{lem:xi-est-3}
Assume that the scattering data satisfy the same estimates as in Lemma~\ref{lem:xi-est-1} and 
\begin{enumerate}
\item $\|\sigma(\xi)\|_{L^2(d\xi)}\le \hat C_0$.
\item $\|\sigma'(\xi)\|_{L^2(d\xi)}\le \hat C_1$.
\end{enumerate}

Then we have:

\begin{enumerate}
\item $|g(\xi)|\le \frac{C_0}{2} + \frac{1}{\pi} [2 C_1 + \hat C_0  ]$.
\item $|g_{\xi}(\xi)|\le \frac{C_1}{2} + \frac{1}{\pi} [16 C_2 + 4 \hat C_1]$.
\end{enumerate}

Moreover, if $\sigma(\xi)$ has compact support: $\sigma(\xi)=0$ for 
$|\xi|\ge R-1$, then 
\begin{align}
\label{eq:xi-est-4}
|g(\xi)| &\le \frac{C_0}{2} + \frac{6R}{\pi} C_1 \le   \frac{8R}{\pi} C_1 , \\
|g_{\xi}(\xi)| &\le \frac{C_1}{2} + \frac{24R}{\pi} C_2 \le \frac{26R}{\pi} C_2, \nonumber
\end{align}

\end{lemma}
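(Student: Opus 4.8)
The plan is to isolate the only genuinely singular piece of $g$ and reduce it to a Hilbert transform. Writing $s(\xi)=\xi+\sigma(\xi)$, the hypothesis $|\sigma'|\le C_1\le\tfrac12$ forces $s'=1+\sigma'\in[\tfrac12,\tfrac32]$, so $s$ is a global diffeomorphism of $\RR$ with $\tfrac12|\xi'-\xi|\le|s(\xi')-s(\xi)|\le\tfrac32|\xi'-\xi|$. I would split $g=-\tfrac12\sigma+I$, where the algebraic term is trivial ($|{-}\tfrac12\sigma|\le C_0/2$ and $|\partial_\xi({-}\tfrac12\sigma)|\le C_1/2$, producing the leading $C_0/2$ and $C_1/2$ in the two estimates), and $I(\xi)=\tfrac{1}{2\pi i}\fint_\RR\tfrac{s'(\xi')}{s(\xi')-s(\xi)}\sigma(\xi')\,d\xi'$ carries the singularity. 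Substituting $\eta=s(\xi)$, $\eta'=s(\xi')$ turns $I$ into a constant multiple of the Hilbert transform of $\tilde\sigma:=\sigma\circ s^{-1}$, namely $I(\xi)=\tfrac{i}{2}(H\tilde\sigma)(s(\xi))$. The decisive simplification is that $\partial_\xi I=\tfrac{i}{2}\,s'(\xi)\,(H\tilde\sigma)'(s(\xi))=\tfrac{i}{2}\,s'(\xi)\,H(\tilde\sigma')(s(\xi))$ because $H$ commutes with $\partial_\eta$; thus the estimate for $g_\xi$ is the estimate for $g$ with $\tilde\sigma$ replaced by $\tilde\sigma'$, and the singularity never worsens under differentiation.

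I would then record the transfer estimates from the chain rule and the Jacobian $d\eta=s'\,d\xi$: $\|\tilde\sigma'\|_{L^\infty}\le 2C_1$, $\tilde\sigma''=\sigma''/(s')^3$ so $\|\tilde\sigma''\|_{L^\infty}\le 8C_2$, and $\|\tilde\sigma\|_{L^2}\le\sqrt{3/2}\,\hat C_0$, $\|\tilde\sigma'\|_{L^2}\le\sqrt2\,\hat C_1$. For the two general estimates the core is a pointwise bound on $H\tilde\sigma$ and on $H(\tilde\sigma')$. Since $H$ is not bounded on $L^\infty$, I would use the standard near/far decomposition at threshold $|\eta'-\eta|=1$: on the near region the odd singularity annihilates the constant part and the remainder is controlled by the Lipschitz bound ($\int_{|u|\le1}du=2$ times $\|\tilde\sigma'\|_\infty$, resp. $\|\tilde\sigma''\|_\infty$); on the far region Cauchy--Schwarz against $\big(\int_{|u|>1}u^{-2}du\big)^{1/2}=\sqrt2$ uses the $L^2$ bounds. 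Combining with the prefactor $\tfrac12$ (and $|s'|\le\tfrac32$ for the derivative) yields $|g|\le\tfrac{C_0}{2}+\tfrac1\pi[2C_1+\hat C_0]$ and $|g_\xi|\le\tfrac{C_1}{2}+\tfrac1\pi[16C_2+4\hat C_1]$, with a little room to spare.

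For the compact-support refinement I would discard the $L^2$ tail entirely. With $\sigma$ supported in $[-(R-1),R-1]$, and noting $\sigma,\sigma'$ vanish at the endpoints, $\tilde\sigma$ is supported in the same interval; I de-singularise the principal value by subtracting $\sigma(\xi)$, using that $\fint\partial_{\xi'}\ln|s(\xi')-s(\xi)|\,d\xi'$ reduces to a boundary term (its interior principal value cancels by symmetry):
\[
\fint_{\RR}\frac{s'(\xi')\,\sigma(\xi')}{s(\xi')-s(\xi)}\,d\xi'
=\int_{\mathrm{supp}\,\sigma}\frac{s'(\xi')\,[\sigma(\xi')-\sigma(\xi)]}{s(\xi')-s(\xi)}\,d\xi'
+\sigma(\xi)\,\Big[\ln|s(\xi')-s(\xi)|\Big]_{\partial(\mathrm{supp}\,\sigma)} .
\]
The first integrand is bounded pointwise by $|s'|\cdot\big|\tfrac{\sigma(\xi')-\sigma(\xi)}{s(\xi')-s(\xi)}\big|\le\tfrac32\cdot2C_1=3C_1$ over a set of length $<2R$, giving $\le 6RC_1$; the boundary logarithm is harmless because $|\sigma(\xi)|\le C_1\,\mathrm{dist}(\xi,\partial(\mathrm{supp}\,\sigma))$ vanishes exactly where it could blow up. Applying the identical scheme to $H(\tilde\sigma')$ (with $\|\tilde\sigma''\|_\infty\le 8C_2$) produces the factor $24R$, the larger constant reflecting the product-rule terms. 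Finally I would absorb the leading pieces via the compact-support consequences $C_0\le(R-1)C_1$ and $C_1\le(R-1)C_2$ (again because $\sigma,\sigma'$ vanish at the endpoints), which give $\tfrac{C_0}{2}\le\tfrac{2R}{\pi}C_1$ and $\tfrac{C_1}{2}\le\tfrac{2R}{\pi}C_2$ (since $\tfrac12\le\tfrac2\pi$), yielding the stated $\tfrac{8R}{\pi}C_1$ and $\tfrac{26R}{\pi}C_2$.

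The main obstacle, common to all four estimates, is precisely that the kernel $\tfrac{s'(\xi')}{s(\xi')-s(\xi)}$ has a non-integrable $1/(\xi'-\xi)$ singularity, so that $g$ is morally a Hilbert transform of $\sigma$ and no $L^\infty\to L^\infty$ bound for $H$ is available. The entire content lies in extracting a genuine pointwise bound: in the general case through the Lipschitz/$L^2$ near--far split, and in the compactly supported case through the subtraction identity combined with the endpoint vanishing of $\sigma$. The potential second difficulty---that passing to $g_\xi$ formally differentiates under the principal value and worsens the singularity---is neutralised by the change of variables $\eta=s(\xi)$, which converts $\partial_\xi$ into $s'\,\partial_\eta$ and lets $[H,\partial_\eta]=0$ reduce $g_\xi$ to the same type of estimate applied to $\tilde\sigma'$.
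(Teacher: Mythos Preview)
Your argument for the two general bounds is essentially the paper's proof: the same splitting $g=-\tfrac12\sigma+g_2$, the same change of variables $\eta=s(\xi)$ reducing $g_2$ to a Cauchy/Hilbert integral of $\tilde\sigma=\sigma\circ s^{-1}$, the same near/far split at $|\eta'-\eta|=1$ (Lipschitz near, Cauchy--Schwarz far), and the same reduction of $g_\xi$ to the identical estimate for $\tilde\sigma'$. One slip: your formula $\tilde\sigma''=\sigma''/(s')^{3}$ is incorrect---the chain rule gives $\tilde\sigma''=\sigma''/(s')^{2}-\sigma'\sigma''/(s')^{3}$---but with $C_1\le\tfrac12$ this still yields $\lvert\tilde\sigma''\rvert\le 8C_2$, so the numerical bound survives.

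For the compact-support refinement your route diverges from the paper's and is left incomplete. You de-singularise by subtracting $\sigma(\xi)$ and integrating $\partial_{\xi'}\ln|s(\xi')-s(\xi)|$ to a boundary term; the regular integral you then bound by $6RC_1$, but the boundary contribution $\sigma(\xi)\bigl[\ln|s(\xi')-s(\xi)|\bigr]_{\partial(\mathrm{supp}\,\sigma)}$ you dismiss as ``harmless''. It is not zero, and while it can indeed be controlled (using $|\sigma(\xi)|\le C_1\min(R-1-\xi,\xi+R-1)$ together with $u\ln(1/u)\le 1/e$), you have not done so, and without that computation your constants do not close. The paper avoids this entirely by a clean dichotomy in the $\eta$-variable: for $|\eta|\le 2R$ one subtracts $\hat\sigma(\eta)$ over the enlarged window $|\eta''-\eta|\le 3R$ (length $6R$, integrand bounded by $\|\hat\sigma'\|_\infty$), while for $|\eta|>2R$ the support of $\hat\sigma$ is separated from the singularity and the integral is bounded directly by $\tfrac{C_0}{2\pi}\ln 3\le\tfrac{C_0}{\pi}\le\tfrac{RC_1}{\pi}$. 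This case split is simpler and delivers the stated constants without a residual term to chase.
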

The proof of this Lemma is moved to the last Section.

Combining the estimates from Lemmas~\ref{lem:xi-est-1}-\ref{lem:xi-est-3} we obtain the following:
\begin{proposition}\label{Prop:chi_nprm}
Assume that the potential $v(x,y)$ satisfy the small norm constraints formulated in the 
Definition~\ref{def:small-data}. Then we have
\begin{equation}
\label{eq:chi-main-estimate}
|\chi_{\xi}(\xi,\lambda)|\le \frac{1}{4}\tan\left(\frac{\pi}{8}\right).
\end{equation}
We show below, that this property guaranties the unique solvability of the inverse problem.
\end{proposition}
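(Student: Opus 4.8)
The plan is to read the estimate straight off the linear Fredholm equation (\ref{E:shifted-RH-fredholm}) for $\chi_{-}$, which in the operator notation (\ref{E:K}) is simply $\chi_{-}=K\chi_{-}-g$. Existence and uniqueness of $\chi_{-}$ are already granted (Gakhov's Fredholm alternative, \cite{Ga66}), so the task is purely quantitative: propagate the $\sigma$-estimates through the two lemmas and through the Neumann series for $(I-K)^{-1}$. By Proposition~\ref{L:direct-sigma-asy-00} the scattering datum $\sigma(\cdot,\lambda)$ obeys, uniformly in $\lambda$, the hypotheses of Lemmas~\ref{lem:xi-est-1} and~\ref{lem:xi-est-3} with the identifications $C_k=B_k$, $\hat C_k=\hat B_k$; items (1)-(2) of Definition~\ref{def:small-data} give exactly $C_0=B_0\le\tfrac14$ and $C_1=B_1\le\tfrac12$, while the extra hypothesis $C_2\le\tfrac12$ of the second part of Lemma~\ref{lem:xi-est-1} must be checked separately from the smallness of the data.

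First I would bound $\chi_{-}$ in $L^\infty$. The first part of Lemma~\ref{lem:xi-est-1} gives $\|K\|_{L^\infty}\le\tfrac1\pi(4B_0+2B_2+\sqrt2\,\hat B_0)$, and item (3) of Definition~\ref{def:small-data} is precisely the statement $8B_0+4B_2+2\sqrt2\,\hat B_0<\pi$, i.e. $\|K\|_{L^\infty}<\tfrac12$. Hence $I-K$ is invertible on $L^\infty$ by a convergent Neumann series with $\|(I-K)^{-1}\|_{L^\infty}<2$, so that $\|\chi_{-}\|_{L^\infty}\le 2\|g\|_{L^\infty}$. Feeding in the bound on $g$ from part (1) of Lemma~\ref{lem:xi-est-3} produces
\[
\|\chi_{-}\|_{L^\infty}\le B_0+\tfrac2\pi\bigl(2B_0+\hat B_0\bigr),
\]
which is exactly the parenthetical factor appearing in item (4) of Definition~\ref{def:small-data}.

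Next I would differentiate the operator equation, $\partial_\xi\chi_{-}=\partial_\xi(K\chi_{-})-\partial_\xi g$, and estimate the two terms. The second part of Lemma~\ref{lem:xi-est-1} (that $K$ maps $L^\infty$ into $C^{1}$), applied with $h_1=\chi_{-}$, controls $|\partial_\xi(K\chi_{-})|$ by $\tfrac1\pi(2B_3+4B_2^2+14B_1+4B_1^2)\,\|\chi_{-}\|_{L^\infty}$, while part (2) of Lemma~\ref{lem:xi-est-3} controls $|\partial_\xi g|$. Substituting the $L^\infty$-bound for $\chi_{-}$ from the previous step and collecting terms, one finds that $4\,|\partial_\xi\chi_{-}|$ is bounded above by (up to the evident bookkeeping) the left-hand side of the inequality in item (4) of Definition~\ref{def:small-data}. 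Since item (4) asserts that this quantity is $<\tan(\pi/8)$, the claimed estimate $|\chi_\xi|\le\tfrac14\tan(\pi/8)$ follows, and the reality and shifted-RH relations (\ref{E:shifted-RH}) transfer the same bound to $\chi_{+}$. This also explains the promised ``meaning of the combinations'': items (1)-(3) are engineered to force the contraction $\|K\|_{L^\infty}<\tfrac12$ and to fix the $L^\infty$-size of $\chi_{-}$, while item (4) is the exact threshold at which the derivative estimate closes at $\tfrac14\tan(\pi/8)$.

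The hard part is not this algebra but the analytic inputs it rests on, all deferred to the final section. The genuinely delicate estimate is the sharp $C^{1}$ mapping bound for $K$ in Lemma~\ref{lem:xi-est-1}: it requires differentiating the Cauchy-type kernel $f(\xi,\xi',\lambda)$ of (\ref{E:shifted-RH-fredholm-c-2}) and taming its principal-value singularity at $\xi'=\xi$ using the regularity and decay of $\sigma$ and $\sigma'$. Equally essential is the uniformity of every constant in $\lambda$, which is what makes the $\lambda$-independent small-norm thresholds meaningful; for large $|\lambda|$ this is secured by the rescaled estimates of Proposition~\ref{L:direct-sigma-asy-0}. I would therefore regard verifying the hypotheses of Lemmas~\ref{lem:xi-est-1}-\ref{lem:xi-est-3} uniformly in $\lambda$ (in particular the Hölder continuity and the $L^2$-decay of $\sigma$ and $\sigma'$) as the only real obstacle; once these are in hand, the proposition reduces to the three substitutions above.
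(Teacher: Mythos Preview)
Your argument is essentially identical to the paper's own proof: use item~(3) of Definition~\ref{def:small-data} together with Lemma~\ref{lem:xi-est-1} to get $\|K\|_{L^\infty}<\tfrac12$, hence $\|\chi_-\|_{L^\infty}\le 2\|g\|_{L^\infty}$ via the Neumann series, then differentiate $\chi_-=K\chi_--g$ and bound $\partial_\xi(K\chi_-)$ and $\partial_\xi g$ by the second parts of Lemmas~\ref{lem:xi-est-1} and~\ref{lem:xi-est-3}, closing via item~(4). The only discrepancy is bookkeeping: Lemma~\ref{lem:xi-est-3} actually gives $2\|g\|_{L^\infty}\le B_0+\tfrac2\pi(2B_1+\hat B_0)$ (with $B_1$, not $B_0$, inside the bracket), which is what the paper's proof uses---your version mirrors a typo in Definition~\ref{def:small-data} rather than the lemma.
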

\begin{proof}
Equation (\ref{E:shifted-RH-fredholm}) can be written in the short form: 
\begin{equation}
\label{E:shifted-RH-fredholm-bis}
\chi_{-} = K \chi_{-} -g,
\end{equation}
If $\|K\|<1$, it can solved iteratively and 
$$
\|\chi_{-}(\xi,\lambda)\| \le \frac{1}{1-\|K\|} \|g\|. 
$$
By Lemma~\ref{lem:xi-est-1}, Condition~3, the small norm conditions list means exactly that
$$
\|K\|_{L^{\infty}(d\xi)}\le\frac{1}{2};
$$
therefore
$$
\|\chi_{-}\|_{L^{\infty}(d\xi)}\le 2 \| g \|_{L^{\infty}(d\xi)} \le
B_0  + \frac{2}{\pi} [2 B_1 + \hat B_0  ].
$$

By differentiating equation (\ref{E:shifted-RH-fredholm-bis}) with respect to 
$\xi$, we obtain:
$$
\chi_{-\xi} = (K \chi_{-})_{\xi} -g_\xi,
$$
and 
$$
\|\chi_{-\xi} \|_{L^{\infty}(d\xi)} \le \| (K \chi_{-})_{\xi} \|_{L^{\infty}(d\xi)} 
+  \|  g_\xi\|_{L^{\infty}(d\xi)}.
$$
By Lemma~\ref{lem:xi-est-1}, in the small norm case
$$
\| (K \chi_{-})_{\xi} \|_{L^{\infty}(d\xi)}\le \frac{1}{\pi} (2 B_3+ 4 B_2^2 + 14 B_1 +4 B_1^2 )\cdot \|\chi_{-} \|_{L^{\infty}(d\xi)}.
$$
By Lemma~\ref{lem:xi-est-3}
$$
 \|  g_\xi\|_{L^{\infty}(d\xi)} \le \frac{B_1}{2} + \frac{1}{\pi} [16 B_2 + 4 \hat B_1].
$$
Therefore 
$$
\|\chi_{-}\|_{L^{\infty}(d\xi)}\le \frac{1}{\pi} (2 B_3+ 4 B_2^2 + 14 B_1 +4 B_1^2 )\cdot
\left(B_0  + \frac{2}{\pi} [2 B_1 + \hat B_0  ]\right) +  \frac{B_1}{2} + \frac{1}{\pi} [16 B_2 + 4 \hat B_1]\le
$$
$$
\le\frac{1}{4}\tan\left(\frac{\pi}{8}\right)<\frac{1}{4}.
$$
\end{proof}

The solution of the inverse problem also requires some estimates on $\chi(\xi,\lambda)$ and its derivatives 
at $\lambda\rightarrow\infty$. Let us show that, at large $\lambda$, the leading term of the asymptotic 
behavior is determined by the linear part of (\ref{E:shifted-RH-fredholm-c-2}). 

More precisely,
\begin{lemma}
\label{lem:xi-est-4}
If $v\in\mathfrak S_{x,y}$ and $v(x,y)=0$ for $|y|\ge D_y$, then, for $\lambda\rightarrow\infty$, we have the following estimates:
\begin{enumerate}
\item $\| K(\lambda) \|_{L^{\infty}(d\xi)} = O \left( \frac{1}{\lambda^2} \right)$.
\item For every sufficiently large  $\lambda$, operator $K(\lambda)$ maps the space $L^{\infty}(d\xi)$ 
into the space $C^{\infty}(\xi)$ and there exists a constant ${\cal C}_1(\lambda)$ such that 
$$
\| (K(\lambda) f )_{\xi} \|_{L^{\infty}(d\xi)} \le {\cal C}_1(\lambda)\cdot \| f \|_{L^{\infty}(d\xi)}, \ \ 
{\cal C}_1(\lambda)=  O \left( \frac{1}{\lambda^2} \right)
$$
\end{enumerate}
\end{lemma}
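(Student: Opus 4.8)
The plan is to derive both estimates by inserting the large-$\lambda$ decay of the scattering data, established in Proposition~\ref{L:direct-sigma-asy-0}, directly into the quantitative bounds of Lemma~\ref{lem:xi-est-1}, rather than re-examining the kernel $f(\xi,\xi',\lambda)$ of (\ref{E:shifted-RH-fredholm-c-2}) afresh. Abbreviate $C_0(\lambda)=\|\sigma(\cdot,\lambda)\|_{L^\infty}$, $C_1(\lambda)=\|\sigma_\xi(\cdot,\lambda)\|_{L^\infty}$, $C_2(\lambda)=\|\sigma_{\xi\xi}(\cdot,\lambda)\|_{L^\infty}$, $C_3(\lambda)=\|\sigma_{\xi\xi\xi}(\cdot,\lambda)\|_{L^\infty}$ and $\hat C_1(\lambda)=\|\sigma_\xi(\cdot,\lambda)\|_{L^2(d\xi)}$. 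Then (\ref{E:l1-condition-lambda-tau-sigma-0})--(\ref{E:l1-condition-lambda-tau-sigma-1}) give, as $\lambda\to\infty$, \[ C_0=O(\lambda^{-2}),\quad C_1=O(\lambda^{-3}),\quad C_2=O(\lambda^{-4}),\quad C_3=O(\lambda^{-5}),\quad \hat C_1=O(\lambda^{-5/2}). \] In particular, for all sufficiently large $|\lambda|$ one has $C_0\le\tfrac14$, $C_1\le\tfrac12$ and $C_2\le\tfrac12$, so every hypothesis of Lemma~\ref{lem:xi-est-1} holds with these $\lambda$-dependent constants, and it will remain only to track the $\lambda$-dependence through its conclusions.

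First I would establish statement~1. The first assertion of Lemma~\ref{lem:xi-est-1} gives \[ \|K(\lambda)\|_{L^\infty(d\xi)}\le\frac1\pi\bigl[\,4C_0(\lambda)+2C_2(\lambda)+\sqrt2\,\hat C_1(\lambda)\,\bigr]. \] With the decay rates above, the right-hand side is dominated by the $C_0$-term, whence $\|K(\lambda)\|_{L^\infty(d\xi)}=O(\lambda^{-2})$. This $O(\lambda^{-2})$ rate is exactly the $\lambda^{-2}$ scaling of $\|\sigma\|_{L^\infty}$ built into Proposition~\ref{L:direct-sigma-asy-0}; the genuine decay of $\|K\|$ is in fact faster, but $O(\lambda^{-2})$ is all that is needed later.

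For statement~2, the quantitative derivative bound is the second assertion of Lemma~\ref{lem:xi-est-1}: for $h=K(\lambda)f$, \[ \|h_\xi\|_{L^\infty(d\xi)}\le\frac1\pi\bigl(2C_3(\lambda)+4C_2(\lambda)^2+14C_1(\lambda)+4C_1(\lambda)^2\bigr)\|f\|_{L^\infty(d\xi)}, \] and here the slowest-decaying contribution is $14C_1(\lambda)=O(\lambda^{-3})$, so one may take ${\cal C}_1(\lambda)=O(\lambda^{-3})$, which is in particular $O(\lambda^{-2})$. There remains the qualitative upgrade from $C^1$ to $C^\infty$. By item~1 of Proposition~\ref{L:direct-sigma-asy-0}, for each fixed $\lambda$ the function $\sigma(\cdot,\lambda)$ is smooth and compactly supported; since $\xi\mapsto s(\xi,\lambda)=\xi+\sigma(\xi,\lambda)$ is strictly monotone (because $s_\xi=1+\sigma_\xi\ge\tfrac12$), the only possible singularity of $f$ lies on the diagonal $\xi'=\xi$, and it is removable, as the rewriting \[ f(\xi,\xi',\lambda)=\frac{(\xi'-\xi)\,\sigma_{\xi'}(\xi',\lambda)-\bigl(\sigma(\xi',\lambda)-\sigma(\xi,\lambda)\bigr)}{(\xi'-\xi)\bigl(s(\xi',\lambda)-s(\xi,\lambda)\bigr)} \] makes manifest. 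Hence $f(\cdot,\cdot,\lambda)$ is smooth, and one may differentiate (\ref{E:K}) under the integral sign to any order in $\xi$, each derivative landing on the smooth, compactly supported data; thus $K(\lambda)f\in C^\infty(\xi)$ for every $f\in L^\infty(d\xi)$.

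The proof is thus essentially bookkeeping. The quantitative $O(\lambda^{-2})$ decay of both $\|K(\lambda)\|$ and ${\cal C}_1(\lambda)$ is an immediate substitution of Proposition~\ref{L:direct-sigma-asy-0} into Lemma~\ref{lem:xi-est-1}, and the only point that will demand a little care is the qualitative $C^\infty$ mapping: there one uses the compact support of $\sigma(\cdot,\lambda)$ — not merely its decay — to rule out any convergence difficulty at infinity in the $\xi'$-integral, so that repeated differentiation under the integral sign is legitimate.
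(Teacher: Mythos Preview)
Your proposal is correct and follows exactly the paper's own approach: the paper's proof is the single sentence ``it is sufficient to compare formulas (\ref{E:l1-condition-lambda-tau-sigma-0})--(\ref{E:l1-condition-lambda-tau-sigma-1}) with the estimates from Lemma~\ref{lem:xi-est-1}'', which is precisely the substitution you carry out in detail. Your extra paragraph justifying the qualitative $C^\infty$ mapping property (via smoothness and compact support of $\sigma(\cdot,\lambda)$ and removability of the diagonal singularity of the kernel) is a welcome addition, since Lemma~\ref{lem:xi-est-1} as stated only yields $C^1$ and the paper does not spell out the upgrade to $C^\infty$.
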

\begin{proof}
To prove this Lemma it is sufficient to compare formulas (\ref{E:l1-condition-lambda-tau-sigma-0})-(\ref{E:l1-condition-lambda-tau-sigma-1})
with the estimates from Lemma~\ref{lem:xi-est-1}.
\end{proof}
\begin{remark}
Using the same approach, it is possible to prove analogous estimates for all derivatives; in particular,
there exists a constant ${\cal C}_2(\lambda)$ such that 
$$
\| (K(\lambda) f )_{\xi\xi} \|_{L^{\infty}(d\xi)} \le {\cal C}_2(\lambda) \cdot\| f \|_{L^{\infty}(d\xi)}, \ \ 
{\cal C}_2(\lambda)=  O \left( \frac{1}{\lambda^3} \right)
$$
\end{remark}

\begin{proposition}
\label{prop:xi-est-1}
Assume that $v\in\mathfrak S_{x,y}$ and $v(x,y)\equiv0$ for $|y|>D_y$. Then. for 
$\lambda\rightarrow\infty$, we have the following estimates
\begin{align}
\label{eq:xi-est-5}
\chi(\xi,\lambda) &=-g(\xi,\lambda)+O \left( \frac{1}{\lambda^4} \right), & \\
\chi_{\xi}(\xi,\lambda) &=-g_{\xi}(\xi,\lambda)+O \left( \frac{1}{\lambda^4} \right). \nonumber&
\end{align}
If, in addition, $v(x,y)$ satisfies the compact support condition (\ref{E:v-supp}), i.e. $v(x,y)\equiv 0$ for 
$|x|\ge D_x$, then 
\begin{align}
\label{eq:xi-est-6}
\| \chi(\xi,\lambda)\|_{L^{\infty}(d\xi)} &=O \left( \frac{1}{\lambda^2} \right),\\
\| \chi_{\xi}(\xi,\lambda)\|_{L^{\infty}(d\xi)} &= O \left( \frac{1}{\lambda^3} \right).\nonumber
\end{align}
\begin{remark}
Using the same approach, it is possible to prove that in the compact support case
$$
\|\chi_{\xi\xi}(\xi,\lambda)\||_{L^{\infty}(d\xi)} = O \left( \frac{1}{\lambda^4} \right).
$$
\end{remark}
\end{proposition}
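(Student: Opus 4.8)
The plan is to read the large-$\lambda$ behaviour straight off the linear Fredholm equation (\ref{E:shifted-RH-fredholm-bis}), which in the notation of the proposition reads $\chi=K\chi-g$ (with $\chi=\chi_-$). Rewriting it as $\chi+g=K\chi$ reduces the whole problem to showing that the remainder $K\chi$, and its $\xi$-derivative, are $O(1/\lambda^4)$. First I would record the a~priori bound: since $\|K\|_{L^\infty(d\xi)}=O(1/\lambda^2)\to 0$ by item~1 of Lemma~\ref{lem:xi-est-4}, for $|\lambda|$ large the Neumann series converges and $\|\chi\|_{L^\infty(d\xi)}\le(1-\|K\|)^{-1}\|g\|_{L^\infty(d\xi)}=O\big(\|g\|_{L^\infty(d\xi)}\big)$.

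The crucial preliminary is the sharp size of $g$ for large $\lambda$. Inserting the $\sigma$-asymptotics (\ref{E:l1-condition-lambda-tau-sigma-0})--(\ref{E:l1-condition-lambda-tau-sigma-1}) into Lemma~\ref{lem:xi-est-3} one must be careful: the $L^2$-type constant $\hat B_0$ only yields $\|g\|_{L^\infty}=O(1/\lambda^{3/2})$, which is too weak. The sharp rate $O(1/\lambda^2)$ comes instead from the rescaled structure $\sigma(\xi,\lambda)=\mathring\sigma(\mathring\xi,\mathring\lambda)/\lambda^2$ of Proposition~\ref{L:direct-sigma-asy-0}, with $\mathring\sigma$ smooth and decaying: after the substitution $\xi=\lambda\mathring\xi$ the principal-value term in the definition (\ref{E:shifted-RH-fredholm-c-2}) of $g$ becomes $\lambda^{-2}$ times the Hilbert transform of $\mathring\sigma$, hence $O(1/\lambda^2)$ in $L^\infty$. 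Thus $\|g\|_{L^\infty(d\xi)}=O(1/\lambda^2)$ and $\|\chi\|_{L^\infty(d\xi)}=O(1/\lambda^2)$, so the first pair (\ref{eq:xi-est-5}) follows at once: $\|\chi+g\|_{L^\infty}=\|K\chi\|_{L^\infty}\le\|K\|\,\|\chi\|_{L^\infty}=O(1/\lambda^4)$, while $\|\chi_\xi+g_\xi\|_{L^\infty}=\|(K\chi)_\xi\|_{L^\infty}\le\mathcal{C}_1(\lambda)\,\|\chi\|_{L^\infty}=O(1/\lambda^4)$ by item~2 of Lemma~\ref{lem:xi-est-4}.

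For the compact-support estimates (\ref{eq:xi-est-6}) I would switch to the sharpened bounds (\ref{eq:xi-est-4}) of Lemma~\ref{lem:xi-est-3}, available now because $\sigma(\cdot,\lambda)$ is supported in $|\xi|\le\lambda D_y+D_x$ by item~1 of Proposition~\ref{L:direct-sigma-asy-0}, so the effective radius is $R=O(\lambda)$. Combining $R=O(\lambda)$ with $\|\sigma_\xi\|_{L^\infty}=O(1/\lambda^3)$ and $\|\sigma_{\xi\xi}\|_{L^\infty}=O(1/\lambda^4)$ from (\ref{E:l1-condition-lambda-tau-sigma-0}) gives $\|g\|_{L^\infty}\le\frac{8R}{\pi}C_1=O(1/\lambda^2)$ and $\|g_\xi\|_{L^\infty}\le\frac{26R}{\pi}C_2=O(1/\lambda^3)$. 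The a~priori bound then yields $\|\chi\|_{L^\infty}=O(1/\lambda^2)$, and the identity $\chi_\xi=(K\chi)_\xi-g_\xi$ together with item~2 of Lemma~\ref{lem:xi-est-4} gives $\|\chi_\xi\|_{L^\infty}\le\mathcal{C}_1(\lambda)\|\chi\|_{L^\infty}+\|g_\xi\|_{L^\infty}=O(1/\lambda^3)$.

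The main obstacle is this middle step, extracting the sharp $1/\lambda^2$ size of $g$: the naive estimate through the $L^2$ constant $\hat B_0$ loses half a power and does not close the $1/\lambda^4$ remainder, so one must exploit both the growth of the $\xi$-support like $\lambda D_y$ and the smoothness of the rescaled datum $\mathring\sigma$ to upgrade the principal-value term to $O(1/\lambda^2)$. Once this is secured, the remaining estimates are routine bookkeeping with the geometric-series bound and the operator estimates of Lemma~\ref{lem:xi-est-4}.
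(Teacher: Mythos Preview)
Your approach is correct and matches the paper's: write $\chi+g=K\chi$, bound $\|\chi\|_{L^\infty}$ by the Neumann series, and feed in $\|K\|=O(1/\lambda^2)$ from Lemma~\ref{lem:xi-est-4} together with the $g$-estimates from Lemma~\ref{lem:xi-est-3} (in the compact-support form~(\ref{eq:xi-est-4}) with $R(\lambda)=D_x+|\lambda|D_y+1$). Your explicit remark that the naive $\hat B_0$-bound loses half a power and that one must instead use the $O(\lambda)$ support radius of $\sigma$ (equivalently, the rescaled datum of Proposition~\ref{L:direct-sigma-asy-0}) is a useful clarification that the paper's terse proof leaves implicit.
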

Proof of Proposition~\ref{prop:xi-est-1}.
\begin{proof}
From (\ref{E:shifted-RH-fredholm-c-2}), Proposition~\ref{L:direct-sigma-asy-0}, Lemmas~\ref{lem:xi-est-3}, \ref{lem:xi-est-4} and the formula 
$R(\lambda)=D_x+|\lambda| D_y+1$ it follows immediately, that 
$$
\| \chi(\xi,\lambda)+g(\xi,\lambda)\|_{L^{\infty}(d\xi)} \le 
\frac{\| K(\lambda) \|_{L^{\infty}(d\xi)} } {1-\| K(\lambda) \|_{L^{\infty}(d\xi)} } \cdot 
\| g(\xi,\lambda)\|_{L^{\infty}(d\xi)} = O \left( \frac{1}{\lambda^3} \right),
$$
$$
\|\chi_{\xi}(\xi,\lambda)+g_{\xi}(\xi,\lambda)\|_{L^{\infty}(d\xi)} \le {\cal C}_1(\lambda)  \| \chi(\xi,\lambda) \|_{L^{\infty}(d\xi)} = O \left( \frac{1}{\lambda^4} \right).
$$
\end{proof}

\begin{proposition}\label{P:decay}
Suppose $v\in\mathfrak S_{x,y}$ with compact support and $v$ is small. Consider a curve in the $(\xi,\lambda)$-plane:
$$
\xi(\lambda) = x-\lambda y-\lambda^2 t+ \omega(\lambda).
$$
Then for fixed $t>0$ and $\omega(\lambda)=O(1)$, we have, as $\lambda\to\infty$,
\beq\label{E:decay}
|\partial^\mu_\xi\chi_{-}(\omega(\lambda)+x-\lambda y-\lambda^2 t,\lambda)|=  
\mathcal O\left(\frac{1}{1+|\lambda|^{3+2\mu}}\right).
\eeq
\end{proposition}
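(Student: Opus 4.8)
The plan is to exploit the fact that, for fixed $t>0$ and $\omega(\lambda)=O(1)$, the evaluation point $\xi(\lambda)=x-\lambda y-\lambda^2 t+\omega(\lambda)$ satisfies $|\xi(\lambda)|\ge\frac12\lambda^2 t$ for $|\lambda|$ large, whereas by Proposition~\ref{L:direct-sigma-asy-0} the function $\sigma(\cdot,\lambda)$ is supported in $\{|\xi'|\le|\lambda|D_y+D_x\}$. Hence for $|\lambda|$ large the point $\xi(\lambda)$ lies far outside the support of $\sigma(\cdot,\lambda)$, so $\sigma$ and all its $\xi$-derivatives vanish there; in particular $s(\xi(\lambda),\lambda)=\xi(\lambda)$, $s_\xi(\xi(\lambda),\lambda)=1$, and $s_{\xi\xi}(\xi(\lambda),\lambda)=\dots=0$. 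Moreover, for every $\xi'$ in the support of $\sigma$ one gets the lower bounds $|\xi'-\xi(\lambda)|\ge\frac14\lambda^2 t$ and $|s(\xi',\lambda)-\xi(\lambda)|\ge\frac14\lambda^2 t$. I would work directly from the integral equation $\chi_-=K\chi_--g$, i.e.\ (\ref{E:shifted-RH-fredholm-bis}) with kernel and data (\ref{E:shifted-RH-fredholm-c-2}), differentiate it $\mu$ times in the first variable, and evaluate at $\xi=\xi(\lambda)$ to get $\partial_\xi^\mu\chi_-(\xi(\lambda),\lambda)=\partial_\xi^\mu(K\chi_-)(\xi(\lambda),\lambda)-\partial_\xi^\mu g(\xi(\lambda),\lambda)$.

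For the inhomogeneous term, since every $\xi$-derivative of $\sigma$ vanishes at $\xi(\lambda)$, all terms of $\partial_\xi^\mu g$ containing $\partial_\xi^{\ge1}\sigma(\xi(\lambda),\lambda)$ drop out and only the leading contribution of $\partial_\xi^\mu[s(\xi',\lambda)-s(\xi,\lambda)]^{-1}$ survives, so that
\[
\partial_\xi^\mu g(\xi(\lambda),\lambda)=\frac{\mu!}{2\pi i}\int_{\mathrm{supp}\,\sigma}\frac{s_{\xi'}(\xi',\lambda)\,\sigma(\xi',\lambda)}{\big(s(\xi',\lambda)-\xi(\lambda)\big)^{\mu+1}}\,d\xi'.
\]
Using $|s(\xi',\lambda)-\xi(\lambda)|\ge\frac14\lambda^2 t$, the bound $|s_{\xi'}|\le1+B_1$, and $\int_{\mathrm{supp}\,\sigma}|\sigma|\,d\xi'\le(2|\lambda|D_y+2D_x)^{1/2}\|\sigma(\cdot,\lambda)\|_{L^2(d\xi)}=O(1/|\lambda|)$ (Cauchy--Schwarz together with the $L^2$-estimate (\ref{E:l1-condition-lambda-tau-sigma-1})), this gives $|\partial_\xi^\mu g(\xi(\lambda),\lambda)|=O\big(|\lambda|^{-2(\mu+1)}\big)\cdot O(1/|\lambda|)=O\big(|\lambda|^{-(3+2\mu)}\big)$, exactly the claimed order.

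It remains to show that the operator term is of strictly lower order. For $\xi'$ outside $\mathrm{supp}\,\sigma$ one has $s(\xi')=\xi'$ and $s_{\xi'}=1$, so the two Cauchy terms in $\partial_\xi^\mu f(\xi(\lambda),\xi',\lambda)$ cancel exactly and $\partial_\xi^\mu f(\xi(\lambda),\cdot,\lambda)$ is again supported in $\mathrm{supp}\,\sigma$. On that set, expanding $(s(\xi')-\xi(\lambda))^{-(\mu+1)}$ about $(\xi'-\xi(\lambda))^{-(\mu+1)}$, the subtraction leaves a remainder proportional to $\sigma_{\xi'}(\xi'-\xi(\lambda))^{-(\mu+1)}$ and to $\sigma\,(\xi'-\xi(\lambda))^{-(\mu+2)}$, each carrying an extra negative power. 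Bounding $|\chi_-(\xi',\lambda)|\le\|\chi_-(\cdot,\lambda)\|_{L^\infty(d\xi)}=O(1/\lambda^2)$ by Proposition~\ref{prop:xi-est-1} and using $\int_{\mathrm{supp}\,\sigma}|\sigma_{\xi'}|\,d\xi'=O(1/\lambda^2)$ and $\int_{\mathrm{supp}\,\sigma}|\sigma|\,d\xi'=O(1/|\lambda|)$, I expect $\partial_\xi^\mu(K\chi_-)(\xi(\lambda),\lambda)=O\big(|\lambda|^{-(2\mu+6)}\big)$, negligible beside the $g$-term. Combining the two estimates yields (\ref{E:decay}).

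The main obstacle will be the careful bookkeeping of the Fa\`a di Bruno expansion of $\partial_\xi^\mu\big(s(\xi')-s(\xi)\big)^{-1}$ and $\partial_\xi^\mu(\xi'-\xi)^{-1}$: one must verify that every lower-order term genuinely carries a factor $\partial_\xi^{j}\sigma(\xi(\lambda),\lambda)$ with $j\ge1$, hence vanishes at $\xi(\lambda)$, and that the subtraction structure of $f$ produces the additional power of $|\lambda|^{-1}$ claimed for the operator term. Differentiation under the integral sign is then easily justified, since on a neighbourhood of $\xi(\lambda)$ the kernels and all their $\xi$-derivatives are supported in the compact set $\mathrm{supp}\,\sigma(\cdot,\lambda)$ and $\chi_-$ is bounded there.
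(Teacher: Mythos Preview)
Your approach is correct but takes a genuinely different route from the paper. You differentiate the Fredholm equation \eqref{E:shifted-RH-fredholm-bis} $\mu$ times in $\xi$ and evaluate at $\xi(\lambda)$, exploiting that $\sigma$ and all its $\xi$-derivatives vanish there; this forces the Fa\`a di Bruno bookkeeping you flag as an obstacle and separate treatment of the $g$-term and the $K\chi_-$-term. The paper bypasses the integral equation entirely: since $\sigma(\cdot,\lambda)$ vanishes outside the cut $[-D_x-|\lambda|D_y,\,D_x+|\lambda|D_y]$, the function $\chi(\cdot,\lambda)$ has no jump there and is holomorphic in $\xi$ off the cut, so the Cauchy integral representation gives, for $|\xi|>D_x+|\lambda|D_y$,
\[
\partial_\xi^\mu\chi(\xi,\lambda)=\frac{\mu!}{2\pi i}\int_{-D_x-|\lambda|D_y}^{D_x+|\lambda|D_y}\frac{\chi_+(\tau,\lambda)-\chi_-(\tau,\lambda)}{(\tau-\xi)^{\mu+1}}\,d\tau,
\]
and \eqref{E:decay} drops out in one line from $\|\chi_\pm(\cdot,\lambda)\|_{L^\infty}=O(|\lambda|^{-2})$ (Proposition~\ref{prop:xi-est-1}), the cut length $O(|\lambda|)$, and $|\tau-\xi(\lambda)|\gtrsim\lambda^2 t$. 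Your argument has the merit of working purely from the Fredholm representation without invoking the $\xi$-holomorphy of $\chi$, and it does yield the sharp order for the $g$-term; but the paper's route is much shorter and avoids all the combinatorics.
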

\begin{proof}
Suppose the support of $v$ is contained in $\{|x|\le D_x,\ |y|\le D_y\}$. Therefore the support of 
$\sigma(\xi,\lambda)$ lies in the area  $|\xi|\le |D_x+|\lambda|D_y|$, $\xi\in\RR$. 

\begin{figure}[h]
\label{fig2}
\begin{center}
\includegraphics[height=6cm]{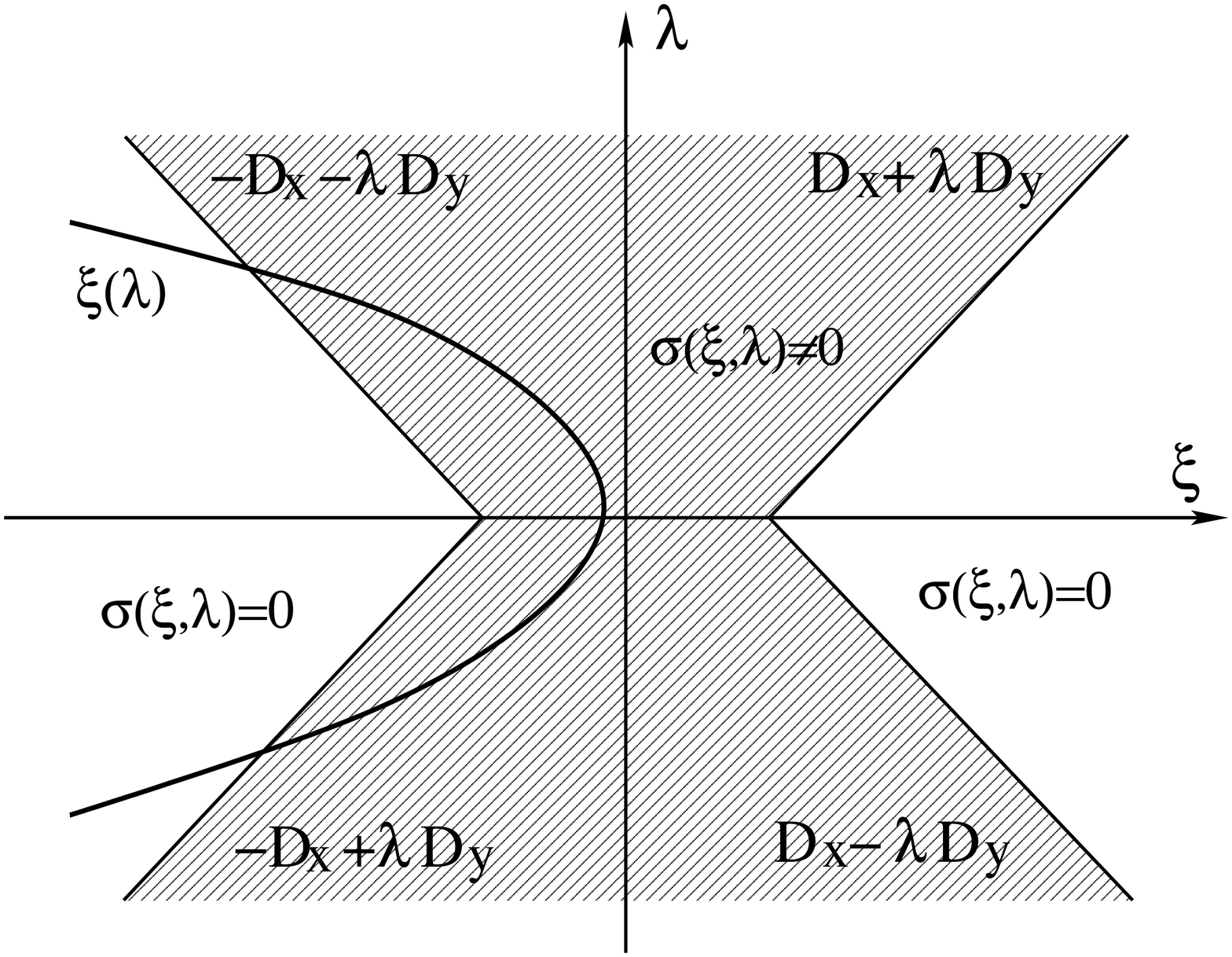}
\end{center}
\caption{The support of $\chi(\xi,\lambda)$ and the trajectory $\xi(\lambda)$, $t>0$.}
\end{figure}

Outside this area  $\sigma(\xi,\lambda))\equiv 0$, hence $\chi(\xi,\lambda)$
is holomorphic in $\xi$ in the complex plane outside the cut $[-D_x-|\lambda|D_y,D_x+|\lambda|D_y ]$ on the real line. Therefore, 
\[
\chi(\xi,\lambda) = 
\frac{1}{2\pi i}\int_{-D_x-|\lambda|D_y}^{D_x+|\lambda|D_y}\frac{(\chi_{+}(\tau,\lambda)-\chi_{-}(\tau,\lambda))}{\tau-\xi}d\tau,\]
and
\[
\partial_\xi\chi(\xi,\lambda) = 
\frac{1}{2\pi i}\int_{-D_x-|\lambda|D_y}^{D_x+|\lambda|D_y}\frac{(\chi_{+}(\tau,\lambda)-\chi_{-}(\tau,\lambda))}{(\tau-\xi)^2}d\tau.
\]
It follows that, for $|\xi|> D_x+|\lambda|D_y $,
\begin{eqnarray*}
&&|\partial^{\mu}_\xi\chi_{-}(\xi,\lambda) | \\
&\le & \frac{1}{\pi}\cdot [\|\chi_{-}(\xi,\lambda)\|_{L^{\infty}} + |\chi_{+}(\xi,\lambda) \|_{L^{\infty}}] \cdot (D_x+|\lambda|D_y) \cdot
\sup_{\tau\in[-D_x-|\lambda|D_y,+D_x+|\lambda|D_y]}\partial^{\mu}_{\xi}\left[\frac{1}{(\tau-\xi)}\right]\\
&\le &  C\frac{\mu !}{(1+|\lambda|)[|\xi|-(D_x+|\lambda|D_y)]^{\mu+1}}
\end{eqnarray*}
Therefore (\ref{E:decay}) follows if $t>0$.
\end{proof}

\section{The inverse problem}\label{S:NRH}
\subsection{The reconstruction of the real eigenfunction }\label{S:NRH-1}

Assume that the {\bf spectral data} $\chi_{-}(\xi,\lambda)=\chi_{-R}(\xi,\lambda)+i\chi_{-I}(\xi,\lambda)$ are given, where $\xi,\lambda\in\RR$. Let us recall that $\chi_{-}(\xi,\lambda)$ is assumed to be analytic in $\xi$ the lower half-plane, or equivalently
\beq
\label{disp-realtions}
\chi_{-R}-H_{\xi}\chi_{-I}=0,
\eeq
where $H_{\xi}$ denotes the Hilbert transform wrt. the variable $\xi$.

Our current aim is to construct the common eigenfunctions of the Lax pair for the Pavlov equation starting from the spectral data. By the Plemelj (Sokhotski) formula \cite{Ga66}, Theorem \ref{T:existence-complex-eigen}, and \ref{T:complex-bdry}, we have
\begin{equation}\label{E:Manakov-Santini-0}
\varphi(x,y,\lambda)+\chi_{-R}(\varphi(x,y,\lambda),\lambda)=x-\lambda y -\frac 1{\pi}\fint_{\mathbb R}\frac{\chi_{-I}(\varphi(x,y,\zeta),\zeta)}{\zeta-\lambda}d\zeta.
\end{equation}

Therefore, keeping in mind the time evolution (\ref{eq:spectral-evolution}) of the spectral data and the definition (\ref{phi_Phi}) of the common eigenfunctions of the vector field Lax Pair, the nonlinear integral equation of the inverse problem reads:
\begin{equation}\label{E:Manakov-Santini}
\psi(x,y,t,\lambda)+\chi_{-R}(\psi(x,y,t,\lambda),\lambda)=x-\lambda y-\lambda^2 t -\frac 1{\pi}\fint_{\mathbb R}\frac{\chi_{-I}(\psi(x,y,t,\zeta),\zeta)}{\zeta-\lambda}d\zeta 
\end{equation}

The solution of the inverse problem consists of two steps:
\begin{itemize}
\item We show that, if some appropriate constraints are imposed on the spectral data, the nonlinear integral equation (\ref{E:Manakov-Santini}) has a unique solution $\psi(x,y,t,\lambda)$.
\item We show that the function $\psi(x,y,t,\lambda)$ is the real Jost eigenfunction for the Pavlov Lax Pair wth the proper behavior at $y\rightarrow-\infty$, where the potential $v(x,y,t)$ is defined by formula (\ref{E:potential-inv-S}).
\end{itemize}

\begin{theorem} \label{T:uniqueness-S2} {\bf (Global solvability for the IST equation~(\ref{E:Manakov-Santini}) -- part 1.)} 
Suppose that the spectral data $\chi_{-}(\xi,\lambda)$ satisfy the following constraints
\begin{enumerate}
\item $\chi_{-}(\xi,\lambda)$, $\partial_{\xi}\chi_{-}(\xi,\lambda)$ are well-defined continuous functions.
\item 
$$
|\partial_{\xi}\chi_{-R}(\xi,\lambda)| \le \frac{1}{4}\tan\left(\frac{\pi}{8}\right), \ \ 
|\partial_{\xi}\chi_{-I}(\xi,\lambda)| \le \frac{1}{4}\tan\left(\frac{\pi}{8}\right).
$$
\item There exists a positive constant $C$ such that 
$$
|\chi_{-}(\xi,\lambda)|\le \frac{C}{1+|\lambda|}  
$$
\end{enumerate}
Then, for all $x,y,t\in\RR$, $t\ge 0$, equation (\ref{E:Manakov-Santini}) has a unique solution $\psi(x,y,t,\lambda)$ such that $\psi(x,y,t,\lambda)=x-\lambda y-\lambda^2 t+\omega(x,y,t,\lambda) $, where $\omega(x,y,t,\lambda)\in L^2(d\lambda)$.  
\end{theorem}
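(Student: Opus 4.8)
The plan is to recast the nonlinear integral equation (\ref{E:Manakov-Santini}) as a fixed-point problem for the correction $\omega$ and to solve it by the contraction mapping principle in $L^2(d\lambda)$. Setting $\eta(\lambda):=x-\lambda y-\lambda^2 t$ and writing the sought solution as $\psi=\eta+\omega$, I first record that, by the definition of the Hilbert transform $H_\lambda$,
\[
-\frac{1}{\pi}\fint_{\RR}\frac{\chi_{-I}(\psi(\zeta),\zeta)}{\zeta-\lambda}\,d\zeta
= H_\lambda\big[\chi_{-I}(\psi(\cdot),\cdot)\big](\lambda).
\]
Substituting $\psi=\eta+\omega$ into (\ref{E:Manakov-Santini}) cancels the free term $\eta(\lambda)=x-\lambda y-\lambda^2 t$ on both sides and reduces the equation to the fixed-point form $\omega=T[\omega]$, with
\[
T[\omega](\lambda):=-\chi_{-R}\big(\eta(\lambda)+\omega(\lambda),\lambda\big)
+H_\lambda\Big[\chi_{-I}\big(\eta(\cdot)+\omega(\cdot),\cdot\big)\Big](\lambda).
\]

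Next I would verify that $T$ maps $L^2(d\lambda)$ into itself. The key is Condition~3, which bounds $|\chi_{-R}(\xi,\lambda)|$ and $|\chi_{-I}(\xi,\lambda)|$ by $C/(1+|\lambda|)$ \emph{uniformly in the first argument}; hence, for any $\omega$, the functions $\lambda\mapsto\chi_{-R}(\eta(\lambda)+\omega(\lambda),\lambda)$ and $\lambda\mapsto\chi_{-I}(\eta(\lambda)+\omega(\lambda),\lambda)$ are dominated by $C/(1+|\lambda|)$, which lies in $L^2(d\lambda)\cap L^\infty(d\lambda)$ since $(1+|\lambda|)^{-2}$ is integrable. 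Because $H_\lambda$ is bounded on $L^2(d\lambda)$ with operator norm $1$, the second term of $T[\omega]$ is in $L^2$ as well; in particular $T[0]\in L^2(d\lambda)$, uniformly in $(x,y,t)$, and together with the Lipschitz bound established below this shows $T$ maps $L^2(d\lambda)$ into itself.

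For the contraction estimate I would use Condition~2. By the mean value theorem, the Lipschitz constants of $\chi_{-R}(\cdot,\lambda)$ and $\chi_{-I}(\cdot,\lambda)$ in their first argument are at most $\tfrac14\tan(\pi/8)$, uniformly in $\lambda$. Therefore, for $\omega_1,\omega_2\in L^2(d\lambda)$,
\[
\big\|\chi_{-R}(\eta+\omega_1,\cdot)-\chi_{-R}(\eta+\omega_2,\cdot)\big\|_{L^2}
\le \tfrac14\tan(\pi/8)\,\|\omega_1-\omega_2\|_{L^2},
\]
and the same bound holds for the $\chi_{-I}$ difference; combined with $\|H_\lambda\|_{L^2\to L^2}=1$ this gives
\[
\|T[\omega_1]-T[\omega_2]\|_{L^2}\le \tfrac12\tan(\pi/8)\,\|\omega_1-\omega_2\|_{L^2}.
\]
Since $\tfrac12\tan(\pi/8)=\tfrac12(\sqrt2-1)<1$, the map $T$ is a contraction on the complete space $L^2(d\lambda)$, and the Banach fixed-point theorem yields a unique $\omega\in L^2(d\lambda)$, hence a unique $\psi=\eta+\omega$ of the required form, for every $x,y,t$ (the hypothesis $t\ge0$ does not seem to be needed at this step and presumably enters only in the subsequent parts).

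The main point requiring care is not a deep obstacle but the well-posedness of $T$: one must be sure that the argument of the Hilbert transform stays in $L^2(d\lambda)$ for every $\omega$ and uniformly in $(x,y,t)$. This is exactly where the decay rate $C/(1+|\lambda|)$ of Condition~3 is load-bearing, since it is the borderline rate placing $\chi_{-R},\chi_{-I}$ evaluated along $\psi$ into $L^2$, and it is crucial that the bound be independent of the first argument, so that no a priori control of $\psi$ (which contains the unbounded term $\lambda^2 t$) is needed. The $L^2$-boundedness of $H_\lambda$ and the uniform-in-$\xi$ decay are thus the substantive ingredients; the resulting contraction constant $\tfrac12\tan(\pi/8)$ is comfortably below $1$, so the specific value $\tfrac14\tan(\pi/8)$ in Condition~2 is far smaller than solvability alone requires and is presumably dictated by the later steps—showing that $\psi$ is a genuine Jost eigenfunction and reconstructing $v$—rather than by this fixed-point argument.
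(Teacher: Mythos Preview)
Your proof is correct and follows essentially the same approach as the paper: both recast the equation as a fixed-point problem for $\omega$ in $L^2(d\lambda)$, use Condition~3 (uniform-in-$\xi$ decay) to show the operator lands in $L^2$, and use Condition~2 plus $\|H_\lambda\|_{L^2\to L^2}=1$ for the contraction. The only cosmetic difference is that the paper discards the factor $\tan(\pi/8)<1$ and records the contraction constant as $\tfrac12$, whereas you keep the sharper $\tfrac12\tan(\pi/8)$; your remarks that $t\ge0$ is not used here and that the specific constant $\tfrac14\tan(\pi/8)$ is dictated by later $L^4$ estimates (where $\|H_\lambda\|_{L^4\to L^4}=\cot(\pi/8)$) are accurate.
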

\begin{proof}
The proof is based on the standard iteration procedure for contracting nonlinear maps. Equation (\ref{E:Manakov-Santini}) is equivalent to 
\beq
\omega(x,y,t,\lambda)= {\mathcal R}\left[\omega(x,y,t,\lambda)+x-\lambda y-\lambda^2 t\right],\label{E:FP-ite}
\eeq
where $\mathcal R$ is defined by
\begin{equation}\label{E:spectral-operator}
({\mathcal R}[f])(\lambda)=- \chi_{-R}(f(\lambda),\lambda)-\frac{1}{\pi}\fint_{\RR}\frac{\chi_{-}(f(\zeta)),\zeta)}{\zeta-\lambda}d\zeta
\end{equation}
or equivalently,
$$
({\mathcal R}[f])(\lambda)=H_{\lambda}\circ \chi_{-I}( f(\lambda),\lambda)- \chi_{-R}(f(\lambda),\lambda).
$$
From the constraints on the spectral data it immediately follows that the maps 
$$ 
f(\lambda) \rightarrow \chi_{-R}( f(\lambda),\lambda), \ \  f(\lambda) \rightarrow \chi_{-I}( f(\lambda),\lambda)
$$
map all measurable functions of $\lambda$ into the space $L^2(d\lambda)$; moreover the image of the map is located inside the ball of radius $R_0=
\sqrt{2}C$.
$H_{\lambda}$ is a unitary operator in the space $L^2(\lambda)$; therefore, for any measurable function $f(\lambda)$, we know that  
${\mathcal R}[f]\in L^2(d\lambda)$, and $\| {\mathcal R} [f] \|_{L^2(d\lambda)}\le 
2\sqrt{2}C$.

Let us check that operator $\mathcal R$ is a contraction. Let $f(\lambda)$ be a measurable function, $g(\lambda)\in L^2(d\lambda)$. We have
$$
\|\mathcal R[f+g]-\mathcal R[f]\|_{L^2(d\lambda)}\le
$$
$$ 
\le\|H_{\lambda}\circ [\chi_{-I}( f(\lambda)+g(\lambda),\lambda)- \chi_{-I}( f(\lambda),\lambda)] \|_{L^2(d\lambda)}+ 
\|\chi_{-R}( f(\lambda)+g(\lambda),\lambda)- \chi_{-R}( f(\lambda),\lambda) \|_{L^2(d\lambda)}=
$$
$$
=\| \chi_{-I}( f(\lambda)+g(\lambda),\lambda)- \chi_{-I}( f(\lambda),\lambda) \|_{L^2(d\lambda)}+ 
\|\chi_{-R}( f(\lambda)+g(\lambda),\lambda)- \chi_{-R}( f(\lambda),\lambda) \|_{L^2(d\lambda)}
$$
We know, that 
$$
|\chi_{-I}( f(\lambda)+g(\lambda),\lambda)- \chi_{-I}( f(\lambda),\lambda)|\le\max\limits_{\xi,\lambda} |\partial_{\xi}\chi_{-I}(\xi,\lambda)|\cdot|g(\lambda)|
\le\frac{1}{4}|g(\lambda)|,
$$
$$
|\chi_{-R}( f(\lambda)+g(\lambda),\lambda)- \chi_{-R}( f(\lambda),\lambda)|\le\max\limits_{\xi,\lambda} |\partial_{\xi}\chi_{-R}(\xi,\lambda)|\cdot|g(\lambda)|
\le\frac{1}{4}|g(\lambda)|,
$$
therefore
$$
\|{\mathcal R}[f+g]-{\mathcal R}[f]\|_{L^2(d\lambda)}\le\frac{1}{2} \| g\|_{L^2(d\lambda)}.
$$
Hence the iteration procedure:
\bea
&&\omega_0(x,y,t,\lambda)=0 \\
&&\omega_{n+1}(x,y,t,\lambda)= {\mathcal R}[\omega_{n}(x,y,t,\lambda)+x-\lambda y-\lambda^2 t],\label{E:FP-ite2}
\eea
perfectly converges in $L^2(d\lambda)$.
\end{proof}
Let us check now that the functions constructed above have the Jost property. Namely:

\begin{theorem} \label{T:JOST}
Assume that the spectral data $\chi_{-}(\xi,\lambda)$ satisfy the same constraints as in Theorem~\ref{T:uniqueness-S2}, and
\begin{enumerate}
\item For each $\lambda\in\RR$ the function $\chi(\xi,\lambda)$ is holomorphic in $\xi$ in the lower half-plane.
\item $\partial_{\lambda}\chi_{-}(\xi,\lambda)$, 
$\partial_{\xi}\partial_{\lambda}\chi_{-}(\xi,\lambda)$,  
$\partial^2_{\xi}\chi_{-}(\xi,\lambda)$
are well-defined continuous functions.
\item There exists a positive constant $C$ such, that 
\begin{align}
&|\chi_{-}(\xi,\lambda)|\le \frac{C}{1+|\lambda|^2}, \label{T:JOST:eq1}\\
&|\chi_{-}(\xi,\lambda)|\le \frac{C}{1+|\xi|}, \label{T:JOST:eq2}\\
&|\partial_{\xi}\chi_{-}(\xi,\lambda)|\le \frac{C}{1+|\lambda|^3},  \label{T:JOST:eq3}\\
&|\partial_{\xi}\chi_{-}(\xi,\lambda)|\le \frac{C}{1+|\xi|^2},  \label{T:JOST:eq4}\\
&|\partial^2_{\xi}\chi_{-}(\xi,\lambda)|\le C,   \label{T:JOST:eq5}\\
&|\partial_{\lambda}\chi_{-}(\xi,\lambda)|\le \frac{C}{1+|\lambda|}.  \label{T:JOST:eq6}
\end{align}
\item For any $\calD>0$, there exists a positive constant $C(\calD)$) such that, for all 
$\lambda$ such that $|\lambda|\le 4\calD$
\begin{align}
&|\partial_{\lambda}\chi_{-}(\xi,\lambda)|\le \frac{C(\calD)}{1+|\xi|},  \label{T:JOST:eq7}\\
&|\partial_{\xi}\chi_{-}(\xi,\lambda)|\le \frac{C(\calD)}{1+|\xi|^2},  \label{T:JOST:eq8}\\
&|\partial_{\lambda}\partial_{\xi}\chi_{-}(\xi,\lambda)|\le \frac{C(\calD)}{1+|\xi|^2},  
\label{T:JOST:eq9}\\
&|\partial^2_{\xi}\chi_{-}(\xi,\lambda)|\le \frac{C(\calD)}{1+|\xi|^3}.  \label{T:JOST:eq10}
\end{align}
\end{enumerate}
Then, for the functions constructed in Theorem~\ref{T:uniqueness-S2} with fixed $\tau,t,\lambda_0\in\RR$, $t\ge0$, we have
\beq
\label{E:Jost}
\omega(\tau+\lambda_0 y,y,t,\lambda_0) \rightarrow 0  \ \ \mbox{for} \ \ y\rightarrow-\infty.
\eeq
\end{theorem}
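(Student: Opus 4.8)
The plan is to extract the asymptotics of $\omega$ directly from the fixed-point identity \eqref{E:FP-ite} it satisfies, exhibiting the cancellation between the Hilbert-transform term and the $\chi_{-R}$ term that encodes the free behaviour of the Jost eigenfunction at $y\to-\infty$. Fix $\tau,t,\lambda_0$, abbreviate $L(y)=\omega(\tau+\lambda_0 y,y,t,\lambda_0)$, and set $\xi^\ast=\psi(\tau+\lambda_0 y,y,t,\lambda_0)=\tau-\lambda_0^2 t+L(y)$. Evaluating \eqref{E:FP-ite} at $\lambda=\lambda_0$, $x=\tau+\lambda_0 y$ gives
\[
L(y)=\frac1\pi\fint_\RR\frac{\chi_{-I}\big(\psi(\tau+\lambda_0 y,y,t,\zeta),\zeta\big)}{\lambda_0-\zeta}\,d\zeta-\chi_{-R}(\xi^\ast,\lambda_0),
\]
where $\psi(\tau+\lambda_0 y,y,t,\zeta)=\tau+(\lambda_0-\zeta)y-\zeta^2 t+\omega(\tau+\lambda_0 y,y,t,\zeta)$. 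Using the dispersion relation \eqref{disp-realtions}, $\chi_{-R}=H_\xi\chi_{-I}$, the second term equals $\tfrac1\pi\fint_\RR\chi_{-I}(w,\lambda_0)/(\xi^\ast-w)\,dw$, so it suffices to prove that the first integral $I_1(y)$ converges to this same quantity as $y\to-\infty$.

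First I would upgrade the regularity of the solution produced by Theorem~\ref{T:uniqueness-S2}: the decay and derivative bounds \eqref{T:JOST:eq1}--\eqref{T:JOST:eq10} let one rerun the contraction in a space of continuous functions, so that the pointwise value $L(y)$ is well defined and bounded uniformly in $y$, and $\lambda\mapsto\omega$ is $C^1$ with $\partial_\lambda\omega$ bounded on each set $\{|\lambda|\le 4\calD\}$ --- this is exactly the purpose of the $\calD$-localized estimates \eqref{T:JOST:eq7}--\eqref{T:JOST:eq10}. Consequently $\xi^\ast$ stays bounded and, for $\zeta$ in a fixed window around $\lambda_0$, $\partial_\zeta\psi=-y-2\zeta t+\partial_\zeta\omega=|y|\,(1+o(1))>0$ once $|y|$ is large, so $\zeta\mapsto\psi(\dots,\zeta)$ is an orientation-preserving diffeomorphism there.

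Next I would split $I_1(y)=\int_{|\zeta-\lambda_0|\ge\delta}+\int_{|\zeta-\lambda_0|<\delta}$. On the far part the argument $\psi(\dots,\zeta)=\tau+(\lambda_0-\zeta)y-\zeta^2 t+\omega\to\pm\infty$ pointwise for every $\zeta\ne\lambda_0$, so \eqref{T:JOST:eq2} forces the numerator to $0$; the kernel is bounded by $\delta^{-1}$ and \eqref{T:JOST:eq1} supplies a $y$-independent integrable majorant in $\zeta$, whence dominated convergence annihilates this part. On the near part I would substitute $w=\psi(\dots,\zeta)$: since $\partial_\zeta\psi\approx|y|$ and $\zeta=\lambda_0+O(1/|y|)$, the measure transforms as $d\zeta/(\lambda_0-\zeta)\to dw/(\xi^\ast-w)$, the image interval expands to all of $\RR$, and replacing the second argument $\zeta$ by $\lambda_0$ costs an error controlled by \eqref{T:JOST:eq7} that is $O(\log|y|/|y|)$. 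Hence the near part tends to $\tfrac1\pi\fint_\RR\chi_{-I}(w,\lambda_0)/(\xi^\ast-w)\,dw=\chi_{-R}(\xi^\ast,\lambda_0)$, which cancels the second term in the identity for $L(y)$ and yields $L(y)\to0$, i.e. \eqref{E:Jost}.

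I expect the near-diagonal analysis to be the main obstacle: the Cauchy kernel is singular precisely where the numerator refuses to decay, so the two factors cannot be decoupled naively, and the principal-value structure must survive the nonlinear substitution $\zeta\mapsto w=\psi(\dots,\zeta)$. Two points require care: (i) that this map is a genuine orientation-preserving diffeomorphism of the shrinking $\zeta$-window onto an expanding $w$-interval, which rests on the lower bound $\partial_\zeta\psi\gtrsim|y|$ and hence on the boundedness of $\partial_\zeta\omega$; and (ii) that freezing the second argument of $\chi_{-I}$ at $\lambda_0$ introduces an error that vanishes uniformly, which is where the $\calD$-localized Lipschitz-in-$\lambda$ bounds \eqref{T:JOST:eq7}, \eqref{T:JOST:eq9} enter decisively. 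Once these are secured the cancellation is exact and the Jost asymptotics \eqref{E:Jost} follow.
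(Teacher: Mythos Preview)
Your pointwise strategy at $\lambda=\lambda_0$ is different from the paper's and hinges on the same cancellation (the dispersion relation $\chi_{-R}=H_\xi\chi_{-I}$), but there is a real gap at the place you yourself flag as the main obstacle. To run the substitution $w=\psi(\cdot,\zeta)$ on the near window you need $\partial_\zeta\psi>0$ there, i.e.\ $|\partial_\zeta\omega|<|y|(1+o(1))$ pointwise on $\{|\zeta-\lambda_0|<\delta\}$, uniformly as $y\to-\infty$. You claim this comes from ``rerunning the contraction in a space of continuous functions'', but it does not: differentiating the fixed-point equation in $\lambda$ produces a forcing term containing the combination
\[
(y+2\lambda t)\big[\partial_\xi\chi_{-R}(\psi,\lambda)-H_\lambda\,\partial_\xi\chi_{-I}(\psi,\lambda)\big],
\]
with an explicit factor of $y$. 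A priori this forcing has $L^2(d\lambda)$-norm of order $|y|$, and since the Hilbert transform does not act on $L^\infty$ you cannot set up a contraction in $C^1(\{|\lambda|\le4\calD\})$ either. So the uniform bound you need on $\partial_\zeta\omega$ requires first proving that the bracketed combination is small---which is exactly the cancellation you are trying to exhibit, but with the unknown $\omega$ still sitting inside the argument of $\partial_\xi\chi_-$. The argument is circular at this point.

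The paper avoids this by working in $L^2(d\lambda)$ throughout and decoupling the two issues. It never bounds $\omega_\lambda$ pointwise; instead (Step~2 of its proof) the contraction reduces matters to estimating the \emph{first} iterate $\omega_1$ and the free quantity $I_{32}$, in both of which $\psi$ is replaced by the explicit $x-\lambda y-\lambda^2t$ with no $\omega$ present. On the free flow (Step~3) the holomorphy of $\chi_-(\xi,\lambda)$ in $\xi\in\CC^-$ is used to rewrite $\chi_{-R}(x-\lambda y-\lambda^2t,\lambda)$ as a Hilbert transform in an auxiliary variable and obtain $\|\omega_1\|_{L^2}=O(\log|y|/|y|)$ and $\|I_{32}\|_{L^2}=O(1/|y|)$. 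These feed back through the contraction to give $\|\omega\|_{L^2}\cdot\|\omega_\lambda\|_{L^2}\to0$, and the pointwise conclusion $\omega(\cdot,\lambda_0)\to0$ is drawn only at the very end from the one-dimensional Sobolev inequality $|f(\lambda_0)|^2\le\|f\|_{L^2}\|f'\|_{L^2}$. In short: your change-of-variable encodes the right cancellation, but making it rigorous presupposes the very control that the paper obtains by a different, $L^2$-based route that sidesteps any need for a uniform pointwise bound on $\omega_\lambda$.
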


\begin{remark} 
Let us point out that all conditions from Theorem~\ref{T:JOST} holds for the spectral data constructed in the framework of the direct spectral transform 
(we assume again that our Cauchy data $v_0(x,y)$ have compact support). Almost all of them were proved above, and the proof of the remaining ones are rather standard.  Let us check, for example, (\ref{T:JOST:eq2})

Let $|\xi|$ be sufficiently large. We have 2 regions. 
\begin{enumerate}
\item Let $|\lambda|^2\ge |\xi|$. Then the second condition immediately follows from  (\ref{eq:xi-est-6})
\item Let $|\lambda|^2< |\xi|$. From Proposition \ref{Prop:chi_nprm} and  (\ref{eq:xi-est-6}) we obtain that there exists a constant $C_0$ such that 
\beq
\int\limits_{-\infty}^{\infty} |\chi_{+}(\tau,\lambda)-\chi_{-}(\tau,\lambda)|
d\tau < C_0 \ \ \mbox{for all} \ \ \lambda.
\eeq
If $|\xi|>2 |D_x+ \sqrt{|\xi|}D_y |$, then we can use the same estimates as in 
Proposition~\ref{prop:xi-est-1}, and 
\beq
|\chi(\xi,\lambda)|\le \frac{C_0}{\pi|\xi|} 
\eeq
It completes the proof.
\end{enumerate}

\end{remark}

\begin{remark} 
One can consider equation (\ref{E:Manakov-Santini}) without assuming that the  
spectral data $\chi_{-}(\xi,\lambda)$ is holomorphic in $\xi$ in the lower half-plane 
(or, equivalently, we do not assume that equation   (\ref{disp-realtions}) is fulfilled). In this situation the 
function $\psi(x,y,t,\lambda)$ will be also an eigenfunction for the Pavlov Lax operators 
$L$, $M$ for some $v(x,y,t)$ (see Theorem~\ref{T:uniqueness-S}), but the normalization of this eigenfunction will be different from(\ref{E:Jost}). 
\end{remark}

We also require to study the linearized version of  equation~(\ref{E:Manakov-Santini}).

\begin{lemma}\label{L:gn-inv-S}  Suppose that the scattering data $\chi_{-}(\xi,\lambda)$ satisfy the same constraints as in Theorem~\ref{T:uniqueness-S2} (which are fulfilled if $\chi_{-}(\xi,\lambda)$ was constructed through the small norm Cauchy data $v(x,y)$). Then, for $\forall g\in L^p(\RR,d\lambda)$, $p=2,4$ the integral equation
\beq\label{E:linearize-RH-S}
\begin{split}
f(x,y,t,\lambda)=&\frac 1{\pi}\fint_\RR \frac {\partial_\xi\chi_{-I}(\psi_{-}(x,y,t,\zeta) ,\zeta)}{\lambda-\zeta}f(x,y,t,\zeta)\,d\zeta\\
& -\partial_\xi\chi_{-R}(\psi_{-}(x,y,t,\lambda),\lambda)f(x,y,t,\lambda)+g(\lambda).
\end{split}
\eeq
admits a unique solution $f$ such that
\begin{equation}\label{E:gn-inv-1-S} 
\|f(x,y,t,\lambda)\|_{L^p(\RR,d\lambda)}  
\le 2\|g\|_{L^p(\RR,d\lambda)} .
\end{equation}
\end{lemma}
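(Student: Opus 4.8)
The plan is to read equation~(\ref{E:linearize-RH-S}) as a linear fixed-point problem $f=\mathcal{L}f+g$ in $L^p(\RR,d\lambda)$, with $x,y,t$ treated as frozen parameters. Abbreviating the two coefficients as $a(\lambda)=\partial_\xi\chi_{-I}(\psi_-(x,y,t,\lambda),\lambda)$ and $b(\lambda)=\partial_\xi\chi_{-R}(\psi_-(x,y,t,\lambda),\lambda)$, the principal-value integral on the right-hand side is exactly $H_\lambda[a\,f]$, so that
\[
\mathcal{L}f \;=\; H_\lambda[a\,f]\;-\;b\,f .
\]
First I would record, from the hypotheses inherited from Theorem~\ref{T:uniqueness-S2}, that both multipliers are uniformly small: $\|a\|_{L^\infty},\,\|b\|_{L^\infty}\le\tfrac14\tan(\tfrac\pi8)$. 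Since $a,b$ are bounded measurable and $f\in L^p$, the products $a\,f,\,b\,f$ lie in $L^p$ and $H_\lambda[a\,f]$ is well defined, so $\mathcal{L}$ maps $L^p$ into itself for $p=2,4$.

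The crux is a sharp estimate of the operator norm $\|\mathcal{L}\|_{L^p}$. The multiplication part is trivial: $\|b\,f\|_{L^p}\le\tfrac14\tan(\tfrac\pi8)\,\|f\|_{L^p}$. For the Hilbert-transform part I would invoke the sharp Riesz--Pichorides value of the $L^p$ norm of $H_\lambda$ on the line, $\|H_\lambda\|_{L^2\to L^2}=1$ and $\|H_\lambda\|_{L^4\to L^4}=\cot(\tfrac\pi8)$ (generally $\cot(\tfrac{\pi}{2p})$ for $p\ge2$). Combining with $\|a\,f\|_{L^p}\le\tfrac14\tan(\tfrac\pi8)\|f\|_{L^p}$ and the triangle inequality $\|\mathcal{L}\|_{L^p}\le\|H_\lambda\|_{L^p}\|a\|_{L^\infty}+\|b\|_{L^\infty}$ gives, for $p=2$,
\[
\|\mathcal{L}\|_{L^2}\;\le\;(1+1)\cdot\tfrac14\tan(\tfrac\pi8)\;=\;\tfrac12\tan(\tfrac\pi8)\;<\;\tfrac12 ,
\]
and, for $p=4$, using $\cot(\tfrac\pi8)\tan(\tfrac\pi8)=1$ and $\tan(\tfrac\pi8)=\sqrt2-1$,
\[
\|\mathcal{L}\|_{L^4}\;\le\;\tfrac14\big(\cot(\tfrac\pi8)\tan(\tfrac\pi8)+\tan(\tfrac\pi8)\big)\;=\;\tfrac14\big(1+\tan(\tfrac\pi8)\big)\;=\;\tfrac{\sqrt2}{4}\;<\;\tfrac12 .
\]
This is exactly where the peculiar constant $\tfrac14\tan(\tfrac\pi8)$ in the small-norm hypotheses earns its keep: it is calibrated so that even against the larger $L^4$ norm $\cot(\tfrac\pi8)=\sqrt2+1$ of the Hilbert transform the total operator norm stays below $\tfrac12$.

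With $\|\mathcal{L}\|_{L^p}\le\tfrac12<1$ established for $p=2,4$, I would conclude by the Neumann series (equivalently, the Banach contraction principle applied to the affine map $f\mapsto\mathcal{L}f+g$): the operator $I-\mathcal{L}$ is boundedly invertible on $L^p$, so (\ref{E:linearize-RH-S}) possesses the unique solution $f=(I-\mathcal{L})^{-1}g$, with
\[
\|f\|_{L^p(\RR,d\lambda)}\;\le\;\frac{1}{1-\|\mathcal{L}\|_{L^p}}\,\|g\|_{L^p(\RR,d\lambda)}\;\le\;\frac{1}{1-\tfrac12}\,\|g\|_{L^p(\RR,d\lambda)}\;=\;2\,\|g\|_{L^p(\RR,d\lambda)} ,
\]
which is precisely~(\ref{E:gn-inv-1-S}). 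The only genuinely nontrivial ingredient is the sharp $L^p$ norm of the Hilbert transform for $p=4$; everything else is the routine bookkeeping of a contraction argument. Accordingly, the main point requiring care is to confirm that the relevant operator really is the line Hilbert transform with the Pichorides constant $\cot(\tfrac{\pi}{2p})$ (so that the $p=4$ case closes with room to spare), rather than a softer Marcinkiewicz-type bound that might overshoot $\tfrac12$.
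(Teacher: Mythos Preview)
Your argument is correct and is essentially the same as the paper's: both write the equation as $f=\mathcal{L}f+g$ with $\mathcal{L}f=H_\lambda[a\,f]-b\,f$, invoke the Pichorides constant $\cot(\tfrac{\pi}{2p})$ for the $L^p$ norm of the Hilbert transform, and combine it with the pointwise bound $\tfrac14\tan(\tfrac\pi8)$ on $\partial_\xi\chi_{-}$ to get $\|\mathcal{L}\|_{L^p}\le\tfrac12$, then conclude by Neumann series. The only cosmetic difference is that the paper uses the slightly cruder uniform estimate $\|\mathcal{L}\|_{L^p}\le 2\cot(\tfrac\pi8)\|\partial_\xi\chi\|_{L^\infty}=\tfrac12$ for both $p=2,4$, whereas you track the two cases separately and obtain the sharper strict inequalities $\tfrac12\tan(\tfrac\pi8)$ and $\tfrac{\sqrt2}{4}$.
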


\begin{proof}
The Hilbert transform is a unitary operator in $L^2(\RR,d\lambda)$ and the norm of the Hilbert transform in 
$L^p(\RR,d\lambda)$, $2\le p<\infty$ is equal to $\cot\left(\frac{\pi}{2p}\right)$ (see \cite{GK,Pic}); therefore for $p=2$ or $p=4$ one has
\beq\label{E:small-data-norm}
\begin{split}
& \left\|\left.\left(\partial{\mathcal R}/{\partial \xi}\right) \right|_{\psi_{-}}(f)\right\|_{L^p(\RR,d\lambda)} \\
\le &\ 2\cot\left(\frac{\pi}{8}\right)\|\partial_\xi\mathcal \chi(\xi,\lambda) \ f\|_{L^p(\RR,d\lambda)}\\
\le &\ 2\cot\left(\frac{\pi}{8}\right)\|\partial_\xi\mathcal \chi(\xi,\lambda)\|_{L^\infty}\|f\|_{L^p(\RR,d\lambda)} \\
\le &\ \frac{1}{2}\|f\|_{L^p(\RR,d\lambda)}.
\end{split}
\eeq
Therefore $[1-\left(\partial{\mathcal R}/{\partial \xi}\right)|_{\psi_{-}}]$ is an invertible map  on $L^p(\RR,d\lambda)$ and the norm of the inverse operator in $L^p(\RR,d\lambda)$ is not greater than 2:
\[
f(x,y,t,\lambda)=\left(1-\partial{\mathcal R}/{\partial \xi}|_{\psi_{-}}\right)^{-1}g\quad \in L^p(\RR,d\lambda)
\]
and the estimate (\ref{E:gn-inv-1-S}) follows.
\end{proof}

Below we use the following simple corollary of the Sobolev embedding theorem:
\begin{lemma}\label{Sobolev-embedding-1}
Let $f(\lambda)$ be an element of $H^1(d\lambda)$, $\lambda\in\RR$. Then $f(\lambda)$ is a 
continuous function and 
\begin{equation}
|f(\lambda)|\le \sqrt{\|f\|_{L^2(d\lambda)}\cdot\|f_{\lambda}\|_{L^2(d\lambda)}}
\end{equation}
\end{lemma}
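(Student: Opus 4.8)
The plan is to prove the pointwise bound first for a dense subclass of smooth, rapidly decaying functions and then to extend it to all of $H^1$ by density. First I would recall that $H^1(\RR,d\lambda)$ embeds into the space of continuous functions vanishing at infinity: the Schwartz class (or $C_c^\infty(\RR)$) is dense in $H^1(\RR,d\lambda)$, and once the inequality is established for a smooth representative it automatically supplies the uniform control needed to pass to the limit. Hence it suffices to prove the estimate for $f$ smooth with $f(\lambda)\to 0$ as $|\lambda|\to\infty$, and then transfer it to a general $H^1$ function.

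The heart of the argument is an elementary identity coming from the fundamental theorem of calculus. Since $|f|^2\to 0$ at $\pm\infty$, integrating the derivative of $|f|^2$ gives, for every $\lambda\in\RR$,
$$
|f(\lambda)|^2=2\int_{-\infty}^{\lambda}\Re\big(\overline{f(\mu)}\,f'(\mu)\big)\,d\mu=-2\int_{\lambda}^{+\infty}\Re\big(\overline{f(\mu)}\,f'(\mu)\big)\,d\mu.
$$
Bounding the real parts by $|f||f'|$ and averaging the two expressions yields
$$
|f(\lambda)|^2\le\int_{\RR}|f(\mu)|\,|f'(\mu)|\,d\mu.
$$
A single application of the Cauchy--Schwarz inequality then gives $|f(\lambda)|^2\le\|f\|_{L^2(d\lambda)}\,\|f_{\lambda}\|_{L^2(d\lambda)}$, which is precisely the claimed estimate after taking square roots.

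Finally I would carry out the density step. Given $f\in H^1(\RR,d\lambda)$, choose $f_n$ in the smooth subclass with $f_n\to f$ in $H^1$; applying the inequality to $f_n-f_m$ shows that $(f_n)$ is uniformly Cauchy, so it converges uniformly to a continuous function that coincides with $f$ almost everywhere and which we adopt as the continuous representative. Passing to the limit in the inequality written for $f_n$ then delivers the bound for $f$ itself. I expect no genuine obstacle here: the only points requiring mild care are the justification that $|f|^2$ is absolutely continuous with derivative $2\Re(\overline{f}f')$, so that the fundamental theorem of calculus applies (immediate in the smooth class), and the bookkeeping of the complex-valued case through the real part—both entirely routine.
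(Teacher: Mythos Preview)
Your proof is correct and is the standard argument for this Gagliardo--Nirenberg-type interpolation inequality. The paper itself does not prove the lemma: it merely labels it ``a simple corollary of the Sobolev embedding theorem'' and states it without argument, so there is no paper proof to compare against. Your approach via the fundamental theorem of calculus applied to $|f|^2$, followed by Cauchy--Schwarz and a density argument, is exactly the canonical route and supplies the details the paper omits.
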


\begin{theorem} \label{T:uniqueness-S} {\bf (Global solvability for the IST equation~(\ref{E:Manakov-Santini}) -- part 2.)} 
Suppose that $\chi(\xi,\lambda)$ satisfies the same constraints as in Theorem~\ref{T:uniqueness-S2} and, in addition,
\begin{align}
\label{eq:xi-est-6-bis}
\| \partial_{\xi}^n\chi(\xi,\lambda)\|_{L^{\infty}(d\xi)} &=O \left( \frac{1}{\lambda^{2+n}} \right), \ \ n=0,1,2,3,   \nonumber \\
\| \partial_{\xi}^n\partial_{\lambda}\chi(\xi,\lambda)\|_{L^{\infty}(d\xi)} &=O \left( \frac{1}{\lambda^{3+n}} \right), \ \ n=0,1.
\end{align}
Let us denote:
$$
\omega = \omega(x,y,t,\lambda)=\psi(x,y,t,\lambda)-( x-\lambda y-\lambda^2 t)
$$
Then:
\begin{enumerate}
\item For all $x,y\in\RR$, $t\ge0$ the function $\omega(x,y,t,\lambda)$ lies 
in the space $H^1(d\lambda)$ and continuously depends on $x,y,t$ as an element 
of $L^2(d\lambda)\cap L^{\infty}(d\lambda)$. The norm of $\omega$ in the space
$L^2(d\lambda)$ is uniformly bounded in $x,y,t$ (but the $H^1$-norm may 
be unbounded).
\item
For all $x,y\in\RR$, $t\ge 0$ the following derivatives of $\omega$: 
$$
\partial_x \omega, \ \ \partial_y \omega, \ \ \partial_t \omega,  \ \ \partial^2_x \omega, 
\ \ \partial^2_y \omega, \ \ \partial_x\partial_y \omega, \ \ \partial_t\partial_x \omega,
$$
are well-defined as elements of the space $L^2(\RR,d\lambda)$, and $\psi(x,y,0,\lambda)=\varphi(x,y,\lambda)$,
continuously depend on $x,y,t$ and are uniformly bounded in $\RR\times\RR\times \overline{\RR^+}$.
\end{enumerate}
\end{theorem}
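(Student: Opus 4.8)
The plan is to reduce the entire statement to the single linear equation~(\ref{E:linearize-RH-S}), whose solvability with inverse norm at most $2$ (on $L^2(d\lambda)$ and on $L^4(d\lambda)$) is guaranteed by Lemma~\ref{L:gn-inv-S}. Writing the inverse problem in the fixed-point form~(\ref{E:FP-ite}), $\omega={\mathcal R}[\omega+x-\lambda y-\lambda^2 t]$ with ${\mathcal R}$ given by~(\ref{E:spectral-operator}), I note that the linearization of the right-hand side in the unknown is precisely the operator $(\partial{\mathcal R}/\partial\xi)|_{\psi}$ of Lemma~\ref{L:gn-inv-S}. Consequently, differentiating this equation with respect to any of the variables $\lambda,x,y,t$ yields an identity of the form $[1-(\partial{\mathcal R}/\partial\xi)|_{\psi}]\,(\partial_s\omega)=g_s$, in which $g_s$ collects only the \emph{explicit} $s$-dependence of $\psi=x-\lambda y-\lambda^2 t+\omega$. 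Each such derivative therefore exists in $L^2(d\lambda)$ and obeys $\|\partial_s\omega\|_{L^2}\le 2\|g_s\|_{L^2}$ as soon as the inhomogeneity $g_s$ is shown to lie in $L^2(d\lambda)$. The whole proof is thus a bootstrap: at each stage one checks that the relevant $g_s$ is square-integrable and uniformly bounded in $(x,y,t)$, using the large-$\lambda$ bounds~(\ref{eq:xi-est-6-bis}) and, where $t>0$ helps, the sharper estimate~(\ref{E:decay}) of Proposition~\ref{P:decay}.

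First I would establish $\omega\in H^1(d\lambda)$ by differentiating the equation directly in $\lambda$. Since $H_\lambda$ commutes with $\partial_\lambda$ on the relevant class, the inhomogeneity $g_\lambda$ contains the harmless terms $-\partial_\lambda\chi_{-R}(\psi,\lambda)$ and $H_\lambda[\partial_\lambda\chi_{-I}(\psi,\cdot)]$, which are $O(\lambda^{-3})$ by~(\ref{eq:xi-est-6-bis}), together with the contribution of the explicit factor $\partial_\lambda(x-\lambda y-\lambda^2 t)=-y-2\lambda t$ multiplying $\partial_\xi\chi_{-}=O(\lambda^{-3})$, which is $O(\lambda^{-2})$; all of these are square-integrable (and $H_\lambda$ is bounded on $L^2$), so $\partial_\lambda\omega\in L^2(d\lambda)$ and $\omega\in H^1(d\lambda)$. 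Lemma~\ref{Sobolev-embedding-1} then gives $\omega\in L^\infty(d\lambda)$ with $|\omega|\le\sqrt{\|\omega\|_{L^2}\|\partial_\lambda\omega\|_{L^2}}$. Observe that the factor $2\lambda t$ makes $\|g_\lambda\|_{L^2}$ grow with $t$, which is exactly why the $H^1$-norm of $\omega$ need not be uniformly bounded, whereas the $L^2$-norm from Theorem~\ref{T:uniqueness-S2} is. To turn this formal computation into a genuine one I would argue through difference quotients: $h^{-1}(\omega(\lambda+h)-\omega(\lambda))$ solves an approximate version of~(\ref{E:linearize-RH-S}) whose inhomogeneity converges in $L^2$, and the uniform invertibility from Lemma~\ref{L:gn-inv-S} forces these quotients to converge to the solution of the limiting equation.

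The parameter derivatives follow by the same mechanism, phrased as an implicit-function argument for the $C^1$ map $(x,y,t,\omega)\mapsto\omega-{\mathcal R}[\omega+x-\lambda y-\lambda^2 t]$ on $L^2(d\lambda)$, whose $\omega$-differential $1-(\partial{\mathcal R}/\partial\xi)|_{\psi}$ is invertible by Lemma~\ref{L:gn-inv-S}. Here the inhomogeneities are $g_x=(\partial{\mathcal R}/\partial\xi)|_{\psi}[1]$, $g_y=(\partial{\mathcal R}/\partial\xi)|_{\psi}[-\lambda]$ and $g_t=(\partial{\mathcal R}/\partial\xi)|_{\psi}[-\lambda^2]$, which by~(\ref{eq:xi-est-6-bis}) decay like $O(\lambda^{-3})$, $O(\lambda^{-2})$ and $O(\lambda^{-1})$ respectively --- all square-integrable at infinity with bounds independent of $(x,y,t)$ --- so $\partial_x\omega,\partial_y\omega,\partial_t\omega\in L^2(d\lambda)$ with uniform norm. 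Differentiating these three equations once more in $\lambda$ (as in the previous paragraph) places each first derivative in $H^1(d\lambda)$, hence in $L^\infty(d\lambda)$ by Lemma~\ref{Sobolev-embedding-1}. With $\partial_x\omega,\partial_y\omega,\partial_t\omega$ now bounded, a second parameter differentiation gives, for each of $\partial_x^2\omega,\partial_x\partial_y\omega,\partial_y^2\omega,\partial_t\partial_x\omega$, an equation $[1-(\partial{\mathcal R}/\partial\xi)|_{\psi}](\partial_s\partial_{s'}\omega)=(\partial^2{\mathcal R}/\partial\xi^2)|_{\psi}[\partial_s\psi,\partial_{s'}\psi]$, whose right-hand side carries $\partial_\xi^2\chi_{-}=O(\lambda^{-4})$ times products of the explicit factors $1,-\lambda,-\lambda^2$ and the now-bounded functions $\partial_s\omega$; the worst multiplier $\lambda^2$, occurring in $\partial_t\partial_x\omega$, still produces $O(\lambda^{-2})\in L^2(d\lambda)$, so all four second derivatives lie in $L^2(d\lambda)$ with uniform norm. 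Continuity in $(x,y,t)$ is inherited from the continuous dependence of each $g_s$ on the parameters together with the uniform resolvent bound of Lemma~\ref{L:gn-inv-S}, and the identity $\psi(x,y,0,\lambda)=\varphi(x,y,\lambda)$ is immediate, since at $t=0$ equation~(\ref{E:Manakov-Santini}) coincides with~(\ref{E:Manakov-Santini-0}), which has a unique solution by Theorem~\ref{T:uniqueness-S2}.

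I expect the genuine difficulty to be twofold and to lie in the bookkeeping rather than in any isolated deep step. First, one must justify uniformly in $(x,y,t)$ that the formal derivatives are the actual ones: this requires dominated-convergence control of the principal-value Hilbert-transform terms under the difference quotients, and a careful verification that every constant produced by the $\chi_{-}$-estimates is genuinely independent of $(x,y,t)$. Second, and most delicate, is the borderline $\partial_t$-estimate: the factor $\lambda^2$ generated by $\partial_t(x-\lambda y-\lambda^2 t)$ is compensated only marginally by the $O(\lambda^{-3})$ decay of $\partial_\xi\chi_{-}$, leaving an $O(\lambda^{-1})$ tail that is square-integrable but not absolutely integrable; it is precisely at the level of $\partial_t\omega$ that the scheme is tight. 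For $t>0$ the stronger decay~(\ref{E:decay}) removes the difficulty altogether, and the extra care demanded exactly at $t=0$ is what is flagged in the Remark following the main Theorem.
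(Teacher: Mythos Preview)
Your strategy is correct and coincides with the paper's: both reduce everything to the invertibility of $1-(\partial\mathcal R/\partial\xi)|_\psi$ supplied by Lemma~\ref{L:gn-inv-S}, differentiate the fixed-point equation in $\lambda$ and in $x,y,t$, and read off the decay of each inhomogeneity from~(\ref{eq:xi-est-6-bis}). Two points of comparison are worth making.

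For part~1 the paper does not argue through difference quotients; it runs the Picard iteration for $\omega^{(n)}$ and $\omega_\lambda^{(n)}$ \emph{simultaneously}, observing that the $\omega_\lambda^{(n)}$-equation is linear with the same $\tfrac12$-contractive kernel and an inhomogeneity $g_\lambda^{(n)}$ that depends only on $\omega^{(n)}$. Uniform $L^2$-boundedness of $\omega_\lambda^{(n)}$ plus $L^2$-convergence of $\omega^{(n)}$ then feeds Lemma~\ref{Sobolev-embedding-1} to upgrade to $L^\infty$-convergence of $\omega^{(n)}$, which in turn forces $g_\lambda^{(n)}\to g_\lambda$ and hence $\omega_\lambda^{(n)}\to\omega_\lambda$ in $L^2$. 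Your difference-quotient route reaches the same destination; the paper's device simply avoids the dominated-convergence bookkeeping you flag at the end.

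For the second derivatives the paper takes the shorter $L^4$ route rather than your detour through $H^1\hookrightarrow L^\infty$. Since $g_x,g_y,g_t\in L^2\cap L^4$ and Lemma~\ref{L:gn-inv-S} applies verbatim at $p=4$, one gets $\omega_x,\omega_y,\omega_t\in L^4$ directly; then $\partial_\xi^2\chi_{-}\cdot\psi_\alpha\psi_\beta\in L^2$ follows by H\"older (bounded $\times L^4\times L^4$) without ever needing $\omega_\alpha\in L^\infty$. Your $H^1$ detour is valid, but note that closing it rigorously for, say, $\partial_\lambda\omega_t$ produces a cross-term $\partial_\xi^2\chi_{-}\cdot\omega_\lambda\cdot\omega_t$ which is only $L^1$ from $L^2\times L^2$ alone --- you still need the $L^4$ bound you invoked at the outset to place it in $L^2$. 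So the paper's choice is not merely stylistic: it removes one layer of bootstrap.
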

\begin{proof} 
\begin{enumerate}
\item
To construct $\omega$, it is convenient to run the iteration procedure  
(\ref{E:Manakov-Santini}), simultaneously for $\omega$ and $\omega_{\lambda}$:
\begin{equation}
\label{E:Manakov-Santini-2}
\begin{split}
\omega^{(n+1)}=&-\chi_{-R}(x-\lambda y -\lambda^2 t + \omega^{(n)},\lambda)+
H_{\lambda}\left[ \chi_{-I}(x-\lambda y -\lambda^2 t + \omega^{(n)},\lambda)  
\right]\\
\omega_{\lambda}^{(n+1)}=&g^{(n)}_{\lambda} -\partial_{\xi}\chi_{-R}(x-\lambda y -\lambda^2 t + \omega^{(n)},\lambda)\,\omega_{\lambda}^{(n)}+\\
&+H_{\lambda}\left[ \partial_{\xi}\chi_{-I}(x-\lambda y -\lambda^2 t + \omega^{(n)},\lambda)\, \omega_{\lambda}^{(n)} \right],
\end{split}
\end{equation}
where
$H_{\lambda}$ is the Hilbert transform with respect to $\lambda$,
\beq
\begin{split}
g^{(n)}_{\lambda}=&-\partial_{\lambda}\chi_{-R}(x-\lambda y -\lambda^2 t + \omega^{(n)},\lambda)+\\
&+H_{\lambda}\left[ \partial_{\lambda}\chi_{-I}(x-\lambda y -\lambda^2 t + \omega^{(n)},\lambda) \right]+ \\
&+(\partial_{\xi}\chi_{-R}(x-\lambda y -\lambda^2 t + \omega^{(n)},\lambda))\cdot
(y+2\lambda t)-\\
&-H_{\lambda}\left[ (\partial_{\xi}\chi_{-I}(x-\lambda y -\lambda^2 t + \omega^{(n)},\lambda))\cdot(y+2\lambda t) \right]
\end{split}
\eeq
In any compact area in the $x,y,t$ space the function $g^{(n)}_{\lambda}$ is bounded in 
$L^2(d\lambda)$ uniformly in $\omega$. If $\|g^{(n)}_{\lambda} \|_{L^2(d\lambda)}<F$,
then for all $n$,  $\|\omega_{\lambda}^{(n)} \|_{L^2(d\lambda)}<2F$. 
Therefore by Lemma~\ref{Sobolev-embedding-1} the $L^2(d\lambda)$ convergence of 
$\omega^{(n)}$ implies the $L^{\infty}(d\lambda)$ convergence of $\omega^{(n)}$ and the convergence of  $\omega_{\lambda}^{(n)}$ in $L^2(d\lambda)$.

\item
By taking derivatives of both sides of (\ref{E:Manakov-Santini}), we  obtain the linearized integral equation by:
\bea
&\psi_x+\partial_{\xi}\chi_{-R}(\psi,\lambda) \cdot\psi_x=
1+H_{\lambda}\left[\partial_{\xi}\chi_{-I}(\psi,\lambda)\psi_x(x,y,t,\lambda)
\right],\label{E:Manakov-Santini-d-x-S}\\
&\psi_y+ \partial_{\xi}\chi_{-R}(\psi,\lambda) \cdot\psi_y=
-\lambda-H_{\lambda}\left[\partial_{\xi}\chi_{-I}(\psi,\lambda)\psi_y(x,y,t,\lambda)
\right],\label{E:Manakov-Santini-d-y-S}\\
&\psi_t+ \partial_{\xi}\chi_{-R}(\psi,\lambda) \cdot \psi_t=
-\lambda^2+H_{\lambda}\left[\partial_{\xi}\chi_{-I}(\psi,\lambda)\psi_t(x,y,t,\lambda)
\right].\label{E:Manakov-Santini-d-t-S}\
\eea
In terms of $\omega(x,y,t,\lambda)$, equations (\ref{E:Manakov-Santini-d-x-S})-(\ref{E:Manakov-Santini-d-t-S}) take the form:
\begin{equation}\label{E:Manakov-Santini-d-y-r-S}
\omega_{\alpha}(x,y,t,\lambda)= g_{\alpha}+H_{\lambda}[\partial_{\xi}\chi_{-I}(\psi,\lambda)\,
\omega_{\alpha}]-\partial_{\xi}\chi_{-R}(\psi,\lambda)\,\omega_{\alpha}
\end{equation}
where $\alpha\in\{x,y,t\}$, and
\beq\label{E:g-1-2-S}
\begin{split}
g_x(x,y,t,\lambda)=&H_{\lambda}[\partial_{\xi}\chi_{-I}(\psi,\lambda)]-
\partial_{\xi}\chi_{-R}(\psi,\lambda),\\
g_y(x,y,t,\lambda)=&-H_{\lambda}[\partial_{\xi}\chi_{-I}(\psi,\lambda)\,\lambda]+
\partial_{\xi}\chi_{-R}(\psi,\lambda)\,\lambda,\\
g_t(x,y,t,\lambda)=& -H_{\lambda}[\partial_{\xi}\chi_{-I}(\psi,\lambda)\,\lambda^2]+
\partial_{\xi}\chi_{-R}(\psi,\lambda)\,\lambda^2.
\end{split}
\eeq
From (\ref{eq:xi-est-6-bis}) it follows that $g_x,\,g_y,\,g_t\in L^2(\RR,d\lambda)\cap  L^4(\RR,d\lambda)$.
Therefore the existence of $\psi_x$, $\psi_y$, $\psi_t$ such that $\partial_x(\psi-( x-\lambda y-\lambda^2 t))$, $\partial_y(\psi-( x-\lambda y-\lambda^2 t))$, $\partial_t(\psi-( x-\lambda y-\lambda^2 t))\in L^2(\RR,d\lambda)\cap  L^4(\RR,d\lambda)$ follows from Lemma \ref{L:gn-inv-S}. 

For the second derivatives of the wave function we have:
\bea
&\psi_{\alpha\beta}=-\partial_{\xi}^2\chi_{-R}(\psi,\lambda) \, \psi_\alpha\psi_\beta-
\partial_{\xi}\chi_{-R}(\psi,\lambda)\, \psi_{\alpha\beta}+\nonumber \\
&+H_{\lambda}\left[\partial_{\xi}^2\chi_{-I}(\psi,\lambda)\, \psi_\alpha\psi_\beta\right]+
H_{\lambda}\left[\partial_{\xi}\chi_{-I}(\psi,\lambda)\, \psi_{\alpha\beta}\right]=
\label{E:Manakov-Santini-d-xx-S}\\
&=g_{\alpha\beta}-\partial_{\xi}\chi_{-R}(\psi,\lambda)\,\psi_{\alpha\beta}+
H_{\lambda}\left[\partial_{\xi}\chi_{-I}(\psi,\lambda)\,\psi_{\alpha\beta}\right],\nonumber
\eea
where
$$
g_{\alpha\beta}=-\partial_{\xi}^2\chi_{-R}(\psi,\lambda)\,\psi_\alpha\psi_\beta+
H_{\lambda}\left[ \partial_{\xi}^2\chi_{-I}(\psi,\lambda)\,\psi_\alpha\psi_\beta\right].
$$

From (\ref{eq:xi-est-6-bis}) and the properties of the first derivatives we obtain that $g_{xx}$,  $g_{xy}$,
$g_{xt}$, $g_{xy}$ belong to  $L^2(\RR,d\lambda)$; therefore equations~(\ref{E:Manakov-Santini-d-xx-S}) are 
uniquely solvable in   $L^2(\RR,d\lambda)$. 

Taking into account that $\omega$ is continuous in $x,y,t$ as an element of 
$L^{\infty}(d\lambda)$, we obtain that all coefficients of the linear equations are continuous 
in $L^{\infty}(d\lambda)$. This implies that the solutions are also continuous.
\end{enumerate}
\end{proof}

\subsection{Eigenfunctions of the Lax equation and the Cauchy problem}\label{S:lax-cauchy}

\begin{theorem}\label{L:eigenfunction-inv-0-S} {\bf (Global solvability for small initial data)} 
Suppose $v_0(x,y)\in\mathfrak S_{x,y}$ satisfying (\ref{E:v-supp}) and the sufficiently small condition from Definition~\ref{def:small-data}.
Let $\psi(x,y,t,\lambda)$ be the solution of the nonlinear inverse problem (\ref{E:Manakov-Santini}) obtained in Theorem \ref{T:uniqueness-S} with the data $\chi(\xi,\lambda)$ constructed 
from $v_0(x,y)$ through the direct problem. Define
\begin{equation}
v(x,y,t)=-\frac 1{\pi}\int_{\mathbb R}\chi_{-I}(\psi(x,y,t,\lambda),\lambda)d\lambda,\label{E:potential-inv-S}
\end{equation}
Then
\begin{enumerate}
\item
\begin{gather}
v(x,y,t)=\overline {v(x,y,t)}, \label{E:potential-regularity-S}\\ 
v,\,v_x,\,v_y,\,v_{xx},\,v_{xy},v_{xt},\,v_{yy} \in 
C(\RR\times\RR\times \overline{\RR^+})\cap L^\infty(\RR\times\RR\times \overline{\RR^+}),
\label{E:eigenfunction-inv-1-S}
\end{gather}
\item Assume, in addition, that for $\chi(\xi,\lambda)$ we have estimates from Proposition~\ref{P:decay}. Then,
for all $t>0$, the function $v_{t}(x,y,t)$ is well-defined and continuous in all variables. 
\item  Function $\psi(x,y,t,\lambda)$ satisfies the Lax equations in the 
space $L^2(d\lambda)$. More precisely, for each $x,y,t$, functions 
$L\psi$, $M\psi$ are well-defined elements of $L^2(d\lambda)$ and 
\begin{eqnarray} 
&&L\psi=\partial_y\psi+\left(\lambda+v_x\right)\partial_x \psi=0, 
\label{E:pavlov-inv-x-S}\\
&&M\psi=\partial_t\psi+\left(\lambda^2+\lambda v_x-v_y\right)\partial_x\psi=0 \label{E:pavlov-inv-y-S}
\end{eqnarray}
for almost all $\lambda\in\RR$.
\item
Let us define a pair of functions  $\Psi^{\pm}(x,y,t,\lambda)$, $\lambda\in\RR$ by
\begin{align}
\label{E:analytic-1}
&\Psi^{-}(x,y,t,\lambda) = \psi(x,y,t,\lambda) + 
\chi_{-}(\psi(x,y,t,\lambda),\lambda),
\nonumber
\\
&\Psi^{+}(x,y,t,\lambda+i0) = \psi(x,y,t,\lambda) + 
\overline{\chi_{-}}(\psi(x,y,t,\lambda),\lambda).
\end{align}
Then, for each $x,y\in\RR$, $t\ge0$, these functions admit natural analytic 
continuation in $\lambda$ to the lower half-plane $\CC^-$ and the upper half-plane 
$\CC^+$ respectively.
\item Denote by $\Psi(x,y,t,\lambda)$, $\lambda\in\CC\backslash\RR$ the function, 
coinciding  with the analytic 
continuation of $\Psi^{+}(x,y,t,\lambda)$ for $\Im\lambda>0$ and with the analytic 
continuation of  $\Psi^{-}(x,y,t,\lambda)$ for $\Im\lambda<0$. Then 
we have the following integral representation:
\begin{equation}
\Psi(x,y,t,\lambda)=x-\lambda y-\lambda^2 t-\frac 1{\pi}\int_{\mathbb R}\frac {\chi_{-I}(\psi(x,y,t,\zeta),\zeta)}{\zeta-\lambda}d\zeta, \ \lambda\in\mathbb C ^\pm.
\label{E:eigenfunction-inv-S}
\end{equation}
Denote by $\hat\omega=\hat\omega(x,y,t,\lambda)$ the regular part of the  wave function: 
$\hat\omega=[\Psi-(x-\lambda y-\lambda^2 t)]$. 

Then for each fixed $\lambda\in\CC^\pm\backslash\RR$ we have:
\begin{equation}\label{E:eigen-condition-x-S}
\hat\omega,\,\hat\omega_x,\,\hat\omega_y,\,\hat\omega_t,\,\hat\omega_{xx},\,\hat\omega_{xy},
\hat\omega_{xt},\,\hat\omega_{yy} \in 
C(\RR\times\RR\times \overline{\RR^+})\cap L^\infty(\RR\times\RR\times \overline{\RR^+}), 
\end{equation}
and for any $\lambda\in\CC^\pm\backslash\RR$ the analytic wave function $\Psi(x,y,t,\lambda)$ satisfies the Lax pair
\begin{eqnarray}
&&L\Psi=\partial_y\Psi+\left(\lambda+v_x\right)\partial_x \Psi=0,\\
&&M\Psi=\partial_t\Psi+\left(\lambda^2+\lambda v_x-v_y\right)\partial_x\Psi=0,
\end{eqnarray}
identically in $x,y,t$.

\item For $t=0$ the function $v(x,y,t)$, constructed in terms of the  inverse 
spectral transforms via (\ref{E:potential-inv-S}), coincides  with the Cauchy data $v_0(x,y)$ for the direct 
spectral transform:
\begin{equation}
v(x,y,0)=v_0(x,y),\label{E:initial-v-S}
\end{equation}

\end{enumerate}
\end{theorem}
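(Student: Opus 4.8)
The plan is to derive all six assertions from the single reconstruction formula (\ref{E:potential-inv-S}), combining the $L^2$/$L^\infty$ regularity of $\psi$ and its derivatives already secured in Theorem~\ref{T:uniqueness-S} with the decay of $\chi_-$ and its $\xi$-derivatives from Propositions~\ref{Prop:chi_nprm}, \ref{prop:xi-est-1} and \ref{P:decay}. First, reality (\ref{E:potential-regularity-S}) is immediate: since $\chi_{-R},\chi_{-I}$ are real and $H_\lambda$ preserves real functions, the operator $\mathcal R$ of Theorem~\ref{T:uniqueness-S2} maps real functions to real functions, so $\omega$ and hence $\psi$ are real, whence $\chi_{-I}(\psi,\lambda)$ is real and $v=\overline v$. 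For the regularity (\ref{E:eigenfunction-inv-1-S}) I would differentiate (\ref{E:potential-inv-S}) under the integral, producing for instance
$$v_x=-\frac 1\pi\int_\RR\partial_\xi\chi_{-I}(\psi,\lambda)\,\psi_x\,d\lambda,\qquad v_{xx}=-\frac 1\pi\int_\RR\left[\partial_\xi^2\chi_{-I}(\psi,\lambda)\,\psi_x^2+\partial_\xi\chi_{-I}(\psi,\lambda)\,\psi_{xx}\right]d\lambda,$$
and analogously for $v_y,v_{xy},v_{yy},v_{xt}$. Writing each derivative $\psi_\alpha$ as a polynomial in $\lambda$ plus $\omega_\alpha$, I would estimate the polynomial part through the pointwise decay $\|\partial_\xi^k\chi\|_{L^\infty(d\xi)}=O(\lambda^{-2-k})$ of Proposition~\ref{prop:xi-est-1}, and the $\omega_\alpha$ part by Cauchy--Schwarz using the uniform $L^2(d\lambda)$ bounds on $\omega_\alpha,\omega_{\alpha\beta}$ from Theorem~\ref{T:uniqueness-S}; dominated convergence and the continuity of the $\omega$-derivatives as $L^2$-valued maps in $(x,y,t)$ then legitimize differentiation under the integral and give membership in $C\cap L^\infty$. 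Part~2 is the same computation for $v_t=-\frac1\pi\int\partial_\xi\chi_{-I}(\psi,\lambda)\,\psi_t\,d\lambda$ with $\psi_t=-\lambda^2+\omega_t$: the factor $\lambda^2$ would defeat integrability against the generic $O(\lambda^{-3})$ decay, and it is precisely the sharper bound $|\partial_\xi\chi_-(\psi,\lambda)|=O(\lambda^{-5})$ of Proposition~\ref{P:decay}, valid only for $t>0$ along the curve $\xi=\psi$, that restores convergence, explaining why $t=0$ is excluded.

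The crux is part~3. I would apply $L=\partial_y+(\lambda+v_x)\partial_x$ to the inverse equation (\ref{E:Manakov-Santini}), written as $\psi+\chi_{-R}(\psi,\lambda)-H_\lambda[\chi_{-I}(\psi,\lambda)]=x-\lambda y-\lambda^2t$. Using $\partial_{x,y}\chi_{-R/I}(\psi,\lambda)=\partial_\xi\chi_{-R/I}(\psi,\lambda)\psi_{x,y}$, the Sokhotski--Plemelj commutator $\lambda H_\lambda[g]=H_\lambda[\lambda g]+\frac1\pi\int_\RR g\,d\lambda'$, and the identity $v_x=-\frac1\pi\int\partial_\xi\chi_{-I}(\psi,\lambda')\psi_x\,d\lambda'$, all inhomogeneous terms cancel and one is left with the \emph{homogeneous} linearized equation
$$L\psi+\partial_\xi\chi_{-R}(\psi,\lambda)\,L\psi-H_\lambda\!\left[\partial_\xi\chi_{-I}(\psi,\lambda)\,L\psi\right]=0.$$
Lemma~\ref{L:gn-inv-S} with $g=0$ then forces $L\psi=0$ in $L^2(d\lambda)$; the identical argument, with the quadratic commutator $\lambda^2H_\lambda[g]=H_\lambda[\lambda^2g]+\frac1\pi\int(\lambda+\lambda')g\,d\lambda'$ producing exactly the combination $\lambda v_x-v_y$, gives $M\psi=0$. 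I expect the main obstacle to be rigour at this stage: I must verify that $L\psi$ and $M\psi$ genuinely lie in $L^2(d\lambda)$, so that Lemma~\ref{L:gn-inv-S} applies, and that the interchange of $\partial_{x,y,t}$ with the principal-value operator $H_\lambda$ and the manipulations of singular integrals are justified under the established decay — which is where the bookkeeping assembled for part~1 is reused.

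Parts~4--6 are then short consequences. For part~4, by Sokhotski--Plemelj the right-hand side of (\ref{E:Manakov-Santini}) augmented by $i\chi_{-I}(\psi,\lambda)$ is the $\CC^-$ boundary value of the Cauchy integral in (\ref{E:eigenfunction-inv-S}), which is holomorphic off $\RR$ and $O(1/\lambda)$ at infinity; hence $\Psi^-=\psi+\chi_-(\psi,\lambda)$ continues analytically to $\CC^-$ and $\Psi^+=\overline{\Psi^-}$ to $\CC^+$. For part~5, applying $L$ to (\ref{E:eigenfunction-inv-S}) and writing $\psi_y(\zeta)+(\lambda+v_x)\psi_x(\zeta)=L\psi(\zeta)+(\lambda-\zeta)\psi_x(\zeta)=(\lambda-\zeta)\psi_x(\zeta)$ by part~3, the factor $(\lambda-\zeta)$ cancels the Cauchy kernel and leaves exactly $-v_x$, which cancels $L(x-\lambda y-\lambda^2t)=v_x$; thus $L\Psi\equiv0$ (and similarly $M\Psi\equiv0$) identically in $x,y,t$, with the regularity (\ref{E:eigen-condition-x-S}) following as in part~1 since the extra kernel $1/(\zeta-\lambda)$ is harmless for $\lambda\notin\RR$. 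Finally, for part~6, at $t=0$ equation (\ref{E:Manakov-Santini}) reduces to (\ref{E:Manakov-Santini-0}), which $\varphi$ solves, so uniqueness (Theorem~\ref{T:uniqueness-S2}) gives $\psi(x,y,0,\lambda)=\varphi(x,y,\lambda)$, and (\ref{E:potential-inv-S}) then coincides with the direct-problem reconstruction (\ref{eq:v-rieamann}), yielding $v(x,y,0)=v_0(x,y)$.
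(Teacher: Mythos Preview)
Your proposal is correct and follows essentially the same route as the paper: reality from the reality of the iteration operator, regularity of $v$ and its derivatives by differentiating (\ref{E:potential-inv-S}) under the integral and combining the $\lambda$-decay of $\partial_\xi^k\chi_-$ with the uniform $L^2(d\lambda)$ bounds on $\omega$ and its derivatives, Proposition~\ref{P:decay} for $v_t$ at $t>0$, the commutator identity $\lambda H_\lambda[f]=H_\lambda[\lambda f]+\frac1\pi\int f$ to reduce $L\psi$ and $M\psi$ to the homogeneous linearized equation and kill them via Lemma~\ref{L:gn-inv-S}, Sokhotski--Plemelj for the analytic continuation and the Cauchy representation (\ref{E:eigenfunction-inv-S}), and uniqueness at $t=0$ to recover $v_0$.

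One small caveat on part~3: your sentence ``the quadratic commutator \ldots producing exactly the combination $\lambda v_x-v_y$'' is too quick. The commutator produces $-\lambda v_x$ and the integral $\frac1\pi\int \lambda'\,\partial_\xi\chi_{-I}(\psi,\lambda')\psi_x\,d\lambda'$, which is \emph{not} $v_y$ on the nose; one must invoke the already-proved $L\psi=0$ to rewrite $\lambda'\psi_x=-\psi_y-v_x\psi_x$ inside that integral, yielding $v_y+v_x^2$, and the extra $v_x^2$ then cancels against the $v_x^2$ coming from the $\lambda v_x\partial_x$ term. The paper carries this out explicitly. This is a bookkeeping detail rather than a gap, but without it the inhomogeneous terms do not cancel. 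Your direct verification of $L\Psi=0$ in part~5 via the cancellation $(\lambda-\zeta)/(\zeta-\lambda)$ is actually more explicit than the paper, which leaves that step implicit.
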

\begin{proof} 
\begin{enumerate}
\item
The reality condition (\ref{E:potential-regularity-S}) follows from the fact that the inverse 
scattering equation (\ref{E:Manakov-Santini}) is real for real $\lambda$, and (\ref{E:potential-inv-S}) 
has real coefficients. 

By differentiating (\ref{E:potential-inv-S}) we obtain

\begin{gather}
v_{\alpha}(x,y,t)=-\frac 1{\pi}\int_{\mathbb R}\partial_{\xi}\chi_{-I}(\psi,\lambda) 
\psi_{\alpha} d\lambda,
\label{E:potential-inv-1-S}\\
v_{\alpha\beta}(x,y,t)=-\frac 1{\pi}\int_{\mathbb R}\left[ \partial_{\xi\xi}\chi_{-I}(\psi,\lambda) 
\psi_{\alpha}\psi_{\beta} + \partial_{\xi}\chi_{-I}(\psi,\lambda) \psi_{\alpha\beta}\right] d\lambda, 
\label{E:potential-inv-1-S-bis}
\end{gather}
$\alpha,\beta\in\{x,y,t\}$. Using the properties (\ref{eq:xi-est-6-bis}), it follows that the only integral requiring regularization is the integral for $v_t$. 
This means that, for $t=0$, the function $v_t$ may be discontinuous. 

\item Let $t>0$. We have
\begin{equation}
v_{t}(x,y,t)=-\frac 1{\pi}\int_{\mathbb R}\partial_{\xi}\chi_{-I}(x-\lambda y -\lambda^2 t+ \omega(x,y,t,\lambda),
\lambda) \psi_{t} d\lambda
\end{equation}
and $\omega$ is a bounded function of $\lambda$, 
therefore the convergence of integral immediately follows from Proposition~\ref{P:decay}.

\item To calculate $L\psi$, $M\psi$ we use the following simple formula. 
Let $f(\lambda)$ be a function such that 
$f(\lambda)\in L^p(d\lambda)$, $\lambda f(\lambda)\in 
L^p(d\lambda)$, $1<p<\infty$. Then 
\beq
\lambda H_{\lambda}[f(\lambda)]=  H_{\lambda}[\lambda f(\lambda)]+\frac{1}{\pi}
\int_{\RR}f(\lambda) d\lambda.
\eeq
Applying L to (\ref{E:Manakov-Santini}) we obtain:
\begin{gather}
L\psi= v_x - L(\chi_{-R}(\psi,\lambda)) + 
(\partial_y + \lambda\partial_x +v_x\partial_x) H_{\lambda}[\chi_{-I}(\psi,\lambda))]=
\nonumber\\
=v_x - \partial_{\xi}\chi_{-R}(\psi,\lambda) L\psi  +  
H_{\lambda}[L\chi_{-I}(\psi,\lambda))]+\frac{1}{\pi} \int_{\RR} \partial_x \chi_{-I}(\psi,\lambda)) d\lambda=\\
=v_x - \partial_{\xi}\chi_{-R}(\psi,\lambda) L\psi  +  
H_{\lambda}[\partial_{\xi}\chi_{-I}(\psi,\lambda)) L\psi ]-v_x.\nonumber
\end{gather}
We obtain that $L\Psi\in L^2(d\lambda)$ and solves the homogeneous equation; therefore,
by Lemma~\ref{L:gn-inv-S}, it is a zero element of $L^2(d\lambda)$. 

Analogously,
\begin{gather}
M\psi= \lambda v_x -v_y - \partial_{\xi}\chi_{-R}(\psi,\lambda) M\psi  +  
H_{\lambda}[\partial_{\xi}\chi_{-I}(\psi,\lambda)) M\psi ]+\nonumber\\
+\lambda\frac{1}{\pi}\int_{\RR} \partial_x \chi_{-I}(\psi,\lambda)) d\lambda+
\frac{1}{\pi}\int_{\RR} \lambda\partial_x \chi_{-I}(\psi,\lambda)) d\lambda+
v_x\frac{1}{\pi}\int_{\RR} \partial_x \chi_{-I}(\psi,\lambda)) d\lambda=\nonumber\\
=  \partial_{\xi}\chi_{-R}(\psi,\lambda) M\psi  + 
H_{\lambda}[\partial_{\xi}\chi_{-I}(\psi,\lambda)) M\psi ]+\\
+\lambda v_x-v_y-\lambda v_x- v_x^2+ \frac{1}{\pi}\int_{\RR} \lambda\partial_x \chi_{-I}(\psi,\lambda)) d\lambda\nonumber
\end{gather}
Taking into account that
\beq 
\lambda\partial_x \chi_{-I}(\psi,\lambda)) = - \partial_y \chi_{-I}(\psi,\lambda))
- v_x \partial_x \chi_{-I}(\psi,\lambda)),
\eeq
we obtain that $M\psi\in L^2(d\lambda)$ and 
\beq
M\psi=  \partial_{\xi}\chi_{-R}(\psi,\lambda) M\psi + 
H_{\lambda}[\partial_{\xi}\chi_{-I}(\psi,\lambda)) M\psi ];
\eeq
therefore
$$
M\psi =0.
$$
\item This property is exactly equivalent to the inverse problem equation  
(\ref{E:Manakov-Santini}).
\item
From (\ref{E:analytic-1}) it follows that
\begin{equation}\label{E:jump-inv-S}
\Psi^+(x, y, t,\lambda)-\Psi^-(x, y, t,\lambda)=-2i\chi_{-I}(\psi(x,y,t,\lambda),\lambda),\ \lambda\in\mathbb R
\end{equation}
in $L^2(\RR,d\lambda)$. The standard solution of the Riemann factorization problem in 
terms of the Cauchy integral immediately gives us (\ref{E:eigenfunction-inv-S}).

Combining Theorem~\ref{T:uniqueness-S}  and the H$\ddot{\textrm o}$lder inequality,  we obtain (\ref{E:eigen-condition-x-S}) for fixed  $\lambda\in \CC^\pm\backslash\RR$. 

\item
Finally,  restricting (\ref{E:eigenfunction-inv-S}) to $t=0$, (\ref{E:jump-inv-S}) yields (\ref{E:scattering}) and (\ref{E:Manakov-Santini-0}). So $\psi(x,y,0,\lambda)=\varphi(x,y,\lambda)$, $\Psi(x,y,0,\lambda)=\Phi(x,y,\lambda)$. Comparing 
(\ref{eq:phi-rieamann}), (\ref{eq:v-rieamann}) with   (\ref{E:eigenfunction-inv-S}),
(\ref{E:potential-inv-S}) we obtain (\ref{E:initial-v-S}).
\end{enumerate}
\end{proof}

\begin{theorem}\label{T:inverse-problem-P-1-S} 
Suppose $v_0(x,y)\in\mathfrak S_{x,y}$ with compact support and satisfies the sufficiently small condition from
Definition~\ref{def:small-data}. Then the Cauchy problem of the Pavlov equation
\begin{equation}\label{E:cauchy-pavlov-S}
\begin{split}
&v_{xt}+v_{yy}=v_yv_{xx}-v_xv_{xy},\quad \forall x,\,y\in\RR,\,t\in\mathbb R^+,\\
&v(x,y,0)=v_0(x,y) 
\end{split}
\end{equation}
admits a real solution $v=v(x,y,t)$ such that $v,\,v_x,\,v_y,\,v_{xx},\,v_{xy},v_{xt},\,v_{yy} 
\in C(\RR\times\RR\times \RR^+)\cap L^\infty(\RR\times\RR\times \RR^+)$.
\end{theorem}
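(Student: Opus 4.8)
The plan is to let the full analytic construction carry essentially all of the load, and then invoke the algebraic fact that the Pavlov expression is precisely (minus) the coefficient of $\partial_x$ in the commutator $[L,M]$. First I would apply Theorem~\ref{L:eigenfunction-inv-0-S} to the spectral data $\chi_{-}(\xi,\lambda)$ built from $v_0$ through the direct transform. This is legitimate because, under the compact support and small norm hypotheses of the present theorem, Propositions~\ref{L:direct-sigma-asy-00}--\ref{prop:xi-est-1} together with Definition~\ref{def:small-data} guarantee that $\chi_{-}$ meets all the hypotheses of Theorems~\ref{T:uniqueness-S2}, \ref{T:JOST} and \ref{T:uniqueness-S}. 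The output is the real function $v(x,y,t)$ defined by (\ref{E:potential-inv-S}), together with the reality (\ref{E:potential-regularity-S}), the regularity $v,v_x,v_y,v_{xx},v_{xy},v_{xt},v_{yy}\in C\cap L^\infty(\RR\times\RR\times\overline{\RR^+})$ asserted in (\ref{E:eigenfunction-inv-1-S}), the initial condition $v(x,y,0)=v_0(x,y)$, and, crucially, the analytic wave function $\Psi(x,y,t,\lambda)$ which, for every fixed $\lambda\in\CC^\pm\setminus\RR$, satisfies $L\Psi=0$ and $M\Psi=0$ \emph{identically} in $(x,y,t)$ with $\hat\omega=\Psi-(x-\lambda y-\lambda^2 t)$ enjoying the continuity and boundedness (\ref{E:eigen-condition-x-S}).

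The second step is the purely algebraic commutator computation. Writing $L=\partial_y+a\,\partial_x$ and $M=\partial_t+b\,\partial_x$ with $a=\lambda+v_x$ and $b=\lambda^2+\lambda v_x-v_y$, a direct calculation gives
\beq
[L,M]=\big(b_y-a_t+a\,b_x-b\,a_x\big)\partial_x=-\big(v_{xt}+v_{yy}+v_xv_{xy}-v_yv_{xx}\big)\partial_x,
\eeq
so the coefficient of $\partial_x$ is exactly minus the left-hand side of the Pavlov equation (\ref{E:cauchy-pavlov-S}). This is a genuine operator identity on $C^2$ functions, and by (\ref{E:eigen-condition-x-S}) the wave function $\Psi(\cdot,\lambda)$ is of class $C^2$ in $(x,y,t)$ for each fixed non-real $\lambda$, so the identity may be applied to $\Psi$.

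The third step closes the argument pointwise. Fix $(x,y,t)\in\RR\times\RR\times\RR^+$ and fix any $\lambda\in\CC^\pm\setminus\RR$. Since $M\Psi\equiv0$ as a function of $(x,y,t)$ we have $L(M\Psi)=0$, and since $L\Psi\equiv0$ we have $M(L\Psi)=0$; hence $[L,M]\Psi=0$, which by the commutator identity reads $\big(v_{xt}+v_{yy}+v_xv_{xy}-v_yv_{xx}\big)\,\Psi_x=0$. The Pavlov expression depends only on $(x,y,t)$, so it suffices to find one $\lambda$ with $\Psi_x\ne0$ at this point. Letting $\lambda_I\to\pm\infty$, the integral representation (\ref{E:eigenfunction-inv-S}) (equivalently the asymptotics (\ref{eq:L1_1-as})) gives $\Psi_x=1+O(\lambda^{-1})\to1$, so $\Psi_x\ne0$ for $|\lambda_I|$ large. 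Therefore $v_{xt}+v_{yy}+v_xv_{xy}-v_yv_{xx}=0$ at $(x,y,t)$; since the point was arbitrary, $v$ solves the Pavlov equation on $\RR\times\RR\times\RR^+$. Combined with the reality, the regularity (\ref{E:eigenfunction-inv-1-S}), and the initial data supplied by Step~1, this is exactly the assertion of the theorem.

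The genuinely hard work has all been discharged earlier: the existence of the analytic eigenfunction and of its boundary values (Theorems~\ref{T:existence-complex-eigen}--\ref{T:complex-bdry}), the decay estimates on $\chi_{-}$ in both $\xi$ and $\lambda$ (Propositions~\ref{Prop:chi_nprm}--\ref{P:decay}) that make the integral (\ref{E:potential-inv-S}) and all its differentiated versions converge, and the second-derivative bounds of Theorem~\ref{T:uniqueness-S} that yield (\ref{E:eigen-condition-x-S}). The only obstacle intrinsic to this final step is the rigorous passage from the \emph{formal} compatibility $[L,M]=0$ to a classical identity for $v$: one must justify differentiating the Lax relations a second time, which is guaranteed by (\ref{E:eigen-condition-x-S}), and dividing out the factor $\Psi_x$, which is guaranteed nonzero by the $\lambda\to\infty$ asymptotics. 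No new estimate beyond those already established is required.
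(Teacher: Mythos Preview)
Your proof is correct and follows essentially the same route as the paper's own argument: invoke Theorem~\ref{L:eigenfunction-inv-0-S} for the regularity, the initial condition, and the Lax pair relations on $\Psi$, then compute the commutator $[L,M]=-\big(v_{xt}+v_{yy}+v_xv_{xy}-v_yv_{xx}\big)\partial_x$ and divide out $\Psi_x$. Your treatment is in fact more explicit than the paper's three-line proof---in particular your use of the large-$|\lambda_I|$ asymptotics from (\ref{E:eigenfunction-inv-S}) to guarantee $\Psi_x\ne 0$ spells out what the paper leaves to the reader; the only overstatement is calling $\Psi$ ``of class $C^2$'', since (\ref{E:eigen-condition-x-S}) does not list $\hat\omega_{yt}$ or $\hat\omega_{tt}$, but the mixed derivative $\Psi_{yt}$ needed for Schwarz is recovered directly from differentiating the identity $\Psi_y=-(\lambda+v_x)\Psi_x$ in $t$, so no gap arises.
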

\begin{proof} Applying Proposition \ref{P:decay} and computing the compatibility of the Lax pair (\ref{E:pavlov-inv-x-S}) and (\ref{E:pavlov-inv-y-S}), we obtain
\[
\left(v_{xt}+v_{yy}-v_yv_{xx}+v_xv_{xy}\right)\partial_x\Psi\equiv 0.
\]
Hence we obtain  (\ref{E:cauchy-pavlov-S}) by (\ref{E:eigen-condition-x-S}).
\end{proof}

\section{Summary of the results and concluding remarks}\label{S:summary}
We have shown that the direct problem consists of the following steps:
\begin{itemize}
\item From the potential $v(x,y)$ we construct the scattering data $\sigma(\xi,\lambda)$, 
solving the ODE (\ref{ODE}).
\item From the scattering data  $\sigma(\xi,\lambda)$ we construct the spectral data 
$\chi(\xi,\lambda)$ solving the shifted Riemann problem (\ref{E:shift-intro}).
\end{itemize}
These two steps do not require small norm assumptions.

The inverse problem consists of the following two steps:
\begin{itemize}
\item From the spectral data $\chi(\xi,\lambda)$ we construct the real Jost eigenfunctions
solving the nonlinear integral equation (\ref{inversion1}), under the small norm assumption. 
\item From the real eigenfunctions we construct the potential $v(x,y,t)$ using formula
(\ref{E:potential-inv-S}). 
\end{itemize}

The following remark is important.

\begin{remark}
A careful reader may notice that the above basic steps do not involve explicitly the analytic eigenfunctions; therefore, strictly speaking, the Cauchy problem for the Pavlov equation can be solved without introducing them. However, their existence pervades the whole IST. Indeed, not only it is crucial in motivating the shifted Riemann problem  (\ref{E:shift-intro}) of the direct problem, but it is also equivalent to the nonlinear integral equation (\ref{inversion1}) of the inverse problem. 
\end{remark}

\section{The analytic estimates} \label{S:estimates}
In this section we present the proofs of some of the analytical estimates we use in our paper.  

\textbf{Proof of Proposition~\ref{L:direct-sigma-asy-00}.}

The main tool for proving these estimates in the Gronwall's inequality.  By definition,
$$
\sigma(\tau,\lambda)=\lim\limits_{y\rightarrow+\infty} h(y,\tau,\lambda)-\tau,
$$
where $h=h(y,\tau,\lambda)$ denotes the solution of the vector field ODE:
\begin{equation}
\label{eq:xi_tau_0}
\frac {d h}{dy}=v_x(h+\lambda y,y,\lambda), 
\end{equation}
with the boundary condition:
$$
\lim\limits_{y\rightarrow-\infty} h(y,\tau,\lambda)=\tau.
$$
Therefore:
$$
\sigma(\tau,\lambda)=\int\limits_{-\infty}^{+\infty} v_x(h( y,\tau,\lambda)+\lambda y, y) d y,
$$
and
$$
|\sigma(\tau,\lambda)|\le \int\limits_{-\infty}^{+\infty} \left[\max\limits_{x\in\RR} |v_x(x, y)|  \right] d y=B_0,
$$
The function $h_{\tau}$ satisfies the linearized equation:
\begin{equation}
\label{eq:xi_tau_1}
\frac{dh_{\tau}}{dy} = v_{xx}\ h_{\tau}
\end{equation}
with the boundary value
$$
\lim\limits_{y\rightarrow-\infty} h_{\tau}(y,\tau,\lambda)=1.
$$
Equation~(\ref{eq:xi_tau_1}) can be written as:
$$
\frac{d}{d y}\log(h_{\tau}) = v_{xx}(h(y,\tau,\lambda)+\lambda y,y),
$$
therefore 
$$
|\log(h_{\tau}(y,\tau,\lambda ))|\le \int\limits_{-\infty}^{+\infty}  \left[\max\limits_{x\in\RR} |v_{xx}(x, y)|\right] d y,
$$
$$
\exp\left(-\left[\max\limits_{x\in\RR} |v_{xx}(x, y)|\right] d y\right)-1\le h_{\tau}(y,\tau,\lambda )-1\le  
\exp\left(\left[\max\limits_{x\in\RR} |v_{xx}(x, y)|\right] d y\right)-1=B_1,
$$
and
$$
|h_{\tau}(y,\tau,\lambda )-1|\le B_1, \ \ |h_{\tau}(y,\tau,\lambda )|\le B_1+1,
$$
which automatically implies the necessary estimate on $ |\sigma_{\tau}(\tau,\lambda)|$.

The next step is to estimate the solutions of the equation for $h_{\tau\tau}$
\begin{equation}
\label{eq:xi_tau_2}
\frac{dh_{\tau\tau}}{dy} = v_{xxx}\ h_{\tau}^2 + v_{xx}\ h_{\tau\tau} 
\end{equation}
with the boundary condition:
$$
\lim\limits_{y\rightarrow-\infty} h_{\tau\tau}(y,\tau,\lambda)=0.
$$
We have an inhomogeneous linear equation; therefore we can use the standard estimate:
$$
|h_{\tau\tau}(y,\tau,\lambda )|\le \left[ \int\limits_{-\infty}^{+\infty} \left[\max\limits_{x\in\RR} |v_{xxx}(x, y)  |\right] d y \right]\cdot
[\max\limits_{y,\tau\in\RR} |h_{\tau}(y,\tau,\lambda)|^2]\cdot  
$$
$$
\cdot\exp\left(\int\limits_{-\infty}^{+\infty}  \left[\max\limits_{x\in\RR} |v_{xx}(x, y)|\right] d y\right),
$$
which implies the estimate on $ |\sigma_{\tau\tau}(\tau,\lambda)|$.
Equation for $h_{\tau\tau\tau}$ has the form 
\begin{equation}
\label{eq:xi_tau_3}
\frac{dh_{\tau\tau\tau}}{dy}= v_{xxxx}\ h_{\tau}^3+
3 v_ {xxx}\ h_{\tau}\ h_{\tau\tau } + v_ {xx}\ h_{\tau\tau\tau}
\end{equation}
with the boundary condition:
$$
\lim\limits_{y\rightarrow-\infty} h_{\tau\tau\tau}(y,\tau,\lambda)=0.
$$
Again we can estimate the function  $ |\sigma_{\tau\tau\tau}(\tau,\lambda)|$ as product of the integral of the modulus of the inhomogeneous term times 
the exponent of the modulus of the homogeneous coefficient:
$$
|h_{\tau\tau\tau}(y,\tau,\lambda )|\le \left( \left[ \int\limits_{-\infty}^{+\infty} \left[\max\limits_{x\in\RR} |v_{xxxx}(x, y)  |\right] d y 
\right]\cdot [\max\limits_{y,\tau\in\RR} |h_{\tau}(y,\tau,\lambda)|^3]+ \right.
$$
$$
\left. +{\color{Green}3}\left[ \int\limits_{-\infty}^{+\infty} \left[\max\limits_{x\in\RR} |v_{xxx}(x, y)  |\right] d y 
\right]\cdot [\max\limits_{y,\tau\in\RR} |h_{\tau}(y,\tau,\lambda)|] \cdot [\max\limits_{y,\tau\in\RR} |h_{\tau\tau}(y,\tau,\lambda)|] \right) \cdot
$$
$$
\cdot\exp\left(\int\limits_{-\infty}^{+\infty}  \left[\max\limits_{x\in\RR} |v_{xx}(x, y)|\right] d y\right),
$$
which implies the estimate on $ |\sigma_{\tau\tau\tau}(\tau,\lambda)|$.

Let us denote $h(y,\tau,\lambda)=\tau+\tilde h(y,\tau,\lambda)$. Equations (\ref{eq:xi_tau_0}),  (\ref{eq:xi_tau_1}) can be interpreted as ODEs for the functions $\tilde h(y,\tau,\lambda)$, $\tilde h_{\tau}(y,\tau,\lambda)$ in the Hilbert space $L^2(d\tau)$. We obtain:
\begin{equation}
\label{eq:xi_tau_0_bis}
\frac {d \tilde h}{dy}=v_x(\tilde h+\tau+\lambda y,y), 
\end{equation}
\begin{equation}
\label{eq:xi_tau_1_bis}
\frac{d\tilde h_{\tau}}{dy} = v_{xx}(\tilde h+\tau+\lambda y,y)+  v_{xx}(\tilde h+\tau+\lambda y,y) \tilde h_{\tau}
\end{equation}
We see that
$$
\| \tilde h(y,\tau,\lambda)\|_{L^2(d\tau)} \le \int\limits_{-\infty}^{+\infty} \| v_x(\tilde h+\tau+\lambda y,y) \|_{L^2(d\tau)} dy,
$$
$$
\| \tilde h _{\tau}(y,\tau,\lambda)\|_{L^2(d\tau)} \le \left(\int\limits_{-\infty}^{+\infty} \| v_{xx}(\tilde h +\tau+\lambda y,y) \|_{L^2(d\tau)} dy\right)
\cdot\exp\left(\int\limits_{-\infty}^{+\infty}  \left[\max\limits_{x\in\RR} |v_{xx}(x,\tilde y)|\right] d\tilde y\right),
$$
We assume now, that $B_1<1$. We know, that 
$$
\| \ldots \|_{L^2(d\tau)} \le  \|\ldots \|_{L^2(dx)} \cdot \sqrt{\max \frac{d\tau}{dx}} = \frac{\|\ldots \|_{L^2(dx)}}{\sqrt{\min \frac{dx}{d\tau}}},  
$$ 
for a fixed $y$, $\lambda$, but $\frac{dx}{d\tau}=h_{\tau}(\tau,y,\lambda)$, therefore 
$$
\| v_x(\tilde h +\tau+\lambda y,y) \|_{L^2(d\tau)}\le \frac{1}{\sqrt{1-B_1}}\cdot   \| v_x(x,y)\|_{L^2(d x)},
$$
$$
\| v_{xx}(\tilde h +\tau+\lambda y,y) \|_{L^2(d\tau)}\le \frac{1}{\sqrt{1-B_1}}\cdot   \| v_{xx}(x,y)\|_{L^2(d x)},  
$$
which completes the proof.

\textbf{Proof of Proposition~\ref{L:direct-sigma-asy-0}.}

Due to Definition~\ref{def:sigma}, it is sufficient to prove the Lemma for $|\lambda|\gg 1$. Thus we always assume $|\lambda|\gg 1$ in the following proof.
{\color{Green} The cases $\lambda\rightarrow+\infty$ and $\lambda\rightarrow-\infty$ are completely analogous, therefore we assume now that $\lambda\rightarrow+\infty$.}

Let us rewrite the definition of the scattering data using $\xi$ and $x$ as new coordinates 
on the $(x,y)$-plane. 
The $x$-coordinate is expressed through $\xi$, $y$ using the following formulas:
\beq\label{E:change}
x=\tilde h(y;\xi,-\infty,\lambda) +\lambda y=
\xi+\lambda y+\int_{-\infty}^{y} v_x( \xi+\lambda y'+h(y';\xi,-\infty,\lambda),y') dy'.
\eeq
From the implicit function theorem, this map can be inverted with respect to $y$: 
$$
y=H(\xi,x,\lambda)=\frac{x-\xi}{\lambda}+\frac{H_1(\xi,x,\lambda)}{\lambda^2} , \ \ 
H_1=\mathcal O(1)
$$
where
\begin{equation}
\label{ODE-bis}
\frac{\partial y}{\partial x}=\frac{\partial H(\xi,x,\lambda)}{\partial x}=
\frac{1}{\lambda+v_x(x,H(\xi,x,\lambda))},
\end{equation}
or, equivalently
\begin{equation}
\label{ODE-ter}
\frac{\partial H_1(\xi,x,\lambda)}{\partial x}=-
\frac{v_x\left(x,\frac{x-\xi}{\lambda}+\frac{H_1(\xi,x,\lambda)}{\lambda^2}\right) }
{1+v_x\left(x,\frac{x-\xi}{\lambda}+\frac{H_1(\xi,x,\lambda)}{\lambda^2}\right)/\lambda},
\end{equation}
$$
H_1(\xi,-\infty,\lambda)=0.
$$
We see, that
$$
\sigma(\xi,\lambda)=-\frac{H_1(\xi,+\infty,\lambda)}{\lambda}.
$$
Let us denote
$$
\mathring{H_1}(\mathring{\xi},x,{\mathring\lambda})=H_1(\mathring{\xi}/{\mathring\lambda},x,1/{\mathring\lambda}).
$$
Taking into account that
$$
{\mathring\lambda}=\frac{1}{\lambda}, \ \ \mathring{\xi} = \frac{\xi}{\lambda},
$$
we obtain
\begin{equation}
\label{ODE-quater}
\frac{\partial \mathring{H_1}(\mathring{\xi},x,{\mathring\lambda})}{\partial x}=-
\frac{v_x\left(x,-\mathring{\xi}+{\mathring\lambda} x+ {\mathring\lambda}^2 \mathring{H_1}(\mathring{\xi},x,{\mathring\lambda}) \right) }
{1+ {\mathring\lambda} v_x\left(x,-\mathring{\xi}+{\mathring\lambda} x+ {\mathring\lambda}^2 
\mathring{H_1}(\mathring{\xi},x,{\mathring\lambda}) \right)},
\end{equation}
For $|{\mathring\lambda}|<\frac{1}{2\max |v_x(x,y)|}$ the right-hand side of (\ref {ODE-quater}) 
is smooth in $\mathring{\xi}$, ${\mathring\lambda}$. We solve this equation in the finite interval 
$-D_x\le x \le D_x$; therefore $\mathring{H_1}(\mathring{\xi},+\infty,{\mathring\lambda})= 
\mathring{H_1}(\mathring{\xi},D_x,{\mathring\lambda})$ smoothly depends on the parameters. It is easy to check that, for 
$|-\mathring{\xi}|>D_y+|{\mathring\lambda}|D_x$, the right-hand side of (\ref {ODE-quater}) is identical to 0,
therefore $\mathring\sigma(\mathring{\xi},{\mathring\lambda})\equiv 0$ in the region 
$|\mathring{\xi}|>D_y+|{\mathring\lambda}|D_x$. 

Expanding  (\ref {ODE-quater}) at ${\mathring\lambda}=0$ we obtain:
\begin{equation}
\label{ODE-5}
\frac{\partial \mathring{H_1}(\mathring{\xi},x,{\mathring\lambda})}{\partial x}=-
v_x\left(x,-\mathring{\xi}\right)+ O({\mathring\lambda});
\end{equation}
therefore 
$$
\mathring{H_1}(\mathring{\xi},+\infty,{\mathring\lambda}) = \int\limits_{-D_x}^{D_x} v_x\left(x,-\mathring{\xi}\right) dx + O({\mathring\lambda})=v(D_x,-\mathring{\xi})- v(-D_x,-\mathring{\xi})+O({\mathring\lambda})= O({\mathring\lambda}).
$$ 
From the Hadamard's lemma it follows, that 
$$
\frac{\sigma(\mathring{\xi}/{\mathring\lambda},1/{\mathring\lambda})}{{\mathring\lambda}^2}=  
-\frac{\mathring{H_1}(\mathring{\xi},+\infty,{\mathring\lambda})}{{\mathring\lambda}}
$$
is a regular function of $\mathring{\xi}$, ${\mathring\lambda}$ for sufficiently small ${\mathring\lambda}$. We proved the first part.

To prove the corollary, let us point out that, in the new variables,
$$
\partial_{\lambda}=-{\mathring\lambda}^2\partial_{{\mathring\lambda}}-
{\color{Green} \mathring\lambda\mathring\xi }\partial_{\mathring{\xi}}, \ \ 
\partial_{\xi}={\mathring\lambda} \partial_{\mathring{\xi}}.
$$
Therefore any differentiation of the scattering data with respect to $\lambda$, $\xi$ increases the order of zero with respect to ${\mathring\lambda}$ at the point ${\mathring\lambda}=0$ by one. Taking into account that 
$$
\| \|_{L^2(d\xi)}= \frac{\| \|_{L^2(d\mathring{\xi})}}{\sqrt{|{\mathring\lambda}|}},
$$
we finish the proof.

\textbf{Proof of Theorem~\ref{T:complex-bdry}.}

To prove the Theorem, let us make an appropriate change of variables. It will be 
done in 5 steps. 

\noindent{\underline{\it Step 1 }}: Consider a point $(x,y)\in\RR^2$. Denote by $\hat h(y';x,y,\lambda_R)$ 
the solution of the ordinary differential equation
\beq
\label{eq:L1_22}
\frac{d\hat h}{d y'} = \lambda_R+v_x(\hat h, y') 
\eeq
with the boundary condition
\beq
\label{eq:L1_23}
\hat h(y;x,y,\lambda_R)=x.
\eeq
The first change of variables $\mathcal F_1:(x,y)\rightarrow(x_1,y_1)$   is defined by:
\beq
\label{eq:L1_24}
\left\{
\begin{array}{ll}
x_1=\lim\limits_{y'\rightarrow-\infty} \hat h(y';x,y,\lambda_R)-\lambda_R y'
=\varphi_-(x,y,\lambda_R)=\varphi(x,y,\lambda_R),&  \ \ y <0\\
x_1=\lim\limits_{y'\rightarrow+\infty} \hat h(y';x,y,\lambda_R)-\lambda_R y'
=\varphi_+(x,y,\lambda_R),& 
\ \  y >0\\
y_1 = y & 
\end{array}
\right.
\eeq
Of course the map is discontinuous on the line $y=0$, and
\beq\label{E:ode-tra}
\begin{split}
&\left(\varphi_\pm\right)_y+(\lambda_R+v_x)\left(\varphi_\pm\right)_x=0.
\end{split}
\eeq In the new variables 
we have
\beq
\label{eq:L1_25}
L = \partial_y+(\lambda+v_x)\partial_x=\partial_{y_1}+ i\lambda_I\kappa(x_1,y_1) \partial_{x_1} ,
\eeq
where
\beq\label{E:kappa}
\kappa(x_1,y_1)=\frac{\partial\varphi_\pm}{\partial x}(x,y,\lambda_R)|_{(x,y)=\mathcal F_1^{-1}(x_1,y_1)},\ \ y\ne 0.
\eeq Moreover, there exists a pair of positive constants $ C_1$, $C_2$ such that:
\beq
\label{eq:L1_26}
0< C_1 \le\kappa(x_1,y_1) \le  C_2. 
\eeq
\noindent{\underline{\it Step 2 }}: To investigate the boundary behaviors of the complex eigenfunction, we observe that, for $\lambda=\lambda_R+i\lambda_I$,  
$|\lambda_I|\ll1$, it is natural to conjecture that 
$\Phi(x,y,\lambda)$ is almost constant on the trajectories of the vector 
field
\beq
\label{eq:L1_13}
\hat L\equiv \partial_y+\lambda_R\partial_x+v_x\partial_x.
\eeq
These trajectories are defined by (\ref{eq:L1_22}) and (\ref{eq:L1_23}). 
Hence, if 
\beq
\label{eq:L1_15}
\hat h(x,y,y',\lambda_R)=\xi + \lambda_R\ y' \ \ \mbox{as} \ \ y'\rightarrow-\infty,
\eeq
then 
\beq
\label{eq:L1_16}
\hat h(x,y,y',\lambda_R)=\xi+\sigma(\xi,\lambda_R) + \lambda_R\ y' \ \ \mbox{as} \ \ y'\rightarrow+\infty, 
\eeq
where $\sigma(\xi,\lambda_R)$ is defined by Definition \ref{def:sigma} (see the proof of Lemma \ref{L:pde-system}).

Recall that $z=x-\lambda y$. Assume that the support of $v_x(z,\bar z)$ is located inside the strip
$|z_I|<\varepsilon$, $\varepsilon\ll1$. Then $\Phi(z,\bar z,\lambda)$ 
is holomorphic in $z$ outside a small neighbourhood of the real line and we have
\beq
\label{eq:L1_17}
\Phi(\xi+\sigma(\xi,\lambda)+i\epsilon,\lambda)\sim  \Phi(\xi-i\epsilon,\lambda)\ \ 
\mbox{for} \ \ \lambda_I<0, 
\eeq
\beq
\label{eq:L1_18}
\Phi(\xi+\sigma(\xi,\lambda)-i\epsilon,\lambda)\sim  \Phi(\xi+i\epsilon,\lambda)\ \ 
\mbox{for} \ \ \lambda_I>0. 
\eeq

Consider the Riemann-Hilbert problem with shift (\ref{E:shifted-RH}), or, via function
\beq\label{E:w-chi}
w(\xi,\lambda_R)=\xi+\chi(\xi,\lambda_R),
\eeq 
\begin{gather}
\label{eq:L1_19}
w(\xi+\sigma(\xi)+i0,\lambda_R) =  w(\xi-i0,\lambda_R),\ \ \xi\in\RR,\\
w(z) = z + o(1) \ \ \mbox{as} \ \ z\rightarrow\infty.\nonumber
\end{gather}
Then the hypothetical formulas for $\Phi^-(x,y,\lambda_R)$,  
$\Phi^+(x,y,\lambda_R)$ read: 
\begin{gather}
\label{eq:L1_20}
\Phi^-(x,y,\lambda_R)=w(\varphi_-(x,y,\lambda_R)-i0,\lambda_R)= 
w(\varphi_+(x,y,\lambda_R)+i0,\lambda_R) \\ 
\Phi^+(x,y,\lambda_R)=\overline{\Phi^-(x,y,\lambda_R)}.\nonumber
\end{gather}

\noindent{\underline{\it Step 3 }}: Assume $\lambda_I<0$ from now on. 
Let us use the following rescaling:  $\mathcal F_2:(x_1,y_1)\rightarrow(x_2,y_2)$
\beq
\label{eq:L1_27}
\left\{
\begin{array}{l}
x_2=x_1\\
y_2 = \lambda_I y_1 \\
z_2=x_2-i y_2.
\end{array}
\right.
\eeq
In the new variables 
\beq
\label{eq:L1_28}
L =\lambda_I(\partial_{y_2}+i  \kappa\left(x_2,\frac{y_2}{\lambda_I}\right) \partial_{x_2}).
\eeq

\noindent{\underline{\it Step 4 }}: Let us define a new complex variable $z_3$, 
$\mathcal F_3:(x_2,y_2)\rightarrow z_3$  by
\beq
\label{eq:L1_29}
z_3=x_2-iy_2+\chi(x_2-iy_2,\lambda_R),
\eeq
where $\chi(\xi,\lambda)$ is the solution of the shifted Riemann-Hilbert 
problem (\ref{E:shifted-RH}) (existence of the solution  is proved in 
\cite{Ga66}). Note that the composition $\mathcal F_3\circ\mathcal F_2\circ\mathcal F_1$ is continuous by the property:
\beq
\label{eq:L1_30}
\mbox{If} \ \ \mathcal F_1(x,-0)=(\xi,0) \ \ \mbox{then} \ \  \mathcal F_1(x,+0)=(\xi+\sigma(\xi,\lambda_R),0).
\eeq
Consequently, (\ref{E:lax-complex}) takes the form 
\beq
\label{eq:L1_32}
[\partial_{\bar z_3} + q(z_3,\bar z_3, \lambda) \partial_{z_3}] \Phi = 0
\eeq
where
\beq
\label{eq:L1_33}
|q(z_3,\bar z_3,\lambda)|<\mathcal C_3[v](\lambda_R)<1,
\eeq
the support of $q(z,\bar z,\lambda)$ has area of order $O(\lambda_I)$.

It is natural to consider Beltrami equation (\ref{eq:L1_32}) in the space 
$L^{2+\epsilon}(dz_3d\bar z_3)\cap L^{2-\epsilon}(dz_3d\bar z_3)$ where $\epsilon$ is 
sufficiently small.
Again we can write 
\beq
\label{eq:L1_34}
\Phi(z_3,\bar z_3,\lambda) = z_3 + 
\partial^{-1}_{\bar z_3} \alpha(z_3,\bar z_3,\lambda)
\eeq
where
\beq
\label{eq:L1_35}
[1+q(z_3,\bar z_3,\lambda)\partial_{z_3}\partial^{-1}_{\bar z_3}]
\alpha(z_3,\bar z_3,\lambda)+q(z_3,\bar z_3,\lambda)=0.
\eeq

Taking into account (\ref{eq:L1_33}) we see that 
\beq
\label{eq:L1_36}
|\alpha(z_3,\bar z_3,\lambda) |_{L^{p}} = O(\lambda_I),\ \ 
|p-2|<\epsilon.
\eeq
Using the estimates from \cite{V62} we see, that 
$$
\|\Phi(z_3,\bar z_3,\lambda) - z_3 \|_{L^{\infty}(dz_3d\bar z_3)}=O(\lambda_I),
$$
and $\Phi(z_3,\bar z_3,\lambda)$ uniformly converges to $z_3$. 

\noindent{\underline{\it Step 5 }}: Consider the function $\Phi(x,y,\lambda)$ on
the line $y=y_0<-D_y$. We see, that 
$$
z_2 = \xi + i |\lambda_I| y_0, \ \ \mbox{where} \ \ \xi=x-\lambda_R y,
$$
therefore
$$
\Phi(x,y_0,\lambda-i 0)= \left. z_3(z_2)\vphantom{\int} \right|_{z_2=\xi-i0}=
\xi+\chi_{-}(\xi,\lambda). 
$$
On this line 
$$
\phi_{-}(x,y,\lambda)=\xi,
$$
therefore
$$
\Phi(x,y,\lambda-i 0)= \phi_{-}(x,y,\lambda)+
\chi_{-}(\phi_{-}(x,y,\lambda),\lambda). 
$$
The proof is completed.

\textbf{Proof of Lemma~\ref{lem:xi-est-1}.}

To start with, let us point out that
$$
f(t,\tau)=\partial_{\tau} \log{\hat s(t,\tau)}, 
$$
where
$$
\hat s(t,\tau)=\frac{s(t)-s(\tau)}{t-\tau}=\int\limits_{0}^{1} 
s'(\alpha t+[1-\alpha]\tau)d\alpha,
$$
$$
\partial^k_t \partial^l_{\tau} \hat s(t,\tau) = \int\limits_{0}^{1} 
\alpha^k [1-\alpha]^l s^{(k+l+1)} (\alpha t+[1-\alpha]\tau)d\alpha.
$$
Therefore
$$
|\partial^k_t \partial^l_{\tau} \hat s(t,\tau)|\le\max\limits_{\xi}
|s^{(k+l+1)}(\xi) |.  
$$
We see that, if the corresponding derivatives exist,
$$
f=\frac{\hat s_{\tau}}{\hat s}, \ \ 
f_t=\frac{\hat s_{t \tau}}{\hat s}- \frac{\hat s_{t} \hat s_{\tau}}{\hat s^2}, 
\ \ 
f_{tt}=\frac{\hat s_{tt\tau}}{\hat s}- 2\frac{\hat s_{t} \hat s_{t\tau}}{\hat s^2} - \frac{\hat s_{tt} \hat s_{\tau}}{\hat s^2}+ 2\frac{\hat s^2_{t} \hat s_{\tau}}{\hat s^3},
$$
and
\begin{equation}
\label{eq:hadamar1}
\begin{split}
|f(t,\tau)|\le \frac{\max |s''|}{\min|s'|},  \ \ 
|f_t(t,\tau)|\le \frac{\max |s'''|}{\min|s'|}+ \frac{\max |s''|^2}{\min|s'|^2},\\
|f_{tt}(t,\tau)|\le \frac{\max |s''''|}{\min|s'|}+ \frac{3\max |s'''| \max |s''|}{\min|s'|^2}+  \frac{2 \max |s''|^3}{\min|s'|^3}.
\end{split}
\end{equation}

We know that 
$$
\|K\|_{L^{\infty}} =\frac{1}{2\pi} \max\limits_{t\in\RR} \int_{\RR} |f(t,\tau)| d\tau=\frac{1}{2\pi}
[I_1(t)+I_2(t)],
$$
$$
I_1(t)=\int\limits_{|\tau-t|\le 1} \left|\frac {s'(\tau)}{s(\tau)-s(t)}-\frac 1{\tau-t}\right| d\tau,
$$
$$
I_2(t)=\int\limits_{|\tau-t|\ge 1} \left|\frac {s'(\tau)}{s(\tau)-s(t)}-\frac 1{\tau-t}\right| d\tau\le I_{21}+I_{22},
$$
$$
I_{21}(t)=\int\limits_{|\tau-t|\ge 1} \left|\frac {1}{s(\tau)-s(t)}-\frac 1{\tau-t}\right| d\tau ,
$$
$$
I_{22}(t)=\int\limits_{|\tau-t|\ge 1} \left|\frac {\sigma'(\tau)}{s(\tau)-s(t)}\right| d\tau,
$$
From (\ref{eq:hadamar1}) we see, that
$$
|f(t,\tau)|\le \frac{\color{Green}C_2}{1-C_1}\le {\color{Green}2} C_2,
$$
and 
$$
I_1 \le\int\limits_{|\tau-t|\le 1} {\color{Green}2} C_2 d\tau = {\color{Green}4}C_2. 
$$
Let us estimate now $I_{21}$. We have $s(\tau)-s(t)=\tau-t+\sigma(\tau)-\sigma(t)$. We assumed that $|\sigma(\tau)-\sigma(t)|\le2C_0\le1/2$; therefore
$$
\left|\frac {1}{s(\tau)-s(t)}-\frac 1{\tau-t}\right| = 
\left|\frac 1{\tau-t}\right| \  \left|\frac {1}
{1+\frac{\sigma(\tau)- \sigma(t)}{\tau-t} }-1 \right| \le
\frac{4C_0}{(\tau-t)^2},
$$
and
$$
I_{21}\le 8 C_0.
$$
To estimate $I_{22}$, we use the H\"older inequality
$$
I_{22}(t)\le\sqrt{ \int\limits_{|\tau-t|\ge 1} |\sigma'(\tau)|^2 d\tau}\ \cdot \ \sqrt{ \int\limits_{|\tau-t|\ge 1} \frac {1}{(s(\tau)-s(t))^2} d\tau}\le 
$$
$$
\le \|\sigma'(\tau)\|_{L^2(d\tau)} \cdot \ \sqrt{ \int\limits_{|\tau-t|\ge 1} \frac {4}{(\tau-t)^2} d\tau}=\sqrt{8} \hat C_1.
$$
Combining estimates for $I_1$, $I_{21}$, $I_{22}$ we complete the proof of the first part.

To prove the second part, we use the standard estimate:
$$
|h'_2(t)|\le \left[\max\limits_{t\in\RR} \int\limits_{\tau\in\RR}
\left|\partial_t f(t,\tau) \right| d\tau \right] \cdot  \| h_1(t)\|_{L^{\infty}(dt)}.
$$
We have:
$$
\int\limits_{\tau\in\RR} \left|\partial_t f(t,\tau) \right| d\tau \le I_1+I_2,
$$
where
$$
I_1=\int\limits_{|\tau-t|\le1} \left|\partial_t f(t,\tau) \right| d\tau,
$$
$$
I_2=\int\limits_{|\tau-t|\ge1} \left|\partial_t f(t,\tau) \right| d\tau.
$$
From (\ref{eq:hadamar1}), we see that
$$
I_1\le 2 \left[\frac{C_3}{1-C_1}+\frac{C_2^2}{(1-C_1)^2}\right]\le 4 C_3 +8 C_2^2.
$$
Let us introduce the following notation:
$$
y=\hat O(x), \ \mbox{if} \ \ |y|\le|x|.
$$
Let us estimate $I_2$. We have:
$$
\partial_t f(t,\tau)=\frac{s'(\tau)s'(t)}{(s(\tau)-s(t))^2}-
\frac{1}{(\tau-t)^2}=
$$
$$
=\frac{1}{(s(\tau)-s(t))^2}+
\frac{\sigma'(\tau)+ \sigma'(t)}{(s(\tau)-s(t))^2}
+\frac{\sigma'(\tau) \sigma'(t)}{(s(\tau)-s(t))^2}
-\frac{1}{(\tau-t)^2}
$$
By definition, 
$$
s(\tau)-s(t)= \tau - t + \sigma(\tau)-\sigma(t)=
(\tau-t)\left[1+\hat O\left(C_1 \right)  \right]
$$
$$
\frac{1}{(s(\tau)-s(t))^2}=\frac{1}{(\tau-t)^2}
\left[1+ \hat O\left(6 C_1) \right) \right] =
\hat O \left(  \frac{4}{(\tau-t)^2} \right).
$$
Therefore
$$
|I_2|\le \int\limits_{|\tau-t|\ge1} 
\left[  \frac{6C_1}{(\tau-t)^2} + \frac{8C_1}{(\tau-t)^2} +
\frac{4C_1^2}{(\tau-t)^2} \right] d\tau =28 C_1 +8 C_1^2.
$$
Finally we obtain 
$$
\int\limits_{\tau\in\RR} \left|\partial_t f(t,\tau) \right| d\tau \le   
4 C_3+ 8 C_2^2 + 28 C_1 +8 C_1^2.
$$

\textbf{Proof of Lemma~\ref{lem:xi-est-3}.}

We have
$$
g(\xi)=g_1(\xi)+g_2(\xi),
$$
where
$$
g_1(\xi)= -\frac 12\sigma(\xi),
$$
$$
g_2(\xi)=\frac 1{2\pi i}\int_\RR \frac {\sigma(s^{-1}(\xi''))}{\xi'' -s(\xi)} d\xi'',
$$
or, equivalently,
$$
g_2(s^{-1}(\eta))=\frac 1{2\pi i}\int_\RR \frac {\sigma(s^{-1}(\eta''))}{\eta'' -\eta} d\eta'',
$$
where $s^{-1}(\eta)=s^{-1}(\eta,\lambda)$ denotes the inversion of the function $\eta=s(\xi,\lambda)$ with respect to $\xi$:
$$
s(s^{-1}(\eta,\lambda),\lambda)\equiv\eta.
$$
Let us denote:
$$
\hat g_2(\eta)= g_2(s^{-1}(\eta)), \ \ \hat\sigma(\eta'') = \sigma(s^{-1}(\eta'')).
$$
We have:
$$
\hat g_2(\eta)=\frac 1{2\pi i}\int_\RR \frac {\hat\sigma(\eta'')}{\eta'' -\eta} d\eta'', \
\hat g_{2,\eta}(\eta)=\frac 1{2\pi i}\int_\RR \frac {\hat\sigma_{\eta''}(\eta'')}{\eta'' -\eta} d\eta'',
$$
$$
2\pi|\hat g_2(\eta)|=\left| \int\limits_{|\eta''-\eta|\le1} \frac {1}{\eta'' -\eta}\hat\sigma(\eta'')d\eta''+
\int\limits_{|\eta''-\eta|\ge1} \frac {1}{\eta'' -\eta}\hat\sigma(\eta'')d\eta'' \right|\le I_1 + I_2,
$$
where
$$
I_1 = \left|  \int\limits_{|\eta''-\eta|\le1} \frac {\hat\sigma(\eta'')- \hat\sigma(\eta) }{\eta'' -\eta} d\eta''  \right|\le 2 
\max |  \hat\sigma_{\eta}(\eta) |,
$$
and $I_2$ can be estimated using the  H\"older inequality
$$
I_2 = \left|  \int\limits_{|\eta''-\eta|\ge1} \frac {\hat\sigma(\eta'')- \hat\sigma(\eta) }{\eta'' -\eta} d\eta''  \right| \le 
\| \hat\sigma_{\eta} \|_{L^2(d\eta)}\cdot \sqrt{\int\limits_{|\eta''-\eta|\ge1} \frac {1}{(\eta'' -\eta)^2} d\eta' } = 
\sqrt{2}\| \hat\sigma\|_{L^2(d\eta)}.
$$
We obtained:
$$
|\hat g_2(\eta)| \le\frac{1}{2\pi}\cdot[2 \max | \hat\sigma_{\eta}(\eta)| + \sqrt{2}\| \hat\sigma\|_{L^2(d\eta)} ].
$$
Similarly:
$$
|\hat g_{2,\eta}(\eta)| \le\frac{1}{2\pi}\cdot[2 \max |  \hat\sigma_{\eta\eta}(\eta)| + \sqrt{2}\| \hat\sigma_{\eta}\|_{L^2(d\eta)} ].
$$
We have
$$
\hat\sigma_{\eta}(\eta)=\frac{d\xi}{d\eta}\cdot\sigma_{\xi}(s^{-1}(\eta)), \ \ 
\hat\sigma_{\eta\eta}(\eta)=\left(\frac{d\xi}{d\eta}\right)^2\cdot\sigma_{\xi\xi}(s^{-1}(\eta))+
\frac{d^2\xi}{d\eta^2}\cdot\sigma_{\xi}(s^{-1}(\eta)),
$$
$$
\frac{d\xi}{d\eta}=\left(\frac{d\eta}{d\xi}\right)^{-1}, \ \
\frac{d^2 \xi}{d\eta^2}=-\frac{d^2 \eta}{d\xi^2}\cdot \left(\frac{d\eta}{d\xi}\right)^{-3}, \ \ 
d\eta = \frac{d\eta}{d\xi}\cdot d\xi
$$
We assumed that $B_1<\frac{1}{2}$; therefore 
\begin{equation}
\label{eq:der-est-1}
\frac{1}{2}\le \frac{d\xi}{d\eta}\le 2, \ \ \frac{1}{2}\le \frac{d\eta}{d\xi}\le 2, \ \ 
\left|\frac{d^2 \xi}{d\eta^2}\right|\le 8 |\sigma_{\xi\xi}|,
\end{equation}
$$
|\hat\sigma_{\eta}(\eta)|\le 2 |\sigma_{\xi}(s^{-1}(\eta))|, \ \  
|\hat\sigma_{\eta\eta}(\eta)|\le 8 |\sigma_{\xi\xi}(s^{-1}(\eta))|, \ \ 
|g_{2,\xi(\xi)}|\le 2 |\hat g_{2,\eta(s(\xi))}|,
$$
$$
\| \ldots\|_{L^2(d\eta)}\le \sqrt{2} \| \ldots\|_{L^2(d\xi)}.
$$
Therefore 
$$
|g_2(\xi)| \le\frac{1}{2\pi}\cdot[4 \max | \hat\sigma_{\eta}(\eta)| + 2 \| \hat\sigma\|_{L^2(d\eta)} ].
$$
Similarly:
$$
|g_{2,\xi}(\xi)| \le \frac{1}{\pi}\cdot[2 \cdot 8  \max | \sigma_{\xi\xi}(\xi)| +2 \cdot 2 \| \hat\sigma_{\xi}\|_{L^2(d\xi)} ].
$$
Let us proof the second part.

Assume that $|\eta|\le 2R$. Then
$$
\hat g_2(\eta)=\frac 1{2\pi i}\int\limits_{|\eta''-\eta|\le 3R}
\frac {\sigma(\eta'')}{\eta'' -\eta}
d\eta''=
\frac 1{2\pi i}\int\limits_{|\eta''-\eta|\le 3R}
\frac {\sigma(\eta'')-\sigma(\eta)}
{\eta'' -\eta}
d\eta''.
$$
We have:
$$
|\hat g_2(\eta)|\le \frac{1}{2\pi}\cdot 6R \cdot \max\limits_{\eta'}|\partial_{\eta'} \sigma(\eta')|\le  \frac{6R}{\pi}\cdot C_1.
$$
Consider now the case $|\eta|\ge 2R$. Then 
$$
\hat g_2(\eta)=\frac 1{2\pi i}\int\limits_{|\eta''|\le R}
\frac {\sigma(\eta'')}{\eta'' -\eta}
d\eta'', 
$$
$$
|\hat g_2(\eta)| \le
\frac 1{2\pi} \cdot \|\sigma(\eta)\|_{L^{\infty}(d\eta)} \cdot
\left|\int\limits_{\ |\eta''|\le R}
\frac {1}{\eta'' -\eta} d\eta'' \right|\le
$$
$$
\le \frac {C_0}{2\pi} \cdot \log\left(\frac{|\eta| +R}
{|\eta| -R}    \right)\le 
\frac {C_0}{2\pi} \cdot \log(3) \le \frac {C_0}{\pi}.
$$
For a finite support function 
$$
|\sigma(\xi)|\le R \cdot \|\sigma_{\xi}(\xi)\|_{L^{\infty}(d\xi)}, \ \ \mbox{i.e.} \ \ 
C_0 \le R C_1.
$$
The proof of the second formula is absolutely the same, but we take into account (\ref{eq:der-est-1}).

\textbf{Proof of Theorem~\ref{T:JOST}.}

In this part we always assume that $t>0$ is fixed, $\calD$ is an arbitrary fixed positive constant, $y<0$, 
$|y|$ is sufficiently large (more precisely, $|y|>64 \calD t$), $|x|\le\calD|y|$.

This proof consists of 3 steps:
\begin{enumerate}
\item We show that it is sufficient to obtain some  
$L^2(d\lambda)$ estimates on $\omega$ and $\omega_{\lambda}$.
\item  We show, that it is sufficient to estimate the first iteration of
$\omega$ and $\omega_{\lambda}$ in $L^2(d\lambda)$.
\item We estimate the first iteration of
$\omega$ and $\omega_{\lambda}$ in $L^2(d\lambda)$ for $y\rightarrow-\infty$.
\end{enumerate}

\textbf{Step 1.} 

From Lemma~\ref{Sobolev-embedding-1} it follows, that it is sufficient to prove 
the following:
$$
\|\omega(x,y,t,\lambda)\|_{L^2(d\lambda)} \cdot \|\omega_{\lambda}(x,y,t,\lambda)\|_{L^2(d\lambda)} \rightarrow 0 \ \ \mbox{as} \ \ y\rightarrow -\infty,
$$
uniformly in $x$ in the interval $|x|\le \calD |y|$.

\textbf{Step 2.} 

Let us recall that we use the following iteration procedure:
\begin{align}
\omega_{n+1}(x,y,t,\lambda)=
&-\chi_{-R}(x-\lambda y-\lambda^2 t + \omega_{n}(x,y,t,\lambda),\lambda)+\\
&+H_{\lambda}\left[\chi_{-I}(x-\lambda y-\lambda^2 t + \omega_{n}(x,y,t,\lambda),\lambda)\right],\nonumber\\
\omega_{n+1,\lambda}(x,y,t,\lambda)=&I_1+I_2 + y \cdot I_3, \hfill
\end{align}
where
\begin{align}
I_1=&-\chi_{-R,\lambda}(x-\lambda y-\lambda^2 t + \omega_{n}(x,y,t,\lambda),\lambda)+\\
&+H_{\lambda}\left[\chi_{-I,\lambda}(x-\lambda y-\lambda^2 t + \omega_{n}(x,y,t,\lambda),\lambda)\right]+\nonumber\\
&+2t\,\chi_{-R,\xi}(x-\lambda y-\lambda^2 t + \omega_{n}(x,y,t,\lambda),\lambda)
\cdot\lambda-\nonumber\\
&-2t\,H_{\lambda}\left[\chi_{-I,\xi}(x-\lambda y-\lambda^2 t + \omega_{n}(x,y,t,\lambda),\lambda)\cdot\lambda\right],\nonumber
\end{align}
\begin{align}
I_2=&-\chi_{-R,\xi}(x-\lambda y-\lambda^2 t + \omega_{n}(x,y,t,\lambda),\lambda)\cdot
\omega_{n,\lambda}(x,y,t,\lambda)+\nonumber\\
&+H_{\lambda}\left[\chi_{-I,\xi}(x-\lambda y-\lambda^2 t + \omega_{n}(x,y,t,\lambda),\lambda) \cdot \omega_{n,\lambda}(x,y,t,\lambda)
\right],\nonumber\\
\end{align}
\begin{align}
I_3=&\chi_{-R,\xi}(x-\lambda y-\lambda^2 t + \omega_{n}(x,y,t,\lambda),\lambda)-\\
&-H_{\lambda}\left[\chi_{-I,\xi}(x-\lambda y-\lambda^2 t + \omega_{n}(x,y,t,\lambda),\lambda)\right].\nonumber
\end{align}
It is convenient to write:
$$
I_3=I_{31}+I_{32}
$$
\begin{align}
&I_{31}=\chi_{-R,\xi}(x-\lambda y-\lambda^2 t + \omega_{n}(x,y,t,\lambda),\lambda)-
\chi_{-R,\xi}(x-\lambda y-\lambda^2 t,\lambda)-
\\
&-H_{\lambda}\left[\chi_{-I,\xi}(x-\lambda y-\lambda^2 t + \omega_{n}(x,y,t,\lambda),\lambda)-\chi_{-I,\xi}(x-\lambda y-\lambda^2 t,\lambda)
\right],\nonumber
\end{align}
\begin{align}
&I_{32}=I_{32}(x,y,t,\lambda)=\chi_{-R,\xi}(x-\lambda y-\lambda^2 t,\lambda)-
H_{\lambda}\left[\chi_{-I,\xi}(x-\lambda y-\lambda^2 t,\lambda)
\right].\nonumber
\end{align}
From (\ref{T:JOST:eq6}), (\ref{T:JOST:eq3}) we immediately obtain that there exists a constant $C_1>0$ such that 
\begin{equation}
\|I_1\|_{L^2(d\lambda)}<C_1 \ \ \mbox{for any} \ \ \omega(\lambda). 
\end{equation}
Using the same arguments as in Theorem~\ref{T:uniqueness-S2} we immediately obtain
\begin{equation}
\|I_2\|_{L^2(d\lambda)}<\frac{1}{2}  \|\omega_{n,\lambda}\|_{L^2(d\lambda)}.
\end{equation}
From  (\ref{T:JOST:eq5}) we immediately obtain that there exists $C_2>0$ such that
\begin{equation}
\|I_{31}\|_{L^2(d\lambda)}< C_2  \|\omega_{n}\|_{L^2(d\lambda)}.
\end{equation}
We also know that
\begin{equation}
\|\omega_{n}\|_{L^2(d\lambda)}\le 2  \|\omega_{1}\|_{L^2(d\lambda)}.
\end{equation}
Combining all these estimates we obtain:
\begin{equation}
\|\omega_{n+1}(x,y,t,\lambda)\|_{L^2(d\lambda)}<  C_1+ \frac{1}{2}  \|\omega_{n,\lambda}\|_{L^2(d\lambda)} + 
2 |y|\cdot C_2  \|\omega_{1}\|_{L^2(d\lambda)}+ |y|\cdot \|I_{32}\|_{L^2(d\lambda)}.
\end{equation}

\textbf{Therefore, to prove Theorem~\ref{T:JOST}, it is sufficient to show that} 
\begin{equation}
\|\omega_{1}\|_{L^2(d\lambda)} = o\left(\frac{1}{\sqrt{|y|}} \right), \ \ 
\|I_{32}\|_{L^2(d\lambda)}=  o\left(\frac{1}{\sqrt{|y|}} \right) 
\ \ \mbox{as} \ \ y\rightarrow-\infty,
\end{equation}
\textbf{uniformly in $x$ for $|x|\le\calD |y|$, where $\calD$ is an arbitrary positive 
constant.}

\textbf{Step 3.}

The proof of both estimates are absolutely similar; moreover the second one is a 
little easier from a technical point of view. Let us estimate $\omega_1$:
$$
\omega_{1}(x,y,t,\lambda)= H_{\lambda}\left[\chi_{-I}( x -\lambda y-\lambda^2 t ,\lambda)\right]- \chi_{-R}(\tau-\lambda y -\lambda_0^2 t,\lambda).
$$
It is convenient to represent $\chi_{-}(\xi,\lambda)$ as a sum of three functions:
\beq
\chi_{-}(\xi,\lambda)=\chi^{(1)}_{-}(\xi,\lambda)+\chi^{(2)}_{-}(\xi,\lambda)+
\chi^{(3)}_{-}(\xi,\lambda)
\eeq
\beq
\chi^{(1)}_{-}(\xi,\lambda)=\left\{
\begin{array}{ll}\chi_{-}(\xi,\lambda), & |\lambda|\le 4\calD \\0, &
|\lambda| > 4\calD, \end{array}\right.
\eeq
\beq
\chi^{(2)}_{-}(\xi,\lambda)=\left\{
\begin{array}{ll}\chi_{-}(\xi,\lambda), & 4\calD < |\lambda|\le |y|/4t \\0, &
\mbox{otherwise}, \end{array}\right.
\eeq
\beq
\chi^{(3)}_{-}(\xi,\lambda)=\left\{
\begin{array}{ll}\chi_{-}(\xi,\lambda), & |\lambda| > |y|/4t \\0, &
|\lambda| \le |y|/4t. \end{array}\right.
\eeq
From  (\ref{T:JOST:eq1}) it follows immediately that there exists a constant $C_3>0$ such 
that
\beq
\|\chi^{(3)}_{-}\|_{L^2(d\lambda)}\le \frac{C_3}{|y|^{3/2}}.
\eeq
If $4D < |\lambda|\le |y|/4t$, $|x|\le \calD |y|$  then $|x-\lambda y -\lambda^2t|>
|\lambda y|/2$, and 
\beq
|\chi^{(2)}_{-}(\xi,\lambda)|\le\frac{2C}{|\lambda|\cdot|y|}, 
\eeq
\beq
\|\chi^{(2)}_{-}\|_{L^2(d\lambda)}\le \frac{4C}{|y|} \sqrt{\int\limits_{4\calD}^{\infty}\frac{d\lambda}{\lambda^2}} 
\eeq
Let us denote by $\omega^{(1)}$ the function:
$$
\omega^{(1)}(x,y,t,\lambda)= H_{\lambda}\left[ \chi^{(1)}_{-I}( x -\lambda y-\lambda^2 t ,\lambda)\right]- \chi^{(1)}_{-R}(x-\lambda y -\lambda^2 t,\lambda).
$$
We have shown that
$$
\| \omega^{(1)}-\omega_{1}\|_{L^2(d\lambda)}=O\left(\frac{1}{|y|}\right);
$$
therefore it is sufficient to estimate  $\| \omega^{(1)}\|_{L^2(d\lambda)}$. We have
\begin{equation}
\label{T:JOST:omega1}
\| \omega^{(1)}\|_{L^2(d\lambda)}\le \| \omega^{(1)}\|_{L^2(d\lambda),|\lambda|\le 2\calD} +
\|\chi^{(1)}_{-R}\|_{L^2(d\lambda),2\calD\le|\lambda|\le 4\calD}+ 
\|H_{\lambda}\left[\chi^{(1)}_{-I}\right]\|_{L^2(d\lambda),|\lambda|\ge2\calD}.
\end{equation}
For sufficiently large $|y|$ and $\frac{3}{2}\calD\le|\lambda|\le4\calD$ we have
\beq
\label{T:JOST:chi1}
|\chi^{(1)}_{-}(x-\lambda y-\lambda^2 t,\lambda)|\le\frac{4C}{\calD|y|},
\eeq
and
\beq
\|\chi^{(1)}_{-R}\|_{L^2(d\lambda),2\calD\le|\lambda|\le4\calD}\le \frac{16C}{|y|}.
\eeq
Let us estimate the $L^2$-norm of $H_{\lambda}
\left[\chi^{(1)}_{-I}(x-\lambda y-\lambda^2 t,\lambda)\right]$
on the interval $|\lambda|>2\calD$. We have
\begin{equation}
H_{\lambda}\left[\chi^{(1)}_{-I}(x-\lambda y-\lambda^2 t,\lambda)\right]=
\frac{1}{\pi}\int_{-4\calD}^{4\calD} 
\frac{ \chi^{(1)}_{-I}(x-\mu y-\mu^2 t,\mu) d\mu}{\lambda-\mu}=I_1(\lambda)+I_2(\lambda),
\end{equation}
where
\begin{equation}
I_1(\lambda)=
\frac{1}{\pi}\int_{-\frac{3}{2}\calD}^{\frac{3}{2}\calD} 
\frac{ \chi^{(1)}_{-I}(x-\mu y-\mu^2 t,\mu) d\mu}{\lambda-\mu},
\end{equation}
\begin{equation}
I_2(\lambda)=
\frac{1}{\pi}\int\limits_{\frac{3}{2}\calD\le|\mu|\le4\calD} 
\frac{ \chi^{(1)}_{-I}(x-\mu y-\mu^2 t,\mu) d\mu}{\lambda-\mu}.
\end{equation}
From (\ref{T:JOST:chi1}) it follows that 
\begin{equation}
\|I_2(\lambda)\|_{L^2(d\lambda)}\le 
\|\chi^{(1)}_{-I}\|_{L^2(d\lambda),\frac{3}{2}\calD\le|\lambda|\le 4\calD} \le \frac{20C}{|y|}.
\end{equation}
For $|\lambda|>2\calD$ we have
\begin{align}
|I_1(\lambda)|\le&\frac{1}{\pi}\frac{1}{|\lambda|-\frac{3}{2}\calD} \int_{-\frac{3}{2}\calD}^{\frac{3}{2}\calD} 
|\chi^{(1)}_{-I}(x-\mu y-\mu^2 t,\mu)| d\mu\le\\
&\le\frac{1}{\pi}\frac{1}{|\lambda|-\frac{3}{2}\calD} \int_{-\frac{3}{2}\calD}^{\frac{3}{2}\calD} 
\frac{C d\mu}{1+|x-\mu y -\mu^2 t|}\le \frac{C_4+C_5\log|y|}
{|y|(|\lambda|-\frac{3}{2}\calD)}.
\end{align}
\begin{equation}
\|I_1(\lambda)\|_{L^2(d\lambda),|\lambda|\ge 2\calD}\le 2\sqrt{2}\cdot \frac{C_4+C_5\log|y|}
{|y|}.
\end{equation}

To complete the proof, we have to estimate 
$\omega^{(1)}(x,y,t,\lambda)$  in the interval  $-2\calD\le\lambda\le2\calD$.

For $y<0$ the function  $\chi_{-}(x-\mu y - \lambda^2t,\lambda)$ is holomorphic in $\mu$ in 
the lower half-plane; therefore
\begin{equation}
\chi^{(1)}_{-R}(x-\lambda y-\lambda^2 t,\lambda)= H_{\mu}\left[\chi^{(1)}_{-I}(x-\mu y-\lambda^2 t,\lambda)\right]_{\mu=\lambda},
\end{equation}
and 
\begin{align}
&\omega^{(1)}(x,y,t,\lambda)=\\
&=-\chi^{(1)}_{-R}(x-\lambda y-\lambda^2 t,\lambda)+ H_{\lambda}\left[\chi^{(1)}_{-I}(x-\lambda y-\lambda^2 t,\lambda)\right]=\nonumber\\
&=\frac{1}{\pi}\int_{-4\calD}^{4\calD} 
\frac{ \chi^{(1)}_{-I}(x-\mu y-\mu^2 t,\mu) d\mu}{\lambda-\mu}-
\frac{1}{\pi}\int_{-\infty}^{\infty} 
\frac{ \chi^{(1)}_{-I}(x-\mu y-\lambda^2 t,\lambda) d\mu}{\lambda-\mu}=\nonumber\\
& =\frac{1}{\pi}\int_{-4\calD}^{4\calD} 
\frac{ \chi^{(1)}_{-I}(x-\mu y-\mu^2 t,\mu) -\chi^{(1)}_{-I}(x-\mu y-\lambda^2 t,\lambda)  }{\lambda-\mu} d\mu + I_3,\nonumber
\end{align}
where
\begin{equation}
I_3=- \frac{1}{\pi}\int\limits_{|\mu|>4\calD} 
\frac{ \chi^{(1)}_{-I}(x-\mu y-\lambda^2 t,\lambda) d\mu}{\lambda-\mu}.
\end{equation}
If $|y|>4\calD t$, then 
\begin{equation}
|I_3|\le\frac{1}{\pi}\int\limits_{|\mu|>4\calD} 
\frac{4C}{|y||\mu|^2 }d\mu= \frac{1}{|y|} \frac{2C}{\pi\calD}.
\end{equation}
\begin{equation}
\frac{1}{\pi}\int_{-4\calD}^{4\calD} 
\frac{ \chi^{(1)}_{-I}(x-\mu y-\mu^2 t,\mu) -\chi^{(1)}_{-I}(x-\mu y-\lambda^2 t,\lambda)  }{\lambda-\mu} d\mu =I_4+I_5,
\end{equation}
where
\begin{equation}
I_4=\frac{1}{\pi}\int_{-4\calD}^{4\calD} 
\frac{ \chi^{(1)}_{-I}(x-\mu y-\mu^2 t,\lambda) -\chi^{(1)}_{-I}(x-\mu y-\lambda^2 t,\lambda)}{\lambda-\mu} d\mu, 
\end{equation}
\begin{equation}
I_5=\frac{1}{\pi}\int_{-4\calD}^{4\calD} 
\frac{ \chi^{(1)}_{-I}(x-\mu y-\mu^2 t,\mu) -\chi^{(1)}_{-I}(x-\mu y-\mu^2 t,\lambda)}{\lambda-\mu} d\mu. 
\end{equation}
We see, that
\begin{equation}
I_4=\frac{1}{\pi}\int_{-4\calD}^{4\calD} t(\lambda+\mu)\chi^{(1)}_{-I,\xi}(\hat\xi,\lambda) d\mu, \ \ \mbox{where} \ \ \hat\xi\in[x-\mu y-\lambda^2t, x-\mu y-\mu^2t],
\end{equation}
Denote:
\begin{equation}
I_4=I_{41}+I_{42},
\end{equation}
where
\begin{align}
&I_{41}=\frac{1}{\pi}\int\limits_{|\mu|\le4\calD, |x-\mu y|>64 D^2t} t(\lambda+\mu)\chi^{(1)}_{-I,\xi}(\hat\xi,\lambda) d\mu\\
&I_{42}=\frac{1}{\pi}\int\limits_{|\mu|\le4\calD, |x-\mu y|\le 64 D^2t}t(\lambda+\mu)\chi^{(1)}_{-I,\xi}(\hat\xi,\lambda) d\mu,\nonumber
\end{align}
\begin{equation}
|I_{42}|\le \frac{1024 C \calD^3 t^2}{\pi|y|},
\end{equation}
\begin{equation}
|I_{41}|\le \frac{8\calD t}{\pi}\int\limits_{|\mu|\le4\calD}\frac{C}{1+\frac{|x-\mu y|^2}{4}}d\mu\le
\frac{16C\calD t}{\pi|y|}\int\limits_{-\infty}^{\infty}\frac{1}{1+\mathring\mu^2}d\mathring\mu= \frac{16C\calD t}{|y|}.
\end{equation}
Analogously, 
\begin{equation}
I_5=\frac{1}{\pi}\int_{-4\calD}^{4\calD}  \chi^{(1)}_{-I,\lambda}(x-\mu y-\mu^2 t,\hat\mu) d\mu, \ \ \mbox{where} \ \ |\hat\mu|\le 4\calD, 
\end{equation}
\begin{equation}
I_5=I_{51}+I_{52},
\end{equation}
\begin{align}
&I_{51}=\frac{1}{\pi}\int\limits_{|\mu|\le4\calD, |x-\mu y|>64 D^2t} 
\chi^{(1)}_{-I,\lambda}(x-\mu y -\mu^2 t,\hat\mu) d\mu\\
&I_{52}=\frac{1}{\pi}\int\limits_{|\mu|\le4\calD, |x-\mu y|\le 64 D^2t} 
\chi^{(1)}_{-I,\lambda}(x-\mu y-\mu^2 t,\hat\mu) d\mu,\nonumber
\end{align}
\begin{equation}
|I_{52}|\le \frac{128 C \calD^2 t}{\pi|y|},
\end{equation}
\begin{equation}
|I_{51}|\le \frac{1}{\pi}\int\limits_{|\mu|\le4\calD}\frac{C}{1+\frac{|x-\mu y|}{2}}d\mu\le
\frac{4C}{\pi|y|}\int\limits_{0}^{\frac{5}{2}\calD|y|}\frac{1}{1+\mathring\mu}d\mathring\mu=
\frac{4C}{\pi|y|}\log\left(\frac{5}{2}\calD|y|\right).
\end{equation}

We have shown that there exist positive constants $C_6=C_6(\calD)$, $C_7=C_7(\calD)$ such
that
\begin{equation}
\|\omega^{(1)} \|_{L^2(d\lambda)}\le\frac{ C_6+C_7 \log|y|}{|y|}.
\end{equation}
Analogously, there exists a constant $C_8=C_8(\calD)$ such that 
\begin{equation}
\|I_{32}(x,y,t,\lambda)\|_{L^2(d\lambda)}\le\frac{ C_8}{|y|}
\end{equation}
($\chi_{-,\,\xi}(\xi,\lambda)$,  $\chi_{-,\,\xi\lambda}(\xi,\lambda)$ decay at $|\xi|\rightarrow\infty$ as $1/|\xi|^2$, therefore we have no logarithmic terms).

The proof is completed.

{\bf Acknowledgments}. An essential part of this work was made during the visit of the three authors to the Centro Internacional de Ciencias in Cuernavaca, Mexico in November-December 2012. The first author was also partially supported by the Russian Foundation for Basic Research, grant 13-01-12469 ofi-m2, Russian Federation Government grant No~2010-220-01-077, by the program ``Leading scientific schools'' (grant NSh-4833.2014.1) and by the program  ``Fundamental problems of nonlinear dynamics''. The third author was  partially supported by NSC 101-2115-M-001-002. We would also like to thank L. Pizzocchero for a useful information.


\begin{thebibliography}{9}

\bibitem{ABF}  M. J. Ablowitz, D. Bar Yaacov and A. S. Fokas, ``On the Inverse Scattering Transform for the Kadomtsev-Petviashvili Equation'', Stud. Appl. Math., {\bf 69}, (1983), 135-143.

\bibitem{AC} M. J. Ablowitz and P. A. Clarkson, \textit{Solitons, nonlinear evolution equations and Inverse Scattering}, London Math. Society Lecture Note 1991.

\bibitem{AS} M. J. Ablowitz and H. Segur \textit{Solitons and the Inverse Scattering Transform} SIAM 1981.

\bibitem{Arnold} V. I. Arnold, \textit{Ordinary Differential Equations}, Springer, 1992.

\bibitem{Bogdanov1} L. V. Bogdanov ``On a class of reductions of Manakov-Santini hierarchy connected with the interpolating system'', J. Phys. A: Math. Theor. {\bf 43}, 
(2010), 115206 (11pp).

\bibitem{Bogdanov2} L. V. Bogdanov ``Interpolating differential reductions of multidimensional integrable hierarchies'', Theor. Math. Phys., {\bf 167} (3), (2011), 705-713.

\bibitem{BDM} L. Bogdanov, V. Dryuma and S. V. Manakov: ``Dunajski generalization of the second heavenly equation: dressing method and the hierarchy'', J. Phys. A: Math. Theor. 40, (2007), 14383-14393, doi:10.1088/1751-8113/40/48/005.

\bibitem{BK} L. V. Bogdanov and B. G. Konopelchenko, ``On the $\bar\partial$-dressing method applicable to heavenly equation''; Phys. Lett. A {\bf 345}, (2005) 137-143. 

\bibitem{BF} C. Boyer and J. D. Finley, ``Killing vectors in self-dual, Euclidean Einstein spaces'', J. Math. Phys. {\bf 23} (1982), 1126-1128.

\bibitem{CD} F. Calogero and A. Degasperis {\it Spectral Transform and Solitons} North-Holland Publishing Company, 1982.

\bibitem{DS} A. Davey and K. Stewartson, ``On Three-Dimensional Packets of Surface Waves'', Proc. R. Soc. A, {\bf 338}, (1974), 101-110. 

\bibitem{Duna1} M. Dunajski, \textit{The nonlinear graviton as an integrable system}, PhD Thesis, Oxford University, 1998. 

\bibitem{Duna} M. Dunajski, ``A class of Einstein-Weyl spaces associated to an integrable system of hydrodinamic type'', J. Geom. Phys. {\bf 51}  (2004), 126-137.
  
\bibitem{Dunajj} M. Dunajski, ``Interpolating dispersionless integrable system'', J. Phys. A: Math. Theor., {\bf 41} (2008), 315202 (9pp),  doi:10.1088/1751-8113/41/31/315202

\bibitem{DunaFer} M. Dunajski, E. Ferapontov, B. Kruglikov, ``On the Einstein-Weyl and conformal self-duality equations'',  arXiv:1406.0018 [nlin.SI].

\bibitem{DuMa1} M. Dunajski and L. J. Mason, ``Hyper-K\"ahler hierachies and their twistor theory'', Comm. Math. Phys. {\bf 213}, (2000), 641-672.  

\bibitem{DuMa2} M. Dunajski and L. J. Mason, ``Twistor theory of hyper-K\"ahler metrics with hidden symmetries'', J. Math. Phys., {\bf 44},(2003), 3430-3454. 

\bibitem{DMT} M. Dunajski, L. J. Mason and P. Tod, ``Einstein-Weyl geometry, the dKP equation and twistor theory'', J. Geom. Phys. {\bf 37} (2001), 63-93.

\bibitem{DT} M. Dunajski and K. P. Tod, ``Einstein-Weyl spaces and dispersionless Kadomtsev-Petviashvili equation from Painlev\'e I and II''; arXiv:nlin.SI/0204043. 

\bibitem{EE} W. E, and B. Engquist: ``Blowup of solutions of the unsteady Prandtl's equation'', Communications on Pure and Applied Mathematics, {\bf 50}, Issue 12 (1997), 1287-1293.

\bibitem{Ferapontov} E. V. Ferapontov and K. R. Khusnutdinova: ``On integrability of (2+1)-dimensional quasilinear systems'', Comm. Math. Phys. {\bf 248} (2004) 187-206.

\bibitem{FP} J. D. Finley and J. F. Plebanski: ``The classification of all ${\cal K}$ spaces admitting a Killing vector'', J. Math. Phys. {\bf 20}, (1979), 1938.

\bibitem{FA}   A. S. Fokas and M. J. Ablowitz,  ``On the Inverse Scattering of the Time Dependent Schr\"odinger Equations and the Associated KPI Equation'', 
Stud. in Appl. Math, 69, (1983), 211-228.

\bibitem{Ga66} F. D. Gakhov, \textit{Boundary value problems}, 1966, Translation edited by I. N. Sneddon Pergamon Press, Oxford-New York-Paris; Addison-Wesley Publishing Co., Inc., Reading, Mass. London. 

\bibitem{GGKM} C. S. Gardner, J. M. Greene, M. D. Kruskal and R. M. Miura, ``Method for Solving the Korteweg-deVries Equation'', Phys. Rev. Lett., 
{\bf 19}, (1967), 1095-1097.

\bibitem{GD} J. D. Gegenberg and A. Das, ``Stationary Riemaniann space-times with self-dual curvature'', 
Gen. Rel. Grav. {\bf 16} (1984), 817-829.

\bibitem{GK} I. Ts. Gokhberg, N. Ya. Krupnik, ``Norm of the Hilbert transformation in the $L_p$ space'',  Funct. Anal. Appl., \textbf{2}, Issue 2 (1968), 
180-181. 

\bibitem{GS} P. G. Grinevich and P. M. Santini: ``Holomorphic eigenfunctions of the vector field associated with the dispersionless Kadomtsev - Petviashvili equation'', arXiv:1111.4446. J. Differential Equations, {\bf 255}, 7, 1469-1491 (2013). http://dx.doi.org/10.1016/j.jde.2013.05.010 

\bibitem{G-M-MA} F. Guil, M. Manas and L. Martinez Alonso, ``On twistor solutions of the dKP equation'', J. Phys. A:Math. Gen. {\bf 36} (2003) 6457-6472.

\bibitem{J} P. E. Jones and K. P. Tod, ``Minitwistor spaces and Einstein-Weyl spaces'', Class. Quantum Grav. 
{\bf 2} (1985), 565-577.

\bibitem{H} N. J. Hitchin, ``Complex manifolds and Einstein's equations'', in {\it Twistor Geometry and 
Nonlinear Systems}, H. D. Doebner and T. Weber (eds), Lecture Notes in Mathematics, vol. 970 (Springer-Verlag 
1982).

\bibitem{KP} B. B. Kadomtsev and V. I. Petviashvili,  ``On the stability of solitary waves in weakly dispersive media'', Sov. Phys. Dokl., {\bf 15}, (1970), 539-541. 

\bibitem{KG} Y. Kodama and J. Gibbons, ``Integrability of the dispersionless KP hierarchy'', 
Proc. 4th Workshop on Nonlinear and Turbulent Processes in Physics, World Scientific, Singapore 1990.

\bibitem{KM1} B. Konopelchenko and F. Magri, ``Coisotropic deformations of associative algebras and dispersionless integrable hierarchies'', Comm. Math. Phys. {\bf 274}, (2007), 627-658. 

\bibitem{KM2} B. Konopelchenko and F. Magri, ``Dispersionless integrable equations as coisotropic deformations. Extensions and reductions'', Theoretical and Mathematical Physics, {\bf 151}(3), (2007), 803-819.
  
\bibitem{K-MA-R} B. Konopelchenko, L. Martinez Alonso and O. Ragnisco, ``The $\bar\partial$-approach for the dispersionless KP hierarchy'', J.Phys. A: Math. Gen. {\bf 34} 
(2001) 10209-10217.

\bibitem{KdV} D. J. Korteweg and G. de Vries, ``On the change of form of long waves advancing in a rectangular canal, and on a new type of long stationary waves'', Phil. Mag. \textbf{39} (1895), 422--443.

\bibitem{Kri0} I. M. Krichever, ``Method of averaging for two-dimensional ``integrable'' equations'', Functional Analysis and Its Applications, {\bf 22} (1988), 200-213.

\bibitem{Kri1} I. M. Krichever, ``The dispersionless Lax equations and topological minimal models'', Comm. Math. Phys. {\bf 143} (1992), no. 2, 415-429. 

\bibitem{Kri2} I. M. Krichever, ``The $\tau$-function of the universal Witham hierarchy, matrix models and topological field theories'', Comm. Pure Appl. Math. {\bf 47}, 437-475 (1994). 

\bibitem{KMZ} I. Krichever, A. Marshakov and A. Zabrodin, ``Integrable structure of the Dirichlet boundary problem in multiply-connected domains'',  
Comm. Math. Phys.  {\bf 259}  (2005),  no. 1, 1-44.

\bibitem{Kpr1} Private communication by V.E. Kuznetsov, December 2013. 

\bibitem{LBW} S.-Y. Lee, E. Bettelheim, P. Wiegmann, ``Bubble break-off in Hele-Shaw flows-singularities 
and integrable structures''  Phys. D,  {\bf 219}  (2006),  no. 1, 22-34.  

\bibitem{Lin} C. C. Lin, E. Reissner,  and H.S. Tsien,  ``On two-dimensional non-steady motion of a slender
body in a compressible fluid''. Journal of Mathematical Physics, 27, (1948). 220-231.

\bibitem{Manakov1} S. V. Manakov, ``The inverse scattering transform for the time - dependent Schr\"odinger operator and Kadomtsev-Petviashvili equation'', Physica {\bf 3D}, 420-427 (1981).

\bibitem{MS0} S. V. Manakov and P. M. Santini 2005 Inverse Scattering Problem for Vector Fields and the Heavenly Equation {\it Preprint arXiv:nlin/0512043}.

\bibitem{MS1} V. Manakov and P. M. Santini: ``Inverse scattering problem for vector fields and the Cauchy problem for the heavenly equation'', Physics Letters A {\bf 359} (2006) 613-619. http://arXiv:nlin.SI/0604017.

\bibitem{MS2} S. V. Manakov and P. M. Santini: ``The Cauchy problem on the plane for the dispersionless Kadomtsev-Petviashvili equation''; JETP Letters, {\bf 83}, No 10, 462-466 (2006). http://arXiv:nlin.SI/0604016.

\bibitem{MS3} S. V. Manakov and P. M. Santini: ``A hierarchy of integrable PDEs in $2+1$ dimensions associated with $1$ - dimensional vector fields''; Theor. Math. Phys. {\bf 152}(1), 1004-1011 (2007).

\bibitem{MS4} S. V. Manakov and P. M. Santini: ``On the solutions of the dKP equation: the nonlinear Riemann-Hilbert problem, longtime behaviour, implicit solutions and wave breaking''; J. Phys. A: Math. Theor. {\bf 41} (2008) 055204 (23pp).

\bibitem{MS5} S. V. Manakov and P. M. Santini: ``On the solutions of the second heavenly and Pavlov equations'', J. Phys. A: Math. Theor. {\bf 42} (2009) 404013 (11pp). doi: 10.1088/1751-8113/42/40/404013. arXiv:0812.3323.

\bibitem{MS6} S. V. Manakov and P. M. Santini: ``The dispersionless 2D Toda equation: dressing, Cauchy problem, longtime behaviour, implicit solutions and wave breaking'', J. Phys. A: Math. Theor. {\bf 42} (2009) 095203 (16pp).

\bibitem{MS7} S. V. Manakov and P. M. Santini: ``Solvable vector nonlinear Riemann problems, exact implicit solutions of dispersionless PDEs and wave breaking'', J. Phys. A: Math. Theor. 44 (2011) 345203 (19pp), doi:10.1088/1751-8113/44/34/345203. arXiv:1011.2619. 

\bibitem{MS8} S. V. Manakov and P. M. Santini: ``On the dispersionless Kadomtsev-Petviashvili equation in n+1 dimensions: exact solutions, the Cauchy problem for small initial data and wave breaking'', J. Phys. A: Math. Theor. 44 (2011) 405203 (15pp), doi:10.1088/1751-8113/44/40/405203. arXiv:1001.2134.

\bibitem{MS9} S. V. Manakov and P. M. Santini: ``Wave breaking in solutions of the dispersionless Kadomtsev-Petviashvili equation at finite time'', Theor. Math. Phys. {\bf 172}(2), (2012), 1118-1126.

\bibitem{MS10} S. V. Manakov and P. M. Santini: ``Integrable dispersionless PDEs arising as commutation condition of pairs of vector fields'' Proceedings of the conference PMNP 2013, IOP Conference Series. arXiv:1312.2740.

\bibitem{MZ} S. V. Manakov and V. E. Zakharov, ``Three-dimensional model of relativistic-invariant field theory, integrable by the inverse scattering transform''; Letters in Mathematical Physics {\bf 5}, (1981) 247-253.

\bibitem{MAM} L. Martinez Alonso and E. Medina: ``Regularisation of Hele-Shaw flows, multiscaling expansions and 
the Painlev\'e I equation'', arXiv:0710.3731.

\bibitem{MA-S} L. Martinez Alonso and A. B. Shabat, ``Towards a theory of differential constraints of a hydrodynamic hierarchy'' J. Nonlinear Math. Phys 10(2) (2003) 229. 

\bibitem{MWZ} M. Mineev-Weinstein, P. Wigmann and A. Zabrodin, ``Integrable Structure of Interface Dynamics'', Phys. Rev. Lett. {\bf 84}, (2000), 5106.

\bibitem{NNS} F. Neyzi, Y. Nutku and M. B. Sheftel, ``A multi-hamiltonian structure of the Plebanski's second heavenly equation''; J. Phys. A: Math. Gen. {\bf 38} (2005), 
8473-8485.

\bibitem{Pavlov} M. V. Pavlov: ``Integrable hydrodynamic chains'', J. Math. Phys. {\bf 44} (2003) 4134-4156.

\bibitem{Pic} S. Pichorides, ``On the best values of the constants in the theorem of M. Riesz, Zygmund and Kolmogorov'',  Studia Mathematica \textbf{44}, Issue: 2 (1972), 165-179.

\bibitem{Plebanski} J. F. Plebanski, ``Some solutions of complex Einstein equations'', J. Math. Phys. {\bf 16}, (1975), 2395-2402.

\bibitem{S70} E. M. Stein, \textit{Singular integrals and differentiability properties of functions}, 1970, no. 30, Princeton University Press, Princeton, N.J..

\bibitem{Taka1} K. Takasaki, ``Area preserving diffeomorphisms and nonlinear integrable systems'', 1991, in ``Turku 1991, Proceedings, Topological and geometrical methods in field theory 383''.

\bibitem{TT1} K. Takasaki and T. Takebe, ``SDiff(2) Toda equation -- hierarchy, tau function and symmetries'', Lett. Math. Phys. {\bf 23}, (1991), 205-214.

\bibitem{TT2} K. Takasaki K and T. Takebe, ``SDiff(2) KP hierarchy'', A. Tsuchiya, T. Eguchi and T. Miwa (eds.), {\it Infinite Analysis\/}, Adv. Ser. Math. Phys. 16 (World Scientific, Singapore, 1992), part B, 889-922.

\bibitem{TT3} K. Takasaki and T. Takebe, ``Integrable hierarchies and dispersionless limit'', Rev. Math. Phys. {\bf 7}, (1995), 743.

\bibitem{Timman} R. Timman, ``Unsteady motion in transonic flow'', Symposium Transsonicum, Aachen 1962. Ed. K. Oswatitsch,  Springer 394-401.

\bibitem{V62} I. N. Vekua, \textit{Generalized analytic functions}, 1962, Pergamon Press, London-Paris-Frankfurt; Addison-Wesley Publishing Co., Inc., Reading, Mass..


\bibitem{Ward} R. S. Ward, ''Einstein-Weyl spaces and SU($\infty$) Toda fields'', Class. Quantum Grav. 
{\bf 7} (1990) L95-L98.

\bibitem{WZ} P. Wigmann and A. Zabrodin, ``Conformal Maps and Integrable Hierarchies'', Comm. Math. Phys. {\bf 213}, (2000), 523-528.

\bibitem{Zak} V. E. Zakharov: ``Integrable systems in multidimensional spaces'', Lecture Notes in Physics, Springer-Verlag, Berlin {\bf 153} (1982), 190-216.

\bibitem{Zakharov} E. Zakharov, ``Dispersionless limit of integrable systems in 2+1 dimensions'', in {\it Singular Limits of Dispersive Waves}, edited by N.M.Ercolani et al., Plenum Press, New York, 1994.

\bibitem{ZKpr1} Private communication by V.E. Zakharov and V.E. Kuznetsov. 

\bibitem{ZMNP} V. E. Zakharov, S. V. Manakov, S. P. Novikov and L. P. Pitaevsky, \textit{Theory of solitons}, 1984, Plenum Press, New York. 

\bibitem{ZS} V. E. Zakharov and A. B. Shabat, ``Integration of nonlinear equations of mathematical physics by the method of inverse scattering. II'', 
Functional Anal. Appl. {\bf 13}, (1979), 166-174.

\bibitem{ZSNLS} V. E. Zakharov and A. B. Shabat, ``Interaction between solitons in a stable medium'', Sov Phys {\it JETP} {\bf 37}, (1973) 823-828. 

\bibitem{ZK} E. A. Zobolotskaya and R. V. Kokhlov, ``Quasi - plane waves in the nonlinear acoustics of confined beams'',  Sov. Phys. Acoust. {\bf 15}, n. 1, (1969) 35-40. 

\end{thebibliography}
\end{document}